\titlespacing*{\subsection}{0pt}{*1}{*0.5}
\begin{document}
\title{RapidStore: An Efficient Dynamic Graph Storage System for Concurrent Queries}

\author{Chiyu Hao$^1$, Jixian Su$^1$, Shixuan Sun$^1$, Hao Zhang$^2$, Sen Gao$^1$, Jianwen Zhao$^2$, Chenyi Zhang$^2$, Jieru Zhao$^1$, Chen Chen$^1$, Minyi Guo$^1$}
\affiliation{%
  \institution{$^1$Shanghai Jiao Tong University, China}
  \institution{$^2$Huawei, China}
  \country{}
}
\email{%
  {hcahoi11, sjx13623816973, sunshixuan, zhao-jieru, chen-chen, guo-my}@sjtu.edu.cn, {zhanghao687, zhaojianwen3, zhangchenyi2}@huawei.com, Gawssin@gmail.com
}

% \author{Chiyu Hao}
% \affiliation{
%   \institution{Shanghai Jiao Tong University}
%   \city{Shanghai}
%   \country{China}
% }
% \email{hcahoi11@sjtu.edu.cn}

% \author{Jixian Su}  
% \affiliation{
%   \institution{Shanghai Jiao Tong University}
%   \city{Shanghai}
%   \country{China}
% }
% \email{sjx13623816973@sjtu.edu.cn}

% \author{Shixuan Sun}
% \affiliation{
%   \institution{Shanghai Jiao Tong University}
%   \city{Shanghai}
%   \country{China}
% }
% \email{sunshixuan@sjtu.edu.cn}

% \author{Hao Zhang}
% \affiliation{
%   \institution{Huawei}
%   \city{Shanghai}
%   \country{China}
% }
% \email{zhanghao687@huawei.com}

% \author{Sen Gao}
% \affiliation{
%   \institution{National University of Singapore}
%   \country{Singapore}
% }
% \email{sen@u.nus.edu}

% \author{Chenyi Zhang}
% \affiliation{
%   \institution{Huawei}
%   \city{Shanghai}
%   \country{China}
% }
% \email{zhangchenyi2@huawei.com}

% \author{Minyi Guo}
% \affiliation{
%   \institution{Shanghai Jiao Tong University}
%   \city{Shanghai}
%   \country{China}
% }
% \email{guo-my@cs.sjtu.edu.cn}

\begin{abstract}

Dynamic graph storage systems are essential for real-time applications such as social networks and recommendation, where graph data continuously evolves. However, they face significant challenges in efficiently handling concurrent read and write operations. We find that existing methods suffer from write queries interfering with read efficiency, substantial time and space overhead due to per-edge versioning, and an inability to balance performance, such as slow searches under concurrent workloads. To address these issues, we propose RapidStore, a holistic approach for efficient in-memory dynamic graph storage designed for read-intensive workloads. Our key idea is to exploit the characteristics of graph queries through a decoupled system design that separates the management of read and write queries and decouples version data from graph data. Particularly, we design an efficient dynamic graph store to cooperate with the graph concurrency control mechanism. Experimental results demonstrate that RapidStore enables fast and scalable concurrent graph queries, effectively balancing the performance of inserts, searches, and scans, and significantly improving efficiency in dynamic graph storage systems.

\end{abstract}

\maketitle

%%% do not modify the following VLDB block %%
%%% VLDB block start %%%
% \pagestyle{\vldbpagestyle}
% \begingroup\small\noindent\raggedright\textbf{PVLDB Reference Format:}\\
% \vldbauthors. \vldbtitle. PVLDB, \vldbvolume(\vldbissue): \vldbpages, \vldbyear.\\
% \href{https://doi.org/\vldbdoi}{doi:\vldbdoi}
% \endgroup
% \begingroup
% \renewcommand\thefootnote{}\footnote{\noindent
% This work is licensed under the Creative Commons BY-NC-ND 4.0 International License. Visit \url{https://creativecommons.org/licenses/by-nc-nd/4.0/} to view a copy of this license. For any use beyond those covered by this license, obtain permission by emailing \href{mailto:info@vldb.org}{info@vldb.org}. Copyright is held by the owner/author(s). Publication rights licensed to the VLDB Endowment. \\
% \raggedright Proceedings of the VLDB Endowment, Vol. \vldbvolume, No. \vldbissue\ %
% ISSN 2150-8097. \\
% \href{https://doi.org/\vldbdoi}{doi:\vldbdoi} \\
% }\addtocounter{footnote}{-1}\endgroup
%%% VLDB block end %%%

%%% do not modify the following VLDB block %%
%%% VLDB block start %%%
% \ifdefempty{\vldbavailabilityurl}{}{
% \vspace{.3cm}
% \begingroup\small\noindent\raggedright\textbf{PVLDB Artifact Availability:}\\
% The source code, data, and/or other artifacts have been made available at \url{\vldbavailabilityurl}.
% \endgroup
% }
%%% VLDB block end %%%

% \newcommand{\sun}[1]{{\textcolor{red}{#1}}}
% \newcommand{\new}[1]{{\textcolor{red}{#1}}}
\newcommand{\sun}[1]{{\textcolor{black}{#1}}}
\newcommand{\new}[1]{{\textcolor{black}{#1}}}
\newcommand{\todo}[1]{{\textcolor{red}{TODO #1}}}

\section{Introduction}\label{sec:introduction}

As graph data frequently evolves, in-memory dynamic graph storage systems play a crucial role in applications such as social networks~\cite{gupta2014real, song2019session, sharma2016graphjet}, transaction networks~\cite{qiu2018real}, and transportation networks~\cite{rathore2015efficient}. These systems aim to efficiently support concurrent read and write queries on graphs, enabling real-time online graph data processing~\cite{sahu2017ubiquity}. Given the importance of dynamic graph storage, several systems have been proposed recently such as Sortledton~\cite{fuchs2022sortledton}, Teseo~\cite{de2021teseo}, and LiveGraph~\cite{zhu2019livegraph} to support transactional graph queries that ensure serializability of concurrent operations. Specifically, since the \emph{Compressed Sparse Row} (CSR) graph format cannot efficiently handle updates, these works focus on optimizing graph storage formats to facilitate both read and write operations.

To ensure isolation among concurrent queries, these methods adopt concurrency control mechanisms designed for relational database management systems (RDBMS) in the graph context~\cite{larson2011high, neumann2015fast}. Specifically, they use \emph{Multi-Version Concurrency Control} (MVCC) to maximize parallelism. As edges are the basic units in graphs, these systems employ a \emph{per-edge versioning} strategy, maintaining multiple versions for each edge so that different queries can simultaneously access the same edge. Furthermore, since graph queries typically follow a "vertex-neighbor" access pattern---first locating a vertex $u$ and then operating on its neighbor set $N(u)$---these systems acquire locks on vertices rather than edges to coordinate access.

Despite these advancements, we observe significant performance issues due to their concurrency control methods. First, both read and write operations must acquire a lock on a vertex before accessing it, leading to lock contention and performance degradation for read queries. Second, graph queries are generally read-intensive and heavily rely on scan operations that traverse neighbor sets, such as in PageRank and breadth-first search. Consequently, the overhead of version checks on each edge access is substantial, and maintaining versions for each edge results in significant memory consumption. Third, while existing systems focus on optimizing scan performance for traversal-based queries, they overlook search efficiency, failing to support complex queries like pattern matching. These challenges highlight the need for improved concurrency control, version management, and graph data structure design.

To address these challenges, we propose \textbf{RapidStore}. This holistic solution, which is designed for read-intensive applications, combines a novel graph concurrency control mechanism with an optimized graph data store. Specifically, we propose a \emph{subgraph-centric graph concurrency control} mechanism, which maintains versions at the subgraph level. This coarse-grained version management separates version information from the graph data, eliminating the cost of version checks. We coordinate write queries using the MV2PL protocol, while read queries operate on graph snapshots without any locks, mitigating the impact of writes on read efficiency. To enhance the concurrency control mechanism and enable fast search operations, we design the \emph{Compressed Adaptive Radix Tree} (C-ART) to store graphs. Extensive experiments demonstrate that RapidStore achieves up to 3.46x speedup over the latest systems on graph analytic queries while saving 56.34\% of memory. RapidStore also exhibits high concurrency: with 4 writers and 28 readers running concurrently, its read query completion time increases by at most 13.36\%, compared to up to 41.04\% for other systems. In summary, this paper makes the following contributions:

% insertion throughput reduces by 5.06\% while the counterpart reduces by 13.29\%, and the

\begin{itemize}[noitemsep,topsep=0pt,leftmargin=*]
    \item Introduction of a novel graph concurrency control mechanism that minimizes read-write interference.
    \item Development of an optimized graph data store that balances performance across various operations.
    \item Implementation of a decoupled system design, enhancing overall system responsiveness.
\end{itemize}

Compared to existing approaches, RapidStore aims to achieve a high concurrent performance while achieving high read performance and comparable write performance, thus fulfilling the unmet needs of current dynamic graph storage systems. 
\section{Preliminary} \label{sec:background}

We focus on directed graphs, denoted as \(G = (V, E)\), where \(V\) represents the set of vertices and \(E \subseteq V \times V\) denotes the set of directed edges. An edge \(e(u, v) \in E\) indicates a directed connection from \(u\) to \(v\). For any vertex \(u \in V\), we define \(N(u)\) as the neighborhood of \(u\), which includes all vertices \(v\) such that \(e(u, v) \in E\). The out-degree of vertex \(u\), denoted as \(d(u)\), is given by \(|N(u)|\). In contrast, an undirected graph can be represented by storing edges in both directions, i.e., \(e(u, v)\) and \(e(v, u)\). Each vertex is assigned a unique integer ID within the range \([0, |V|)\). Previous studies~\cite{aberger2017emptyheaded, han2018speeding, yang2024hero, wang2017experimental, gonzalez2012powergraph} have shown that using integer vertex IDs within this range offers significant computational and storage advantages due to their compact representation. Therefore, existing works~\cite{zhu2019livegraph, macko2015llama, de2021teseo, fuchs2022sortledton, dhulipala2019low} either require vertex IDs to fall within \([0, |V|)\) or utilize dictionary encoding techniques to map external vertex IDs to this range. In this paper, we assume that each vertex ID \(u \in V\) is an integer in \([0, |V|)\). An ID is an 4-bytes integer in the implementation.

A \emph{dynamic graph} \(G = (G_0, \Delta \mathcal{G})\) represents the evolution of a graph over time, where \(G_0\) is the initial graph and \(\Delta \mathcal{G}\) is a sequence of updates. Each update \(\Delta G_i = (\oplus, e)\) modifies the graph by either inserting or deleting \(e(u, v)\), with \(\oplus = +/-\) indicating the operation. By applying updates from \(\Delta G_0\) to \(\Delta G_i\), we obtain the graph \(G_i\). Dynamic graph storage systems are designed to efficiently support both read and write operations: \textit{Search(u, v)}, which finds vertex \(v\) in \(N(u)\), \textit{Scan(u)}, which traverses \(N(u)\), and \textit{Insert(u, v)} and \textit{Delete(u, v)}, which add or remove a neighbor from \(N(u)\) (i.e., adding or removing an edge \(e(u, v)\)). \sun{Since graph applications are typically read-intensive, this paper focuses on scenarios with small updates and heavy read operations, consistent with prior works~\cite{fuchs2022sortledton,zhu2019livegraph,de2021teseo}. Nevertheless, RapidStore also efficiently supports batch updates involving tens of thousands of edges.}

\vspace{2pt}
\noindent\textbf{Adaptive Radix Tree.} The \emph{adaptive radix tree} (ART)~\cite{leis2013adaptive} is a high-performance data structure optimized for fast and memory-efficient key storage and retrieval. It builds on the radix tree by indexing keys through byte sequences. In a typical radix tree, each node may hold up to 256 child pointers (one for each byte value), which enables rapid lookups but can waste memory when many pointers are unused. ART overcomes this inefficiency by dynamically adapting the size of its nodes to the number of children. It defines four node types: \textit{N4}, \textit{N16}, \textit{N48}, and \textit{N256}, supporting up to 4, 16, 48, and 256 child pointers, respectively. Nodes automatically grow or shrink as keys are inserted or deleted, optimizing memory usage. Furthermore, ART employs \emph{path compression} to reduce tree depth by compressing common prefixes shared by multiple keys into a single edge or node. \emph{Lazy expansion} further collapses nodes by removing the path to single leaf. These two techniques reduce the number of nodes, speeding up searches. For $n$ elements in ART, each with a byte length of $w$, the time complexity for \emph{Search}, \emph{Insert}, and \emph{Delete} operations is $O(w)$, while \emph{Scan} takes $O(n)$. The space complexity is $O(n)$.

\setlength{\textfloatsep}{0pt}
\begin{figure}[t]\small
    \setlength{\abovecaptionskip}{3pt}
    \setlength{\belowcaptionskip}{0pt}
    \includegraphics[scale=0.65]{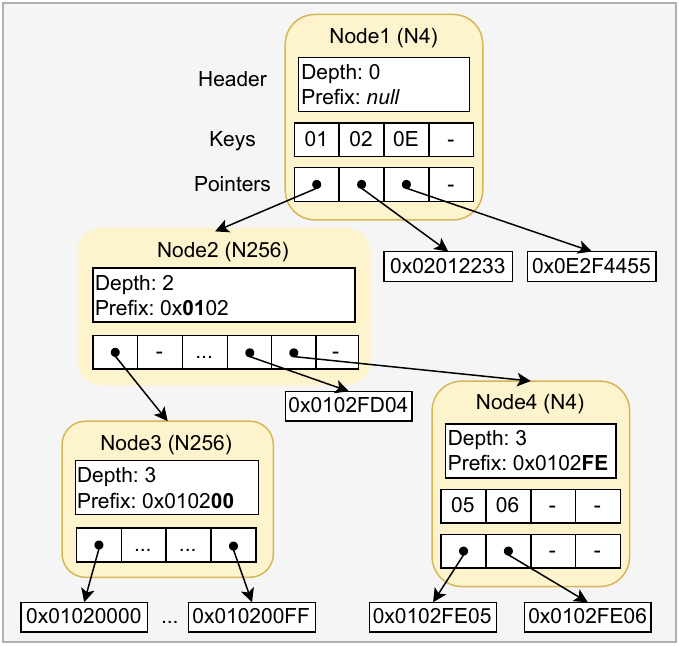}
    \centering
    \caption{An example of ART.}
    \label{fig:art}
\end{figure}

\begin{example}

Figure \ref{fig:art} shows an ART storing a set of vertices represented by 4-byte integers (8 hexadecimal digits). Each leaf node holds a vertex. \textit{N4} nodes store up to four key-pointer pairs, while \textit{N256} nodes store pointers only, with the array index implicitly indicating the key. \emph{Depth} $D$ tracks which byte in the sequence is used as the key for the node, while \emph{Prefix} records the common prefix from the root to that node. The byte sequence $M$ of a vertex is indexed from 0. To search vertex $M = 0x010200FF$, we first examine $M[0] = 01$ in \emph{Node1} since \emph{Node1}'s depth is 0, and follow the pointer to \emph{Node2}. Due to path compression, \emph{Node2}'s depth is 2, so we skip to $M[2] = 00$ and proceed to \emph{Node3}, which has a depth of 3. We match $M[3] = FF$ in \emph{Node3} and retrieve the target vertex. \emph{Insert} and \emph{Delete} operations follow similar steps to add or remove elements. \emph{Scan} performs a depth-first search across all values.

\end{example}

\vspace{2pt}
\noindent\textbf{Graph Concurrency Control.} We use Sortledton~\cite{fuchs2022sortledton} as a representative to describe existing graph concurrency control mechanisms. Sortledton employs MVCC to manage concurrent operations, maintaining multiple version logs for each edge in a linked list, called the version chain, ordered from newest to oldest. Each log entry records the operation type and its timestamp. To ensure data consistency during concurrent reads and writes, Sortledton adapt MV2PL: Writers acquire locks on all vertices they will modify at the beginning of the transaction and release them after completing the updates, while readers lock the accessed vertices and release them immediately after use. This approach leverages the characteristic that the set of vertices affected by a write query is known at the start of the transaction. 

\section{Motivation} \label{sec:motivation}

% Challenge 1: In existing graph storage systems, the concurrent execution of read and write queries often leads to substantial interference, where write operations significantly degrade the performance of read queries. This raises a critical challenge: How can we eliminate or mitigate the impact of writers on readers to enhance performance and scalability in graph storage?

% Challenge 2: 	In current graph storage systems, per-edge versioning leads to significant time and space overheads, severely degrading the performance of graph analytic queries and increasing memory requirements. This raises a critical challenge: How can we eliminate or reduce the inefficiencies caused by edge versioning to enhance both the speed and scalability of graph analytics?

% Challenge 3: Existing graph data storage systems often struggle to support fast concurrent read and write operations, failing to balance the performance of inserts, searches, and scans under concurrent workloads. This raises a critical challenge: How can we design an efficient graph data store that enables high-performance concurrent reads and updates while maintaining optimal balance across all operations?
%

Although several dynamic graph storage methods have been proposed recently, we observe that they have severe performance issues, leading to unique challenges of efficiently processing concurrent read and write queries. In the following, we demonstrate these issues using Sortledton~\cite{fuchs2022sortledton}, the state-of-the-art method. For brevity, we present the experiment results on \emph{livejournal} (\emph{lj}) and \emph{graph500} (\emph{g5}). The server has a CPU equipped with 32 physical cores. For detailed experimental settings, please refer to Section \ref{sec:experiments}.

\vspace{2pt}
\noindent\textbf{Issue 1: Interference between Concurrent Reads and Writes.} In existing approaches, both read and write operations must acquire a lock on a vertex before accessing its neighbor set or properties to ensure serializability among queries. This locking strategy leads to contention when multiple queries attempt to access the same vertex concurrently, causing performance degradation for both reads and writes. This issue is particularly severe for high-degree vertices, accessed more frequently due to their numerous connections.

\setlength{\textfloatsep}{0pt}
\begin{figure}[t]
	\setlength{\abovecaptionskip}{0pt}
	\setlength{\belowcaptionskip}{-10pt}
		\captionsetup[subfigure]{aboveskip=0pt,belowskip=0pt}
	\centering
	\begin{subfigure}[t]{0.23\textwidth}
		\centering
		\includegraphics[width=\textwidth]{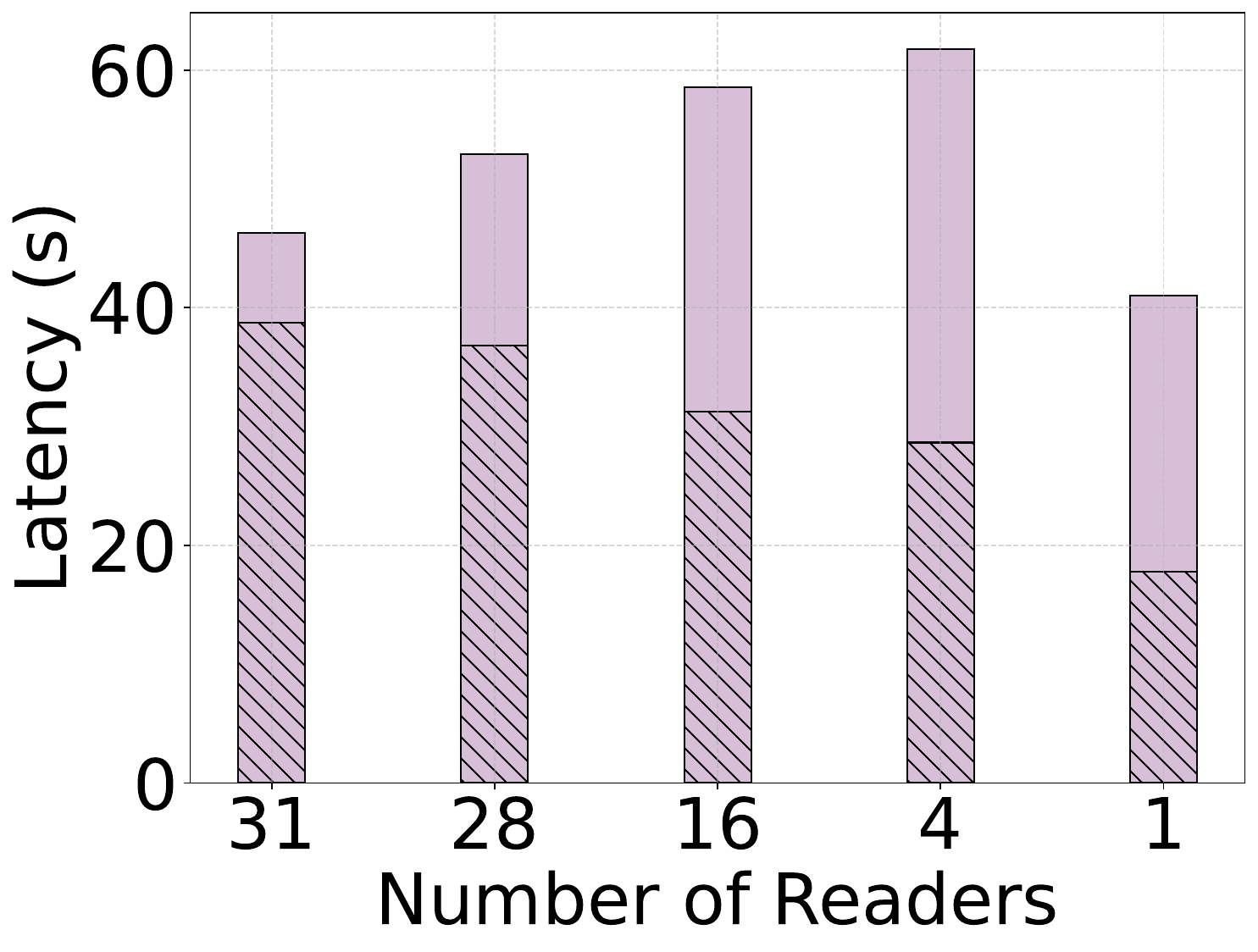}
		\caption{PageRank on \emph{lj}.}
	\end{subfigure}
        \begin{subfigure}[t]{0.23\textwidth}
		\centering
		\includegraphics[width=\textwidth]{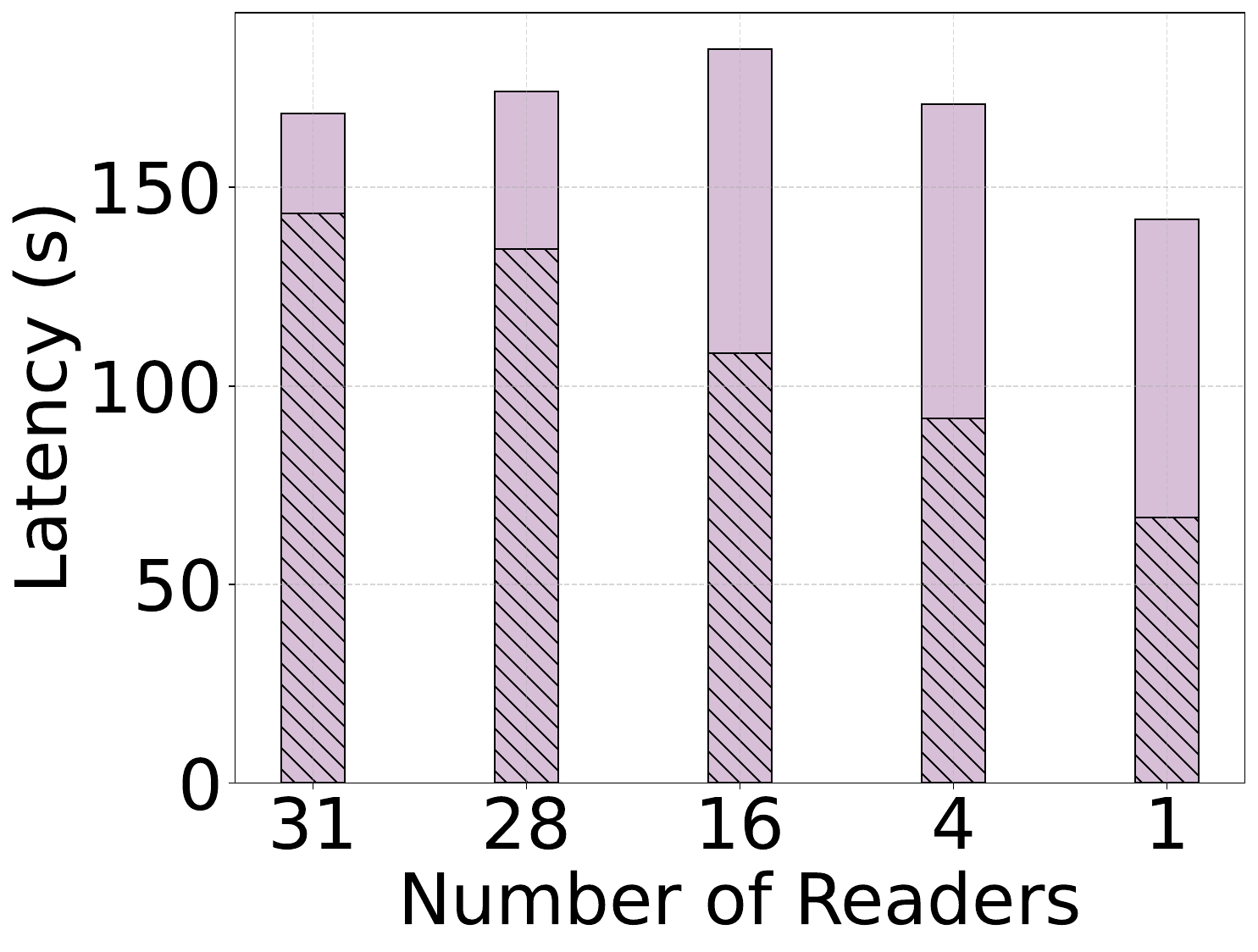}
		\caption{PageRank on \emph{g5}.}
	\end{subfigure}
	\caption{Performance under varying numbers of readers and writers (total threads fixed at 32). The shadowed bars represent the latency of readers in the absence of writers.}
    \label{fig:concurrent_read_sortledton}
\end{figure}

\setlength{\textfloatsep}{3pt}
\begin{figure}[t]
	\setlength{\abovecaptionskip}{0pt}
	\setlength{\belowcaptionskip}{0pt}
		\captionsetup[subfigure]{aboveskip=0pt,belowskip=0pt}
	\centering
	\begin{subfigure}[t]{0.23\textwidth}
		\centering
		\includegraphics[width=\textwidth]{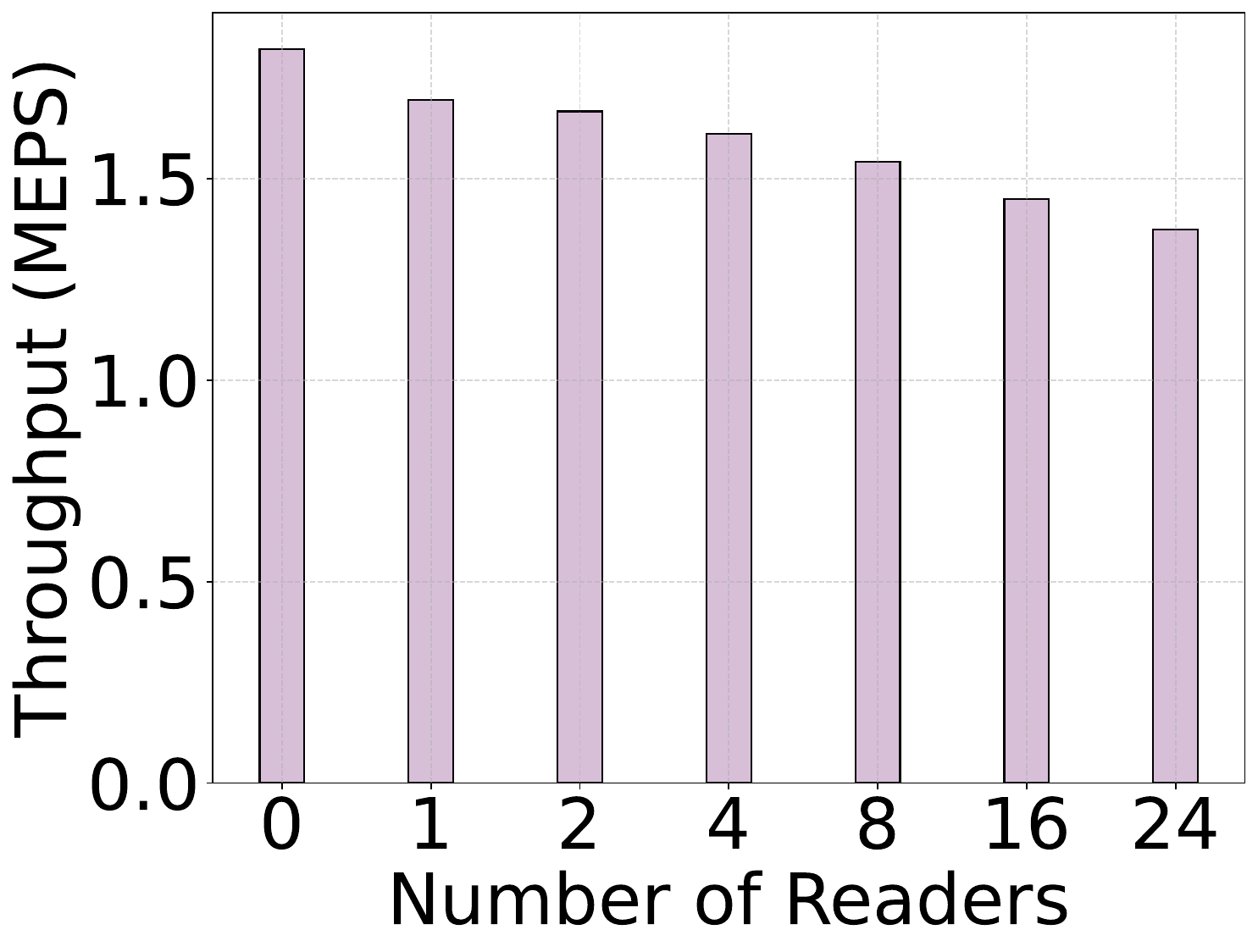}
		\caption{Insertion on \emph{lj}.}
	\end{subfigure}
        \begin{subfigure}[t]{0.23\textwidth}
		\centering
		\includegraphics[width=\textwidth]{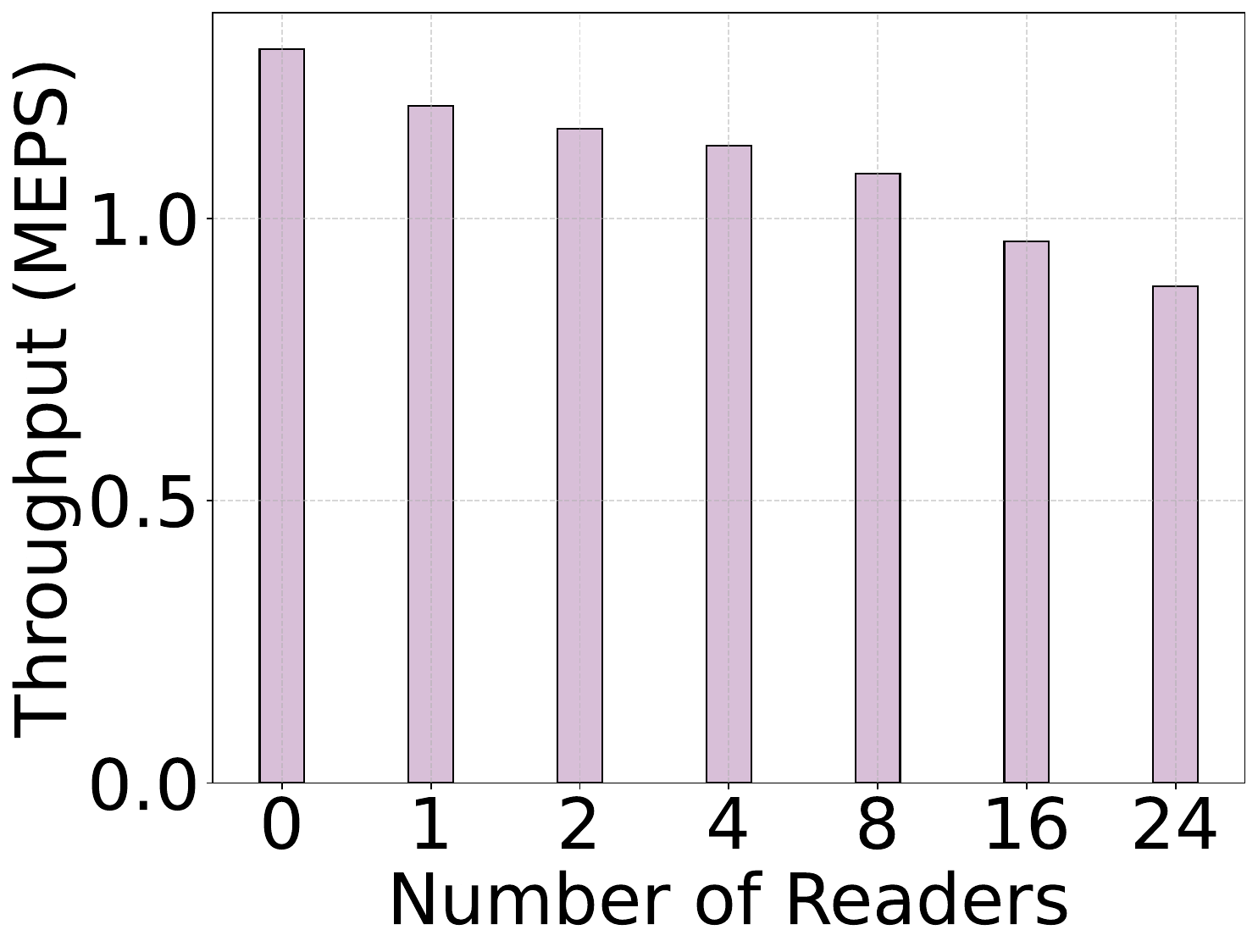}
		\caption{Insertion on \emph{g5}.}
	\end{subfigure}
	\caption{Insertion throughput as the number of readers varies, with the number of writers fixed at 8.}
    \label{fig:concurrent_write_sortledton}
\end{figure}

To illustrate this issue, Figure \ref{fig:concurrent_read_sortledton} shows the latency for 10 iterations of PageRank. Each reader independently executes a PageRank query. When no writers are active, the average latency of readers decreases as the number of readers reduces. This reduction occurs because fewer readers alleviate memory bandwidth pressure during the frequent graph scans in PageRank. Conversely, when writers are performing updates, read performance significantly deteriorates due to lock contention between read and write operations. Figure \ref{fig:concurrent_write_sortledton} examines insertion throughput as the number of readers increases while keeping the number of writers fixed at 8. The results demonstrate a drop in insertion throughput as more readers are added. This decline stems from increased lock contention, which impairs the writers’ ability to execute updates efficiently. These findings highlight a significant challenge:

\emph{\textbf{Challenge 1:}} How can we mitigate the interference between concurrent read and write queries to enhance performance and scalability in graph storage systems?

\vspace{2pt}
\noindent\textbf{Issue 2: Time and Space Overheads Due to Per-Edge Versioning.} To support concurrent read and write queries, existing methods maintain versions for each edge. This strategy requires queries to perform a version check on each edge access to ensure they retrieve the correct data version. Additionally, storing version chains for each edge increases memory requirements.

\begin{table}[h]
    \centering
     \small
     \setlength{\abovecaptionskip}{0pt}
     \setlength{\belowcaptionskip}{0pt}
    \caption{Performance of \emph{Search}, \emph{Scan} and PageRank (PR) without and with versioning.}
    \label{tab:version_cost}
   
\begin{tabular}{|c|c|c|c|c|}
\hline
 \textbf{Dataset} & \textbf{Versioned?} & \textbf{Search (TEPS)} & \textbf{Scan (TEPS)} & \textbf{PR (s)} \\ \hline
\multirow{2}{*}{\emph{lj}} & No & \textbf{2270.85} & \textbf{40371.23} & \textbf{16.23} \\
 & Yes & 1898.17 & 19033.40 & 31.16 \\ \hline
\multirow{2}{*}{\emph{g5}} & No & \textbf{672.89} & \textbf{89055.10} & \textbf{44.15} \\
 & Yes & 635.12 & 38885.92 & 94.26 \\ \hline
\end{tabular}
\end{table}

Table \ref{tab:version_cost} compares the performance of Sortedlton with and without versioning. Scan iterates over the neighbor set of a vertex, while Search identifies a target edge. Both are measured in thousand edges processed per second (TEPS). Versioning reduces \emph{Scan} throughput by approximately 53\% for \emph{lj} and 56\% for \emph{g5}, reflecting the overhead of version checks during frequent edge scans. For \emph{Search}, the throughput drops by about 16\% for \emph{lj} and 6\% for \emph{g5}, showing a smaller but noticeable impact. PageRank experiences a significant performance degradation, with latency nearly doubling for both datasets. This issue highlights the second challenge.

\emph{\textbf{Challenge 2:}} How can we reduce the time and space inefficiencies caused by per-edge versioning to enhance the speed and scalability of graph analytics?

\vspace{2pt}
\noindent\textbf{Issue 3: Inefficient Support for Fast Search Operations.} Traversal-based algorithms like PageRank and breadth-first search rely heavily on fast scan operations. Consequently, existing methods prioritize optimizing scan performance while often neglecting the efficiency of search operations, which are critical for pattern matching tasks such as triangle counting (TC). For instance, when intersecting two sets of vastly different sizes, iterating over the smaller set and performing searches in the larger set is more efficient than using merge-based set intersections.

\begin{table}[h]
\vspace{-10pt}
\centering
  \small
     \setlength{\abovecaptionskip}{0pt}
     \setlength{\belowcaptionskip}{0pt}
\caption{Performance of \emph{Search} and Triangle Counting (TC).}
\label{tab:search_and_tc}
\resizebox{\columnwidth}{!}{
\begin{tabular}{|c|c|ccc|c|}
\hline
\multirow{2}{*}{\textbf{Dataset}} & \multirow{2}{*}{\textbf{Method}} & \multicolumn{3}{c|}{\textbf{Search (TEPS)}} & \multirow{2}{*}{\textbf{TC (s)}} \\ \cline{3-5}
 &  & \textbf{General} & \textbf{Low Deg.} & \textbf{High Deg.} &  \\ \hline
\multirow{2}{*}{\emph{lj}} & CSR & \textbf{7116.73} & \textbf{8567.41} & \textbf{6745.4} & \textbf{47.21} \\
 & Sortledton & 1898.65 & 2139.71 & 2414.32 & 153.30 \\ \hline
\multirow{2}{*}{\emph{g5}} & CSR & \textbf{3342.92} & \textbf{7437.11} & \textbf{3031.4} & \textbf{5268.23} \\
 & Sortledton & 635.33 & 1830.93 & 546.09 & 18786.13 \\ \hline
\end{tabular}
}
\vspace{-10pt}
\end{table}

Table \ref{tab:search_and_tc} presents results comparing the performance of search operations and TC between the CSR format and Sortedlton. The results indicate that Sortedlton is approximately 3.26x slower than CSR for search operations and 3.12x slower for TC. This significant performance gap highlights the following challenge:

\textbf{Challenge 3:} How can we design an efficient graph data store that enables fast search operations while maintaining a balance across scan and insert operations?

\setlength{\textfloatsep}{0pt}
\begin{figure*}[t]
    \setlength{\abovecaptionskip}{3pt}
    \setlength{\belowcaptionskip}{0pt}
    \includegraphics[scale=0.6]{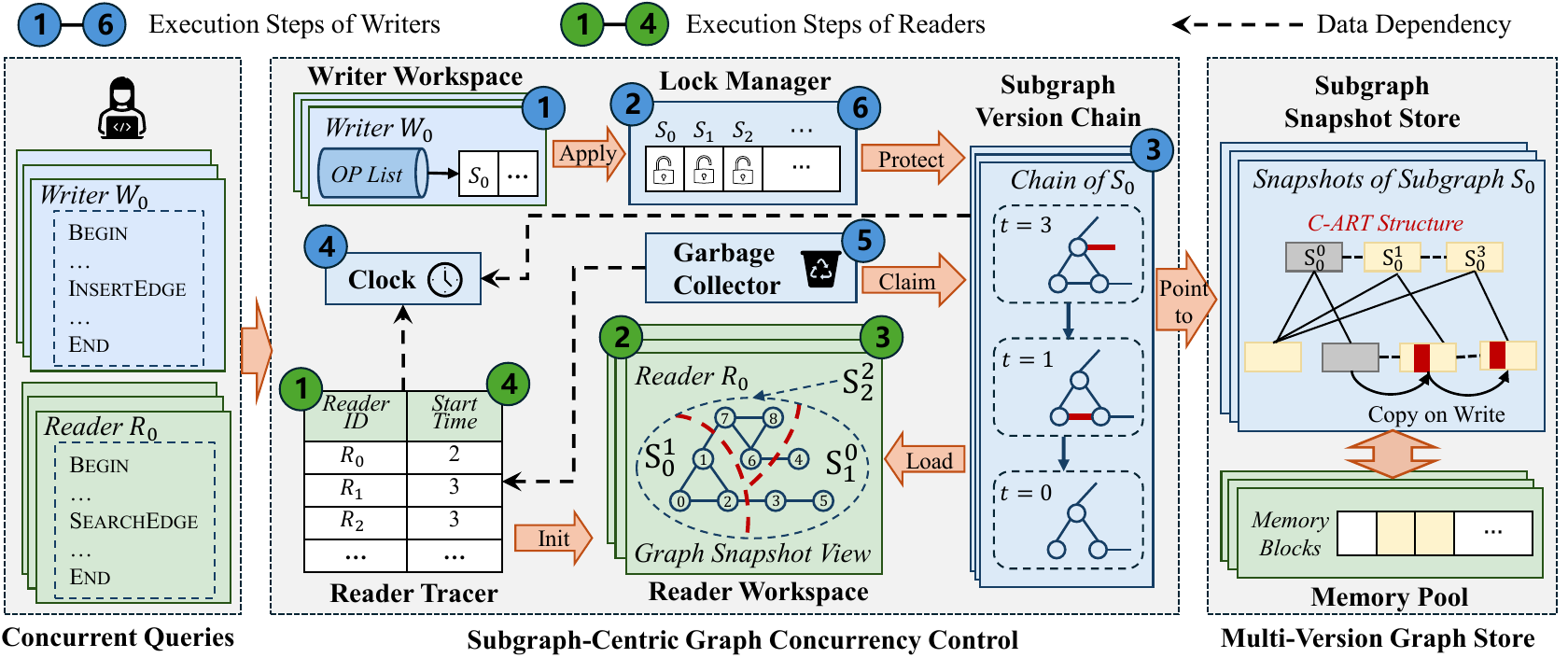}
    \centering
    \caption{An overview of RapidStore.}
    \label{fig:system_overview}
\vspace{-10pt}
\end{figure*}

\section{An Overview of RapidStore}\label{sec:system_design}

% Challenge 1: In existing graph storage systems, the concurrent execution of read and write queries often leads to substantial interference, where write operations significantly degrade the performance of read queries. This raises a critical challenge: How can we eliminate or mitigate the impact of writers on readers to enhance efficiency and scalability in graph storage?

% Challenge 2: 	In current graph storage systems, per-edge versioning leads to significant time and space overheads, severely degrading the performance of graph analytic queries and increasing memory requirements. This raises a critical challenge: How can we eliminate or reduce the inefficiencies caused by edge versioning to enhance both the speed and scalability of graph analytics?

% Challenge 3: Existing graph data storage systems often struggle to support fast concurrent read and write operations, failing to balance the performance of inserts, searches, and scans under concurrent workloads. This raises a critical challenge: How can we design an efficient graph data store that enables high-performance concurrent reads and updates while maintaining optimal balance across all operations?
%

The significant issues and challenges faced by existing methods severely limit the ability to perform fast and scalable concurrent graph queries. To address these problems, we propose \textbf{RapidStore}, a holistic solution that combines a novel graph concurrency control mechanism with an optimized graph data store.

\vspace{2pt}
\noindent\textbf{Execution Flow.} Figure \ref{fig:system_overview} provides an overview of RapidStore. \sun{The system uses a logical clock, initialized to 0, to coordinate query execution and versioning.} Since write queries update the graph with a priori known write sets, RapidStore coordinates them using the classical MV2PL protocol to ensure their serializability.

Specifically, RapidStore executes a write query in six steps:
\textcircled{1} Identify the set of subgraphs $S$ to be modified based on the write set. \textcircled{2} Obtain locks on these subgraphs to ensure exclusive access during updates among write queries. \textcircled{3}  Create new versions of the subgraphs using the copy-on-write strategy, assigning each a version number equal to the current clock value plus 1. \textcircled{4} Commit the changes and increment the clock by 1 to represent the new state of the graph. \textcircled{5} Remove outdated versions based on the read queries in execution to free up resources. \textcircled{6} Release the locks held by the write query, allowing other queries to access the subgraphs.

In contrast, RapidStore executes a read query as follows: \textcircled{1}  Register the query with the start time obtained from the current clock value. \textcircled{2} Build the graph snapshot by selecting the appropriate versions of subgraphs based on the start time. \textcircled{3} Perform graph data access operations on this snapshot to complete the query. \textcircled{4} Unregister the query after execution is complete. \sun{The initial graph $G_0$ is associated with version 0, indicated by the initial value of the logical clock. As a result, the reader execution flow remains consistent regardless of whether updates have occurred, i.e., whether the clock is 0 (no updates) or greater than 0 (some updates committed).}

\vspace{2pt}
\noindent\textbf{System Design.} To support this concurrency control method, our key idea is to exploit the characteristics of graph queries through a decoupled system design that separates the management of read and write queries, maintains coarse-grained versions at the subgraph level, and decouples version data from graph data. Specifically, a write query creates a new snapshot for the subgraphs it modifies using the copy-on-write strategy instead of creating a version at the edge level, while read queries construct a graph snapshot by combining snapshots of these subgraphs. Additionally, RapidStore employs a memory pool to support the copy-on-write strategy, reducing the overhead of frequent memory allocation and deallocation by the operating system.

\vspace{2pt}
\noindent\textbf{Advantages and Novelty.} The novel decoupled design offers significant advantages over traditional edge-versioning approaches:

\emph{Non-Blocking Reads through Decoupled Query Management.} While existing approaches like Sortledton and Teseo also manage to write and read queries differently, they require writers to acquire exclusive locks and readers to acquire shared locks on the neighbor sets of vertices to be accessed, leading to blocking between them. In contrast, RapidStore’s separation of read and write queries ensures that read queries do not require any lock operations after registration. This novel approach allows read queries to execute efficiently without interference from concurrent writes, and vice versa.

\emph{Efficient Snapshot Retrieval with Coarse-Grained Subgraph Versioning.} Read queries need to work on a graph snapshot. Traditional edge-versioning strategies are inadequate for fast snapshot retrieval because they require scanning the entire graph to find and store the correct version for each edge, incurring prohibitive time and space costs. By maintaining versions at the subgraph level rather than per-edge, RapidStore enables fast graph snapshot retrieval by combining snapshots of subgraphs without scanning and storing each edge individually. This strategy significantly reduces the time and space costs associated with version maintenance.

\emph{Elimination of Version Checks via Decoupled Version and Graph Data.} In traditional approaches, version and graph data are stored together, requiring both to be loaded from memory to the CPU and frequent version checks during graph operations, which adds significant overhead. By decoupling version data from graph data and utilizing coarse-grained versioning, RapidStore eliminates the need for both read and write queries to check versions during running time, reducing operational overhead and improving performance.

Additionally, RapidStore’s optimized graph data store supports the above graph data management methodology and effectively balances scan, search, and insert operations, providing consistent efficiency across different types of graph workloads. These innovations collectively allow RapidStore to overcome the limitations of traditional edge-versioning approaches, offering a more efficient and scalable solution for dynamic graph storage systems. In the following sections, we will detail the subgraph-centric concurrency control and the multi-version graph store.

\section{Subgraph-Centric Concurrency Control} \label{sec:concurrency_control}

In this section, we propose the subgraph-centric graph concurrency control mechanism.

\subsection{Coarse-Grained Version Management} \label{sec:version_management}

Given a read query $R$ starting at time $t$, $R$ needs to operate on the graph snapshot $G_t$ to ensure correctness. A straightforward method to eliminate version checks is to materialize $G_t$ at the beginning of $R$ and then execute $R$ on this materialized snapshot. However, with edge-versioning strategies, materializing $G_t$ requires scanning every edge to retrieve and store the correct versions, incurring prohibitive time and space costs.

To address this problem, we propose a coarse-grained version management approach that maintains versions at the subgraph level. Specifically, RapidStore partitions the graph $G$ by dividing its vertex set $V(G)$ into a set of equal-sized partitions $P$. Each subgraph $S$ consists of the vertex set $P$ and the edges adjacent to vertices in $P$, formally defined as $V(S) = P$ and $E(S) = \{ e(u, v) \in E(G) \mid u \in P \}$. RapidStore currently uses a simple graph partitioning strategy, which assigns continuous blocks of ${|P|}$ vertices to each partition based on their IDs. \sun{Using the graph in Figure~\ref{fig:system_overview} as an example, we set the partition size $|P|$ to 3 and partition the nine-vertex graph into three subgraphs: vertices 0–2 and their adjacent edges form $S_0$, 3–5 form $S_1$, and 6–8 form $S_2$.} 

RapidStore maintains a version chain for each subgraph, where each new version points to its predecessor as illustrated in Figure \ref{fig:system_overview}. When $S$ is modified, RapidStore creates a new version of $S$, leveraging the underlying multi-version graph store (see Section \ref{sec:data_storage}). The version chains are stored separately from the graph data, with each version keeping a pointer to its corresponding subgraph snapshot, effectively decoupling version information from graph data.

% Given a read query $R$, RapidStore provides a \emph{reader workspace}, an array holding pointers to subgraph snapshots. At the beginning of $R$, RapidStore constructs a \emph{graph snapshot view} by copying pointers to the appropriate subgraph snapshots into the workspace. Compared to edge-versioning, the time and space overhead is minimal since there are only $\lceil \frac{|V(G)|}{|P|} \rceil$ subgraphs. Then, $R$ executes graph access operations on this view without any version checks, eliminating the overhead associated with per-edge version checking.

\subsection{Concurrency Control Protocol}

In existing graph storage systems, both read and write queries must acquire locks on the vertices they access, leading to interference between readers and writers and reducing overall concurrency. RapidStore decouples the management of read and write queries to alleviate this issue.

\subsubsection{Concurrency Control for Write Queries}

To synchronize write queries, we adopt MV2PL. Specifically, given a write query $W_0$ that intends to update a set of vertices $\Delta V$, we first identify the subgraphs $\Delta \mathcal{S}$ containing these vertices and thus require modification. Inspired by the locking strategy in Sortledton, we acquire locks on the subgraphs in $\Delta \mathcal{S}$ in ascending order of their subgraph IDs. This sorted locking order prevents deadlocks by ensuring a consistent lock acquisition sequence across concurrent write queries. After obtaining all the necessary locks, we update each subgraph in $\Delta \mathcal{S}$ by creating new snapshots based on the latest snapshots that reflect the changes introduced by $W_0$ (see Section ~\ref{sec:data_storage} for details).

To coordinate read and write queries and maintain consistency, RapidStore introduces two timestamps: the global write timestamp ($t_w$), which tracks the order of write query commits, and the global read timestamp ($t_r$), which indicates the latest consistent snapshot available to read queries. The commit phase for $W_0$ involves incrementing $t_w$ by 1 atomically and assigning the new value to a local commit timestamp $t$, representing the commit time of $W_0$. We then assign the commit timestamp $t$ to the new versions of the modified subgraphs, linking them to the head of their respective version chains to make them the most recent versions. Next, we poll the current value of $t_r$; if $t_r = t - 1$, we atomically increment $t_r$ by 1 to advance the global read timestamp. MV2PL ensures the serializability of write queries. The polling and conditional increment of $t_r$ guarantee that write queries commit in a serial order determined by their commit timestamps. It also ensures consistent snapshots for read queries, as read queries operating with timestamp $t_r$ access a consistent snapshot. After the commit phase, $W_0$ performs garbage collection on the modified subgraphs to remove obsolete versions (as detailed in Section~\ref{sec:garbage_collection}) and then releases all acquired locks, allowing other queries to proceed.

\sun{Take $W_0$ in Figure~\ref{fig:system_overview} as an example. Suppose it inserts edge $e(1, 6)$ into the graph when $t_w = 2$. Based on the vertex IDs, the affected subgraphs are $S_0$ and $S_2$. To avoid deadlock, $W_0$ acquires locks on $S_0$ and $S_2$ in subgraph ID order. It then creates new snapshots of both subgraphs using the copy-on-write strategy and links them to their respective version chains. Given $t_w = 2$, the new version is set to 3 by atomically incrementing $t_w$. After committing, $t_r$ is updated to 3, making the change visible to readers. Finally, $W_0$ performs GC to remove obsolete versions and releases the locks.}

\subsubsection{Concurrency Control for Read Queries}

In RapidStore, read queries do not require any locks on subgraph snapshots. Consequently, they neither block write queries nor are they blocked by them, allowing for high concurrency and performance. To keep track of active read queries, RapidStore employs a mechanism called the \emph{reader tracer}. The reader tracer is an array where each element is an 8-byte integer. The highest bit of each integer (referred to as the \emph{status} bit) indicates whether the slot is in use (1) or free (0). The remaining bits store the start timestamp of a read query. By default, the size $k$ of the array is set to the number of CPU cores in the machine, but the user can configure it to suit different workloads.

When a read query $R$ begins execution, it registers itself in the reader tracer through the following steps: 1) Loop over the reader tracer to locate an empty slot based on the status; and 2) Set the status to 1 and its start timestamp $t$ as the current read timestamp $t_r$. The operation can be executed with atomic compare-and-swap (CAS) instructions without locks, ensuring that the registration process is both efficient and thread-safe.

\sun{After registration, $R$ constructs its graph snapshot view by iterating over the version chains of all subgraphs and selecting the appropriate subgraph snapshots (the latest version of each subgraph with timestamp $t \leqslant t_r$). Pointers to these snapshots are copied into $R$’s reader workspace.} When accessing the version chain of a subgraph, $R$ does not require any locks because:

\begin{enumerate}[leftmargin=*]
\item The start timestamp $t$ ensures that only committed subgraph snapshots with versions less than or equal to $t$ are visible to $R$.
\item Writers create new subgraph snapshots using the copy-on-write strategy, which does not affect existing snapshots.
\end{enumerate}

\sun{During its execution, $R$ accesses graph data according to its constructed snapshot view, ensuring consistency without impeding concurrent write operations. Figure~\ref{fig:system_overview} illustrates the snapshot construction process. Reader $R_0$ obtains a start timestamp $t_r = 2$. For subgraph $S_0$, it traverses the version chain and selects the latest version with timestamp $t \leqslant 2$, which is $t = 1$, and copies the pointer to snapshot $S_0^1$ into its workspace. The same procedure is applied to subgraphs $S_1$ and $S_2$, completing the snapshot view.} After $R$ completes its execution, it unregisters itself by resetting the status bit in its slot in the reader tracer back to 0 and setting the start timestamp to a maximum value (e.g., the largest representable integer). This action marks the slot as available for future read queries.

\subsection{Garbage Collection} \label{sec:garbage_collection}

% inspired by ~\cite{tu2013speedy, diaconu2013hekaton}

Each update to a subgraph creates a new snapshot, leading to an ever-growing version chain for that subgraph. Similar to garbage collection (GC) approaches in existing works~\cite{tu2013speedy, diaconu2013hekaton,zhu2019livegraph,lim2017cicada}, RapidStore reclaims stale versions based on the timestamps. Specifically, a subgraph version can be reclaimed if it is not the latest in the version chain and is not being used by any active readers. After committing its updates, a writer $W$ performs GC as follows:

% For each non-empty slot in the table:
% \begin{itemize}
% \item If the timestamp is not \texttt{MAX}, $W$ retrieves the value.
% \item If the timestamp is \texttt{MAX}, $W$ polls the value until the reader sets its start timestamp.
% \end{itemize}

\begin{enumerate}[leftmargin=*]
\item $W$ scans the reader tracer to collect the start timestamps of all active readers. To ensure it does not block other queries, $W$ retrieves each value in the reader tracer using atomic operations.
\item Based on the collected start timestamps, $W$ loops over the version chain of the modified subgraphs to determine which versions can be reclaimed.
\end{enumerate}

RapidStore leverages writers to perform GC because frequently modified subgraphs are more likely to require GC, and performing GC during write operations benefits from spatial locality without the need for additional background service threads. The garbage collection of subgraph snapshots is supported by the underlying multi-version graph store, which will be introduced in Section~\ref{sec:data_storage}.

\subsection{Analysis of Graph Concurrency Control}

Proposition \ref{prop:correctness} guarantees the correctness of the proposed concurrency control mechanism. We provide a proof sketch here and include the full details in the technical report~\cite{anonymous2024}.

\begin{proposition} \label{prop:correctness}
The subgraph-centric concurrency control mechanism guarantees the serializability of both write and read queries.
\end{proposition}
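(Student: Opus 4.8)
The plan is to prove serializability by exhibiting an explicit serial schedule that is conflict-equivalent to any concurrent history the mechanism can produce, and then arguing equivalence separately for the write-write, write-read, and read-read conflict classes. Concretely, I would take the serial order to be: all write queries ordered by their commit timestamps $t = 1, 2, \ldots$, with each read query $R$ inserted immediately after the write query whose commit timestamp equals $R$'s start timestamp $s$ (and before the write with timestamp $s+1$). The goal is to show this order is conflict-equivalent to the actual execution, which by the standard serializability theorem establishes the proposition.

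First I would handle the write queries in isolation. Since writers follow strict two-phase locking on subgraphs---acquiring every lock in the growing phase before any modification and holding all locks through the commit phase---and since locks are taken in ascending subgraph-ID order, the protocol is deadlock-free and is a genuine MV2PL schedule. Two writers conflict only if their modified subgraph sets $\Delta\mathcal{S}$ overlap; for any such pair, one must release its locks before the other acquires them, so the writer that commits first obtains the smaller commit timestamp. Hence ordering writers by commit timestamp is consistent with every write-write conflict, and the write-only projection is serializable.

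Second---and this is where the argument requires the most care---I would establish the key invariant governing the global read timestamp: \emph{when $t_r = t$, every subgraph version with timestamp $\le t$ has already been linked into its version chain, and $t_r$ is advanced from $t-1$ to $t$ only by the writer whose commit timestamp is exactly $t$.} This follows by induction on $t$ from the commit protocol: a writer with commit timestamp $t$ links its new versions before polling $t_r$, and only increments $t_r$ to $t$ once it observes $t_r = t-1$, i.e., after the writer with timestamp $t-1$ has both published its versions and advanced the clock. The polling and conditional-increment step thus serializes the \emph{publication} of writes in commit-timestamp order, even though the $t_w$ allocation and copy-on-write linking may interleave freely. The main obstacle is discharging this invariant rigorously against all admissible interleavings of the four atomic substeps (allocating $t$, linking versions, polling $t_r$, incrementing $t_r$), and in particular showing that a reader traversing a chain never observes a partially published state: because a ``future'' version with timestamp $t' > t_r$ may already be linked, I must verify that the reader's visibility filter $t \le t_r$ discards exactly those versions and that copy-on-write immutability guarantees the selected snapshots never change underneath the reader.

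Finally, with the invariant in hand the read queries fall into place. A reader $R$ registering with $s = t_r$ selects, for each subgraph, the latest version with timestamp $\le s$; by the invariant these are precisely the versions produced by writers with commit timestamps $1, \ldots, s$, so $R$'s constructed snapshot equals the materialized state $G_s$ reflecting exactly those writes and no later one. Because $R$ publishes its start timestamp in the reader tracer before building its snapshot, any concurrent garbage-collecting writer observes $s$ and never reclaims a version visible to $R$, and since copy-on-write never mutates an existing snapshot, this view is stable for the entire lifetime of $R$. Placing $R$ between writers $s$ and $s+1$ therefore respects every write-read conflict, while read-read pairs never conflict; combining the three conflict classes shows the chosen serial order is conflict-equivalent to the concurrent execution, which completes the proof.
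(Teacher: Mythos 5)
Your proposal is correct and takes essentially the same route as the paper's proof: writes are serialized by their commit timestamps via MV2PL together with the polling/conditional increment of $t_r$, and each read query is placed in the serial order at its start timestamp $s$, where copy-on-write immutability guarantees it observes exactly the committed state $G_s$ and nothing later. Your version is somewhat more formal than the paper's (explicit conflict-equivalence framing, the inductive invariant on when $t_r$ may advance, and the garbage-collection safety remark), but the decomposition into write-write, write-read, and read-read cases and the key ideas coincide with the paper's argument.
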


\begin{proof}
MV2PL and the enforcement of commit order based on $t_w$ and $t_r$ ensure that write queries are serialized according to their commit timestamps. Meanwhile, read queries access a consistent snapshot of the graph at their start time, proceeding without interfering with ongoing writes. Since the start timestamp depends on $t_r$, which advances based on the commit timestamps of write queries, both write and read queries are serialized in commit timestamp order. Thus, the proposition is proven.
\end{proof}

\noindent\textbf{Time and Space Cost.} We analyze the time and space overhead introduced by the graph concurrency control mechanism. Let $p = \lceil \frac{|V|}{|P|} \rceil$ denote the number of subgraphs in $G$, and let $k$ be the size of the reader tracer array corresponding to the maximum number of concurrent read queries. We have the following proposition, the proof of which is included in the technical report.

\begin{proposition} \label{prop:length_version}
For any subgraph $S$, the length of its version chain is at most $k + 1$.
\end{proposition}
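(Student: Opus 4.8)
The plan is to bound the number of versions that can survive a garbage-collection pass by exploiting the fixed capacity $k$ of the reader tracer. First I would fix attention on the moment immediately after a writer $W$ completes its GC on a subgraph $S$. Since $W$ holds the exclusive lock on $S$ throughout its update, commit, and GC phases, and releases it only afterward, no other writer can insert into or remove versions from $S$'s chain during this window. It therefore suffices to reason about the chain length at this single, well-defined point.

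Next I would characterize exactly which versions remain after GC. By the reclamation rule of Section~\ref{sec:garbage_collection}, a version of $S$ survives if and only if it is either (i) the head (latest) version, which is never reclaimed, or (ii) needed by some active reader. I would then observe that each active reader $R$ pins exactly one version of $S$: by the snapshot-construction rule, $R$ selects the latest version of $S$ whose timestamp does not exceed its start timestamp $t$, so $R$ designates a unique version as in use. Consequently the set of versions kept on account of readers is the image of the active readers under this "version-needed-by" map.

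The core counting step is then immediate. Because the reader tracer is an array of exactly $k$ slots and each reader occupies one slot while active, at most $k$ readers are registered at any instant. Each such reader pins at most one version, so the reader-needed set has size at most $k$; adding the single latest version yields at most $k+1$ surviving versions, which is the claimed bound. (When the latest version coincides with a reader-pinned one the true count is smaller, but $k+1$ always holds.)

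The main obstacle I anticipate is the concurrency between reader registration and $W$'s scan of the reader tracer: a reader may register while $W$ is collecting start timestamps. I would handle this by noting that the cardinality argument---at most $k$ active readers, each pinning one version---is timing-independent, since it follows purely from the fixed number of slots and is thus insensitive to precisely when any registration is observed. Moreover, a reader registering after $W$'s new version is committed obtains a start timestamp reflecting that version and hence pins the latest (not an older) version, so late registrations cannot enlarge the count of pinned stale versions. Making this interleaving argument rigorous, rather than the counting itself, is the delicate part of the proof.
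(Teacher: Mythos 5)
Your proposal is correct and follows essentially the same argument as the paper's proof: exclusive writer locks ensure only one writer extends the chain, the fixed tracer capacity bounds active readers by $k$, each reader pins at most one version of $S$, and GC reclaims everything except the latest version and the reader-pinned ones, giving $k+1$. Your two refinements---making explicit that each reader pins exactly one version via the snapshot-selection rule, and addressing the race between reader registration and the writer's tracer scan---add rigor the paper's proof leaves implicit, but they do not change the underlying approach.
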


For a read query $R$, the reader workspace requires $O(p)$ space to store pointers to subgraph snapshots. The concurrency control for $R$ incurs $O(p \times k)$ time overhead: 1) Registering $R$ takes $O(k)$ time, as it searches the reader tracer; 2) Constructing the snapshot view takes $O(p \times k)$ time since $R$ traverses the version chain for each subgraph; and 3) Unregistering $R$ takes $O(1)$ time. This process is efficient because new versions are at the head of the version chain, and $R$ obtains the latest value of $t_r$.

A write query $W$ can be blocked by other write queries during lock acquisition. The overhead of lock contention depends on the concurrency of write queries; here, we analyze the overhead without lock contention. Let $\Delta \mathcal{S}$ be the set of subgraphs modified by $W$, and let $s = |\Delta \mathcal{S}|$. The workspace cost of $W$ is $O(s)$. Since RapidStore sorts $\Delta \mathcal{S}$ to obtain locks, the time cost of acquiring locks is $O(s \log s)$. Committing updates takes $O(s)$ time, as $W$ sets the version value for each subgraph in $\Delta \mathcal{S}$. Finding versions to be reclaimed takes $O(k \log k + s k)$ time: $W$ acquires start timestamps of active readers, sorts them, and traverses the version chain to determine which versions to reclaim.  Since $W$ typically updates only a few subgraphs, $s$ remains small, and the version chain length $k$ is bounded as per Proposition \ref{prop:length_version}. This ensures that both commit and garbage collection operations are highly efficient.

\vspace{2pt}
\noindent\textbf{Impact of Partition Size.} The partition size (i.e., subgraph size) $|P|$ directly influences the number $k$ of subgraphs. At the minimum, $|P|$ can be set to 1, where each subgraph contains only a single vertex. This setup can improve write performance by reducing lock contention among concurrent write queries. However, as discussed earlier, smaller subgraph sizes increase the space and time overhead for read queries. Additionally, minimizing subgraph size limits opportunities to optimize the storage of small neighbor sets within a subgraph, thereby decreasing read efficiency. To balance read and write performance, we set $|P|$ to 64 by default, which achieves good empirical performance. \sun{We currently use a static partitioning strategy, randomly dividing the graph into equal-sized partitions at initialization, and dynamically selecting data structures for neighbor sets based on vertex degrees. A promising research direction is to design adaptive partitioning strategies that adjust to operation skewness, such as recent write frequencies, to further reduce write conflicts and improve write performance.}

\section{Multi-Version Graph Store}\label{sec:data_storage}

We present the multi-version graph store that efficiently maintains graph data in this section.

\subsection{Design of Graph Store} \label{sec:general_idea}

\begin{figure}[t]
    \setlength{\abovecaptionskip}{0pt}
    \setlength{\belowcaptionskip}{0pt}
    \includegraphics[scale=0.68]{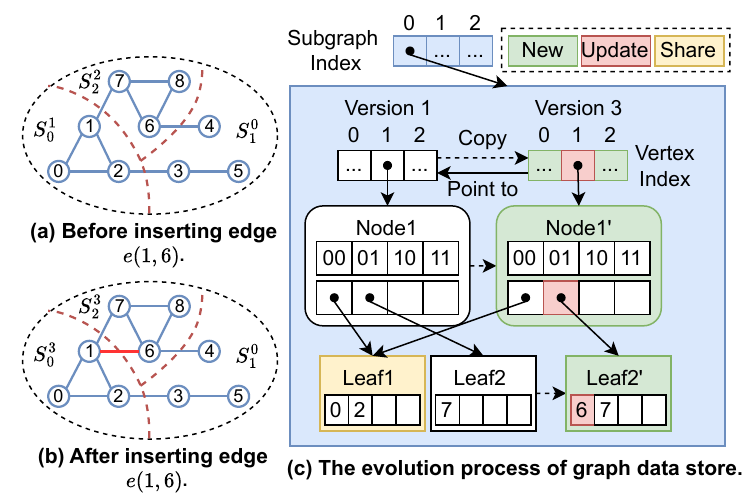}
    \centering
	\caption{\sun{Overview of the multi-version graph store design with the copy-on-write strategy.}}
	\label{fig:vertex}
\end{figure}

To efficiently handle concurrent read and write queries, a graph store must meet two key requirements. First, it must quickly generate a snapshot of a subgraph $S$ upon updates to support subgraph-centric concurrency control and ensure the correctness of concurrent queries. Second, since graph queries commonly follow a "vertex-neighbor" access pattern, the store must efficiently retrieve the neighbor set $N(u)$ for any vertex $u \in V(S)$ and support fast search, insertion, and scan operations on $N(u)$. To achieve these goals, we adopt a copy-on-write strategy for updating $S$, ensuring that read queries remain unaffected by write queries. 

\sun{Figure~\ref{fig:vertex} illustrates the design of the store. The example assumes 4-bit vertex IDs, using 2 bits per level in a classical radix tree. When $W_0$ inserts edge $e(1, 6)$ at logical time $t_w = 2$, it locates the affected subgraph $S_0$ via the subgraph index, which maintains pointers to version chains. Each subgraph version contains a vertex index pointing to radix trees that store neighbor sets. Vertices with common prefixes share the same leaf node (e.g., vertex 1’s neighbors 0 and 2 in Leaf 1). The insertion triggers a copy-on-write: the affected radix tree path (from leaf to root) is duplicated and updated, creating a new version $S_0^3$ (with $t_w$ incremented to 3). Subgraph $S_2$ is updated similarly, as $e(1, 6)$ also affects it. Its version is also set to 3, since both updates are committed in the same transaction.}

The vertex index is an array where each entry corresponds to a vertex $u \in V(S)$ and stores a pointer to its neighbor set $N(u)$. As described in Section~\ref{sec:concurrency_control}, $V(S)$ is a contiguous subset of $V(G)$ in the range $[0, |V(G)|)$, enabling $O(1)$ lookup of $N(u)$ by vertex ID. Since vertex indices are small, copying them for new versions is fast and lightweight. To handle degree skewness in real-world graphs, $N(u)$ is stored differently based on vertex degree: high-degree vertices use a tree structure, while low-degree vertices use small arrays. These small arrays are further grouped into a tree to optimize memory usage. \sun{Next, we first introduce the \emph{compressed adaptive radix tree} (C-ART) in Section 6.2, which stores the neighbor set for high-degree vertices. Then, in Section 6.3, we present the \emph{clustered index}, which centrally stores low-degree vertices within a subgraph to fully exploit locality.}

% \sun{Specifically, \emph{VertexDelete(u)} removes vertex $u$ by first deleting all its incident edges $e(u, v)$ for each $v \in N(u)$ (see Section 6.2). This involves acquiring locks on the subgraphs containing the neighbors, following MV2PL rules. After edge deletion, a flag bit is unset to mark the vertex as removed, and its ID is added to a queue for potential reuse. Vertex deletions are generally rare. \emph{VertexInsert(u)} adds vertex $u$ by first checking the queue for reusable IDs. If available, one is reused; otherwise, $N$ is atomically incremented to assign a new ID. The vertex is then added to the appropriate subgraph by setting a flag bit, requiring a lock only on that subgraph.}

% As mentioned earlier, the vertex index is compact and efficient. The primary challenge lies in designing effective data structures to store the neighbor sets $N(u)$. \new{In RapidStore, we address this by using an independent \emph{Compressed Adaptive Radix Tree} (C-ART), detailed in Section 6.2, for each high-degree vertex and a \emph{Clustered Index}, described in Section 6.3, for centrally storing low-degree vertices within the subgraph to exploit locality fully.}

\subsection{Compressed Adaptive Radix Tree}\label{sec:c_art}

As discussed in Section \ref{sec:background}, ART is memory-efficient and supports fast retrieval and insertion operations. Its hierarchical structure, which limits each node to a maximum of 256 entries, enables an efficient copy-on-write mechanism by duplicating the root-to-leaf path. This makes ART a suitable choice for storing $N(u)$. However, storing vertices individually in each leaf can degrade scan performance due to frequent node traversal. A straightforward approach to improve scan performance is to organize the leaves into contiguous segments of size $B = 256$, each storing vertices from $N(u)$. However, the distribution of $N(u)$ is often skewed and sparse across the range $[0, |V(G)|)$, leading to low filling ratios (the proportion of occupied entries within each segment). As shown in Table \ref{tab:filling_ratio}, filling ratios for various graphs are below 4\%, resulting in poor scan performance and significant memory waste. To address this issue, we propose the \emph{compressed adaptive radix tree} (C-ART), which compresses leaves to significantly improve the filling ratio. This compression enhances memory locality and traversal efficiency, making C-ART a more effective solution for graph storage.

% \vspace*{-\baselineskip}
\begin{table}[t]
\setlength{\abovecaptionskip}{0pt}
     \setlength{\belowcaptionskip}{0pt}
\caption{Comparison of filling ratios of ART and C-ART.}
\resizebox{\columnwidth}{!}{
\begin{tabular}{|c|c|c|c|c|c|c|}
\hline
\multicolumn{1}{|l|}{\textbf{Dataset}} & \textit{lj} & \textit{ot} & \textit{ldbc} & \textit{g5} & \textit{tw} & \textit{fr} \\ \hline
ART & 2.17\% & 3.13\% & 3.44\% & 2.40\% & 2.24\% & 2.94\%\\
C-ART & \textbf{66.17\%} & \textbf{67.12\%} & \textbf{64.27\%} & \textbf{67.80\%} & \textbf{67.22\%} & \textbf{64.70\%} \\ \hline
\end{tabular}
}
\label{tab:filling_ratio}
\end{table}

\begin{figure}[t]\small
    \setlength{\abovecaptionskip}{3pt}
    \setlength{\belowcaptionskip}{0pt}
    \includegraphics[scale=0.69]{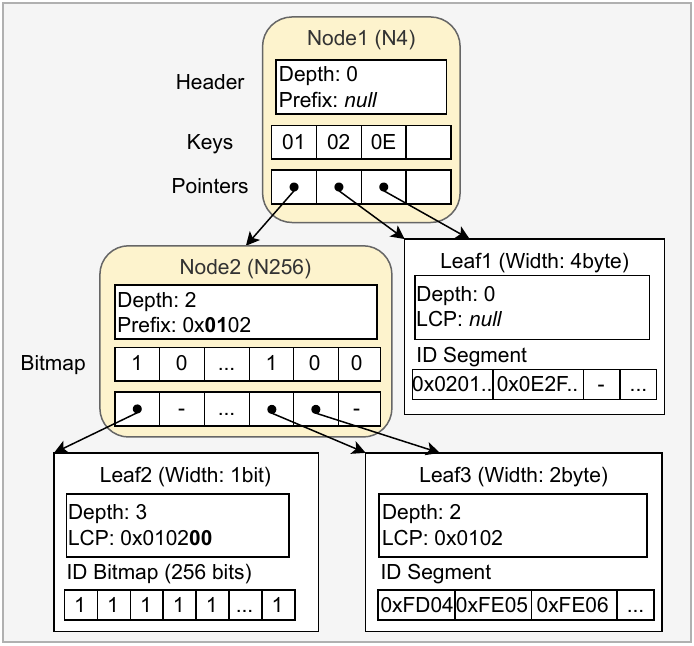}
    \centering
    \caption{An example of C-ART storing the same elements as ART shown in Figure~\ref{fig:art}.}
    \label{fig:c-art}
\end{figure}

\noindent\textbf{Compressing Leaves.} ART employs path compression by vertically merging paths and nodes within the tree, reducing memory usage and search costs. In contrast, C-ART, introduced in this paper, applies horizontal compression to leaves, enhancing scan performance in graph-related operations. Specifically, unlike ART, where each leaf stores a single vertex, C-ART leaves can store up to $B = 256$ vertices. Each C-ART leaf is defined by a \emph{longest common prefix} (LCP) shared by all vertices within the leaf, with the \emph{depth} indicating the LCP's length in bytes. This design allows multiple keys in a C-ART node to point to the same leaf, effectively compressing the leaves of multiple keys into a single leaf. This improves the filling ratio (as shown in Table \ref{tab:filling_ratio}), enhancing scan performance and reducing memory consumption.

Figure~\ref{fig:c-art} provides an example of C-ART. Each leaf contains up to 256 vertices, with the LCP and its length recorded. Multiple keys within a node can point to the same leaf (e.g., two keys in Node2 point to Leaf3). Compared to ART in Figure~\ref{fig:art}, C-ART stores vertices much more compactly. Next, we describe the graph operations on a neighbor set $N(u)$ of vertex $u$ stored in C-ART. The insert operation illustrates the construction of a C-ART.

1) \textit{Search(u, v)} locates vertex $v$ in $N(u)$ by traversing the C-ART nodes based on the byte sequence of $v$, similar to the process in ART. Upon reaching a leaf, C-ART performs a binary search to locate $v$ within the leaf. This design ensures fast search performance due to the limited number of vertices per leaf.

2) \textit{Scan(u)} traverses the C-ART structure in a depth-first-search (DFS) order, enabling sequential memory access when processing leaves. As multiple vertices are stored continuously in leaves, this approach significantly improves performance for graph analytic queries compared with ART.

3) \textit{Insert(u, v)} adds vertex $v$ to $N(u)$ stored in a C-ART. Since C-ART inherits the node management strategy of ART, the focus is on inserting $v$ into a leaf. Initially, C-ART is created with a root node and a leaf segment containing $B = 256$ entries. To insert $v$, the position in the leaf is determined using \textit{Search(u, v)}. Let the target leaf currently contain $b$ vertices. As illustrated in Figure~\ref{fig:c-art-insert}, the insertion process involves three possible cases based on the state of the leaf:

\begin{itemize}[leftmargin=*]
    \item \textbf{Case 1: $b < B$.} Insert $v$ directly into the leaf and update the affected pointers in the parent node to reflect the change.
    \item \textbf{Case 2: $b = B$ and the leaf is shared by multiple keys.} Identify the first key with a pointer offset of at least $\frac{B}{2}$. Split the leaf at this offset, then insert $v$ following the procedure in Case 1.
    \item \textbf{Case 3: $b = B$ and the leaf is associated with a single key.} Compute the LCP of all vertices in the leaf, create a new internal node using this LCP as its prefix, split the leaf as in Case 2, and insert $v$ accordingly.
\end{itemize}

% Write operations are performed on the copied root-to-leaf paths, including nodes and leaves.

\begin{figure}[t]
    \setlength{\abovecaptionskip}{3pt}
    \setlength{\belowcaptionskip}{0pt}
    \includegraphics[scale=0.68]{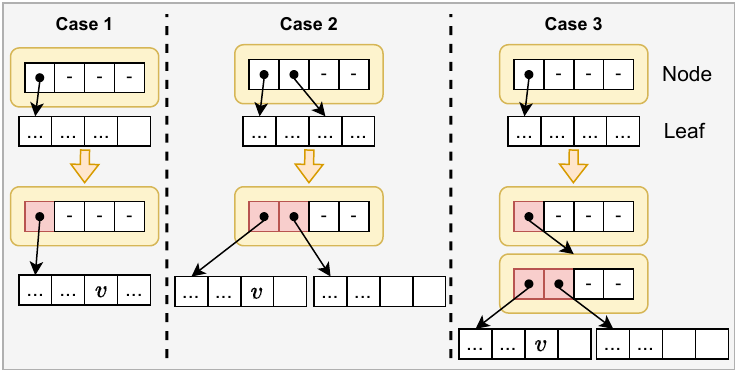}
    \centering
	\caption{Insertion of vertex $v$ into $N(u)$ stored in a C-ART. Red highlights the updated pointers.}
	\label{fig:c-art-insert}
\end{figure}

4) \textit{Delete(u, v)} removes vertex $v$ from $N(u)$. The operation begins by locating $v$ using \textit{Search(u, v)} and then removing it from the leaf. If the number of vertices in the leaf drops below $\frac{B}{2}$ after deletion, the leaf is checked for potential merging with adjacent sibling leaves to maintain a high filling ratio.

% \new{5) \textit{BatchUpdate(N)} execute multiple writes in the batch $N$ in the same transaction. The operation is similar to \emph{Insert/Delete}, but as we descend the elements in the Batch along their respective search paths in the tree, some elements will reach the same node. We can copy the node only once and apply all the required modifications to reduce the copying overhead. For leaves, new elements must be merged with original elements in the new leaves, rather than using original binary insertion.
% }

5) When two neighbor sets are stored in C-ARTs, set intersections can be performed efficiently based on the structures of the two trees.

Note that both insert and delete operations are executed on the copied root-to-leaf paths to ensure consistency and isolation.

\vspace{2pt}
\noindent\textbf{Optimization.} First, vertex IDs are compressed by removing the LCP and storing only the unique suffixes. During computation, the full vertex ID is reconstructed by concatenating the LCP with the stored suffix. This approach reduces memory usage and minimizes data movement between memory and the CPU, accelerating computation. If the vertex IDs in a leaf differ only in their last byte, they are stored using a 256-bit bitmap. Second, to address the overhead of iterating through up to 256 child pointers per node, many of which may be empty, a bitmap is maintained in each node to record the presence of non-empty pointers. By utilizing AVX2 instructions, we efficiently identify the indices of set bits in the bitmap, bypassing empty pointers and significantly improving performance.

\subsection{Clustered Index} \label{sec:clustered_index}

The overhead of storing neighbor sets for low-degree vertices in C-ART can diminish its benefits, as the tree's depth depends on the vertex ID length. Therefore, we introduce the \emph{clustered index} to store neighbor sets of low-degree vertices. The clustered index is implemented as a B+ tree where the keys are edge pairs $e(u, v)$ representing source and destination vertices. Neighbor sets of low-degree vertices in $V(S)$ are stored sequentially within the clustered index according to the $(u, v)$ order, eliminating the random memory access overhead caused by traversing different neighbor sets. The vertex array records the position of each neighbor set within the clustered index to accelerate the visit of $N(u)$. Since $|V(S)|$ is small (64 by default) and the clustered index exclusively stores neighbor sets of low-degree vertices, the depth of the clustered index remains low, typically two levels. Consequently, graph access operations and updates are executed efficiently.

\subsection{Garbage Collection of Graph Store}

RapidStore employs a classical reference counting method for GC. Specifically, each node and leaf maintains a reference count that records the number of parents from different snapshots referring to it. The reference count is incremented by one during the copy-on-write operation along the root-to-leaf path. Conversely, when graph concurrency control claims a snapshot version, RapidStore traverses from top to bottom, decrementing the reference count. If a node’s reference count drops to zero, it is reclaimed. 
\begin{table*}[t]
\centering
\setlength{\abovecaptionskip}{0pt}
\setlength{\belowcaptionskip}{0pt}
\caption{Performance of graph analytics. For the CSR, we report the latency time. For the systems, we report the slowdown over CSR. The best performance is in bold. "-" means not supported. "OOT" means it cannot be completed within 5 hours.}
\label{tab:graph_analytics}

\resizebox{\textwidth}{!}{
\begin{tabular}{|c|ccccc|ccccc|ccccc|}
\hline
\multirow{2}{*}{\textbf{}} & \multicolumn{5}{c|}{\textit{\textbf{lj}}} & \multicolumn{5}{c|}{\textit{\textbf{ot}}} & \multicolumn{5}{c|}{\textit{\textbf{ldbc}}} \\ \cline{2-16} 
 & \textbf{BFS} & \textbf{PR} & \textbf{SSSP} & \textbf{WCC} & \textbf{TC} & \textbf{BFS} & \textbf{PR} & \textbf{SSSP} & \textbf{WCC} & \textbf{TC} & \textbf{BFS} & \textbf{PR} & \textbf{SSSP} & \textbf{WCC} & \textbf{TC} \\ \hline
CSR & 0.56s & 4.66s & 1.42s & 1.95s & 48.21s & 0.64s & 7.91s & 1.64s & 1.73s & 232.80s & 2.68s & 20.57s & 6.36s & 5.73s & 442.10s \\ \hline
Sortledton & 7.23 & 6.69 & 2.45 & 3.11 & 3.18 & 5.97 & 4.77 & 2.46 & 3.07 & 3.07 & 8.35 & 8.49 & 3.04 & 4.20 & 4.94 \\
Teseo & 15.25 & 4.32 & 3.53 & 2.52 & 3.99 & 3.83 & 2.74 & 3.50 & 4.29 & 4.34 & 6.28 & 4.01 & 3.84 & 3.23 & 7.31 \\
Aspen & 7.11 & 7.10 & 4.81 & 3.97 & 6.24 & 3.67 & 3.34 & 2.71 & 2.80 & 10.15 & 8.90 & 9.07 & 5.97 & 4.58 & 9.87 \\
LiveGraph & 21.77 & 17.12 & 4.76 & 7.01 & - & 15.77 & 10.49 & 7.84 & 6.86 & - & 11.46 & 13.91 & 4.99 & 6.41 & - \\
\textbf{RapidStore} & \textbf{2.11} & \textbf{1.53} & \textbf{1.61} & \textbf{1.18} & \textbf{1.39} & \textbf{1.59} & \textbf{1.35} & \textbf{1.28} & \textbf{1.22} & \textbf{1.51} & \textbf{1.82} & \textbf{1.65} & \textbf{1.79} & \textbf{1.36} & \textbf{1.49} \\ \hline
\multirow{2}{*}{} & \multicolumn{5}{c|}{\textit{\textbf{g5}}} & \multicolumn{5}{c|}{\textit{\textbf{tw}}} & \multicolumn{5}{c|}{\textit{\textbf{fr}}} \\ \cline{2-16} 
 & \textbf{BFS} & \textbf{PR} & \textbf{SSSP} & \textbf{WCC} & \textbf{TC} & \textbf{BFS} & \textbf{PR} & \textbf{SSSP} & \textbf{WCC} & \textbf{TC} & \textbf{BFS} & \textbf{PR} & \textbf{SSSP} & \textbf{WCC} & \textbf{TC} \\ \hline
CSR & 2.00s & 20.98s & 4.25s & 4.88s & 5268.23s & 2.92s & 24.87s & 6.71s & 10.48s & 4631.08s & 31.80s & 323.60s & 68.20s & 65.39s & 1974.24s \\ \hline
Sortledton & 4.42 & 4.49 & 2.42 & 3.06 & 3.57 & 6.70 & 6.80 & 2.60 & 3.45 & 3.47 & 3.18 & 3.79 & 1.92 & 3.58 & 9.62 \\
Teseo & 4.65 & 2.37 & 3.02 & 2.70 & OOT & 4.61 & 3.36 & 3.87 & 2.48 & OOT & 1.91 & 2.10 & 2.13 & 2.04 & 10.35 \\
Aspen & 4.35 & 4.21 & 3.45 & 2.93 & OOT & 7.61 & 7.03 & 5.23 & 3.89 & OOT & 2.41 & 1.80 & 2.81 & 2.32 & OOT \\
LiveGraph & 8.90 & 9.64 & 4.07 & 5.74 & - & 12.39 & 12.51 & 4.46 & 6.29 & - & OOM & OOM & OOM & OOM & - \\
\textbf{RapidStore} & \textbf{1.60} & \textbf{1.61} & \textbf{1.60} & \textbf{1.24} & \textbf{1.89} & \textbf{1.71} & \textbf{1.68} & \textbf{1.80} & \textbf{1.33} & \textbf{2.26} & \textbf{0.92} & \textbf{1.04} & \textbf{1.17} & \textbf{1.16} & \textbf{3.67} \\ \hline

\end{tabular}
}
\end{table*}
% \vspace{-5pt}
\subsection{Analysis of Multi-Version Graph Data Store}

The clustered index follows the classical B+ tree's time and space complexities. Here, we analyze the time and space costs of C-ART. Let $|P|$ denote the partition size, $B$ the leaf segment size, $d = |N(u)|$, and $w$ the length of a vertex ID in bytes.

\vspace{2pt}
\noindent\textbf{Time Cost.} Locating $N(u)$ for a vertex $u$ takes $O(1)$ time using the vertex index, which can be omitted. Compressing leaves does not increase the number of nodes compared to ART. Consequently, \textit{Scan(u)} operates in $O(d)$ time, the same as ART, while \textit{Search(u, v)} requires $O(w + \log B)$ time due to the binary search within the leaf. For \textit{Insert(u, v)}, two operations are involved:
1) Copying the root-to-leaf path, which takes $O(wB)$ time since the depth of C-ART is bounded by $w$.
2) Inserting $v$, which requires $O(|P| + w + \log B)$ time to find the insertion position and potentially split the leaf. Because of the copy-on-write strategy, delete has the same cost as insert.  
For set intersections between two neighbor sets $N(u)$ and $N(v)$ with degrees $d_1$ and $d_2$ (assume $d_1 \leqslant d_2$), two strategies are employed:
If $\frac{d_2}{d_1} < 10$, a merge-based set intersection is performed, taking $O(d_1 + d_2)$ time. Otherwise, for each neighbor in $N(u)$, existence in $N(v)$ is checked, taking $O(d_1 \times (w + \log B))$ time.

\sun{For vertex operations, \emph{VertexDelete(u)} removes vertex $u$ by first deleting all its incident edges $e(u, v)$ for each $v \in N(u)$. This involves acquiring locks on the subgraphs containing the neighbors, following MV2PL rules. After edge deletion, a flag bit is unset to mark the vertex as removed, and its ID is added to a queue for potential reuse. Vertex deletions are generally rare. \emph{VertexInsert(u)} adds vertex $u$ by first checking the queue for reusable IDs. If available, one is reused; otherwise, $N$ is atomically incremented to assign a new ID. The vertex is then added to the appropriate subgraph by setting a flag bit, requiring a lock only on that subgraph. The time complexity of \textit{VertexDelete(u)} is proportional to the number of adjacent edges to be deleted. After edge removal, deleting the vertex from the subgraph incurs an additional $O(|P|)$ cost due to snapshot creation. \textit{VertexInsert(u)} has the same $O(|P|)$ cost, as it involves no additional operations.} 

\sun{In summary, RapidStore can incur higher write overhead than per-edge versioning approaches due to its copy-on-write strategy. This design trade-off is intentional to optimize read performance, as graph workloads are typically read-intensive. Nevertheless, write operations remain efficient in practice, benefiting from the small values of $w$ and $B$, and experimental results confirm that RapidStore achieves good write performance.}

% Inserting a vertex $v$ into $N(u)$ requires copying the vertex index, which costs $O(|P|)$ time. As detailed in Section~\ref{sec:c_art}, C-ART searches for the insertion position in $O(w + \log B)$ time and copies the root-to-leaf path, which takes $O(wB)$ time since each node may have up to $B$ entries. Since an insertion generates at most two new leaves, the total time cost is $O(|P| + w + \log B + wB)$. Deletion has the same time complexity due to the copy-on-write updates.

% Garbage collection (GC) on a version reclaims $O(w)$ nodes because insertion copies a root-to-leaf path, allowing the old path to be reclaimed. For each node, updating the reference count for its 256 child pointers takes $O(256 \times w)$ time. However, since vertex IDs are in the range $[0, |V|)$, $w$ is small, making these operations highly efficient.

\vspace{2pt}
\noindent\textbf{Space Cost.} Given $S$, the vertex index requires $O(|P|)$ space. RapidStore uses $O(d)$ space to store $N(u)$ since each leaf contains at least one vertex. Each entry in the clustered index corresponds to a single neighbor. Thus, the overall space complexity for storing the graph is $O(|V(G)| + |E(G)|)$. The additional overhead from multi-versioning is minimal. The number of subgraph versions is bounded by the number of concurrent read queries (as detailed in Section~\ref{sec:concurrency_control}), and each snapshot only duplicates a root-to-leaf path, which incurs negligible space overhead due to its small size.

\vspace{2pt}
\noindent\textbf{Hyperparameters.} RapidStore has two hyperparameters: the partition size $|P|$ and the segment size $B$. As analyzed above, increasing these values can improve read efficiency since $N(u)$ are stored in larger chunks. However, larger values decrease write performance. We empirically set $|P|$ and $B$ to 64 and 512 to balance read and write performance, and this value cooperates with AVX2 instructions.

% \vspace{2pt}
% \noindent\textbf{Discussion.} Compared to existing edge-versioning methods such as Sortledton, RapidStore has the same big-O complexity but significantly reduces practical memory consumption because (1) it does not store versions for each edge, and (2) it compresses vertex IDs.

% Regarding read efficiency, RapidStore is much faster due to its underlying C-ART structure. It significantly outperforms counterparts in graph analytics queries since it does not need to perform version checks for each edge and mitigates the impact of writes on read performance. For insertions, edge-versioning methods are inherently more efficient than RapidStore's subgraph-centric concurrency control because edge versioning only needs to insert a new edge element. In contrast, RapidStore performs a copy-on-write strategy, incurring more overhead. Nevertheless, RapidStore significantly improves read efficiency and reduces memory consumption, which is advantageous as graph applications are generally read-intensive. Moreover, RapidStore provides considerable write performance.

\section{Experiments} \label{sec:experiments}

% Datasets: LiveJournal, DotaLeague, Wiki, LDBC, Graph500, Twitter
% Baselines: Sortledton, Teseo, Aspen, CSR for reference
% Evaluation on Read Performance
% Analytics Query: BFS, PR, SSSP, WCC, TC
% Search and Scan
% Evaluation on Write Performance
% Insert: Single and Scalability
% Evaluation on Concurrent Performance
% Memory Consumption
% Ablation Experiment
% Evaluation on Impact of Partition Size

\textbf{Experiment Setup.} We conduct our experiments on a machine equipped with Intel Xeon Gold 6430 @ 3.40GHz processors. The machine features 256GB of DDR5 memory and a maximum bandwidth of 61.2GB/s. The CPU has a 60MB L3 Cache and 32 cores. We compile the source code using GCC v10.5.0 with O3 optimization. Each experiment is executed five times, and we report the median.

\vspace{2pt}
\noindent\textbf{Graph Datasets. }  We utilize a diverse set of graph datasets to evaluate the systems' time and space performance, as depicted in Table \ref{tab:datasets}. These graphs vary in size and structure and are widely used in previous graph research, enhancing the comprehensiveness of our results. Table \ref{tab:datasets} provides a brief description of each dataset.

\begin{table}[t]
\setlength{\abovecaptionskip}{0pt}
\setlength{\belowcaptionskip}{0pt}
\captionsetup{skip=0pt} 
\centering
\caption{The detailed information of the graph datasets.}
\label{tab:datasets}
\resizebox{0.45\textwidth}{!}{
\begin{tabular}{|c|c|c|c|c|c|c|}
\hline
\textbf{Dataset} & \textbf{Abbr.} & \textbf{$|V|$} & \textbf{$|E|$} & \textbf{Avg. Deg.} & \textbf{Max Deg.} & \textbf{Size(GB)} \\ \hline
LiveJournal & \textit{lj} & 4M & 43M & 17.4 & 14815 & 0.67 \\
Orkut & \textit{ot} & 3M & 117M & 76.3 & 33313 & 1.7 \\
LDBC & \textit{ldbc} & 30M & 176M & 11.8 & 4282595 & 2.84 \\
Graph500 & \textit{g5} & 9M & 260M & 58.7 & 406416 & 4.16 \\
% Uniform-24 & \textit{u5} & 9M & 260M & 58.7 & 103 & 4.16 \\
Twitter & \textit{tw} & 21M & 265M & 24.8 & 698112 & 4.11 \\ \
Friendster & \textit{fr} & 65M & 2B & 55.1 & 5214 & 30.1\\ \hline
\end{tabular}
}
\end{table}

\vspace{2pt}
\noindent\textbf{Graph Analytic Workload. } For graph analytic workloads, we select five algorithms from GAPBS~\cite{beamer2015gap}: PageRank (PR), breadth-first search (BFS), single-source shortest path (SSSP), weakly connected components (WCC), and Triangle Counting (TC). These workloads cover a range of graph data access patterns and represent common graph analytics tasks. The parameters for each algorithm were set according to standard practices, such as 10 iterations for PageRank.

\vspace{2pt}
\noindent\textbf{Systems Under Study.}
We compare RapidStore with several of the latest graph systems: \textbf{Sortledton} \cite{fuchs2022sortledton} is a library using a two-level array to store vertices. It stores small neighbors directly in arrays, while large neighbors are stored in unrolled skip lists. \textbf{Teseo} \cite{de2021teseo} is a library that stores vertices and edges together in a Packed Memory Array (PMA)~\cite{de2019packed} indexed by ART. \textbf{Aspen}~\cite{dhulipala2019low} applies copy-on-write and versions the global graph, focusing on read performance. It uses PAM tree~\cite{sun2018pam} to store edges. \textbf{LiveGraph}~\cite{zhu2019livegraph} stores the vertex's neighbors as logs in timestamp order, improving insertion and \emph{scan} but limiting its functionality. \textbf{Spruce}~\cite{shi2024spruce} is a library that uses an ART-like structure to index vertices and buffer block and sorted arrays to store neighbors independently. However, Spruce does not provide full isolation support and faces OOM issues on the three larger datasets. Therefore, we did not include Spruce. We also consider \textbf{LSMGraph}~\cite{yu2024lsmgraph}, leveraging an LSM tree to optimize disk performance. However, the source code is currently unavailable.  All systems are implemented in C++. Additionally, we include CSR as a baseline to demonstrate optimal static performance.

\vspace{2pt}
\noindent\textbf{Supplement Experiments.} \sun{Due to space constraints, we present additional results including basic read operations, multicore write scalability, batch updates, real insert traces on \emph{ldbc}, and insertions with varying neighbor sizes in the full version of the paper.}

\subsection{Evaluation on Read Performance}

Table \ref{tab:graph_analytics} shows the performance of graph analytics workloads.  \textbf{PR} and \textbf{WCC}, which involve sequential vertex and neighborhood access, demonstrate significant improvements with RapidStore. It reduces latency by 31.86-64.50\% for PR and 43.30-57.91\% for WCC compared to the best-performing system, showcasing its superior scan performance. Teseo ranks second, benefiting from its PMA's locality advantages. \textbf{BFS}, characterized by random vertex access and sequential neighborhood access, sees a latency reduction of 51.71-71.08\% with RapidStore. For \textbf{SSSP}, which requires random neighborhood access, RapidStore reduces latency by 30.81-47.89\%. Sortledton ranks second on most datasets, with other systems showing similar performance trends. These results highlight RapidStore's capability to efficiently handle random access patterns, a critical requirement in dynamic graph applications.

\textbf{TC} requires efficient \textit{intersect} operations, which depend on fast \textit{search} and \textit{scan}. RapidStore achieves latency reductions of 34.85-69.84\%, demonstrating robust performance in these tasks. Notably, RapidStore exhibits balanced performance across workloads, occasionally showing smaller slowdowns on sequential \textit{Scan}-based tasks like PR. This indicates its adaptability to diverse access patterns, making it highly versatile for graph analytics workloads. In summary, RapidStore achieves significantly better read performance than existing systems.

% Table \ref{tab:graph_analytics} presents the performance of graph analytics. \textbf{PR} and \textbf{WCC} contain sequential vertex access and neighborhood access patterns. RapidStore reduces the latency by 31.86-64.50\% and 43.30-57.91\% for PR and WCC, respectively, compared to the best system, demonstrating RapidStore's high scan performance. Teseo ranks second due to its PMA's advantages in locality.
% \textbf{BFS} accesses vertices randomly and their neighbor sets sequentially. RapidStore reduce the latency by 51.71-71.08\% compare to the best system among others. \textbf{SSSP} requires random neighborhood access. RapidStore reduce the latency by 30.81-47.89\%. Sortledton ranks second in most of datasets, with other systems displaying similar performance trends. The performance gains in SSSP demonstrate RapidStore's ability to handle random access patterns, which are prevalent in dynamic graph applications.

% \textbf{TC} necessitates fast \emph{intersect} support, which requires high-speed \emph{search} and \emph{scan}. RapidStore reduce the latency by 34.85-69.84\%. It is noteworthy that RapidStore sometimes exhibits less slowdown than other analytics workloads based on sequential \textit{Scan}, such as PR, indicating a balanced performance across different operations. This suggests that RapidStore's design effectively accommodates various access patterns, making it versatile for graph analytics tasks.

%Additionally, its fast \emph{Search} partially offsets the cost of copy-on-write during updates.

\subsection{Evaluation on Write Performance} \label{sec:write_exp}

\begin{figure}[t]
	\setlength{\abovecaptionskip}{0pt}
	\setlength{\belowcaptionskip}{0pt}
		\captionsetup[subfigure]{aboveskip=0pt,belowskip=0pt}
	% \centering
    \includegraphics[width=0.45\textwidth]{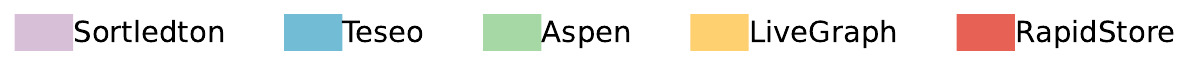}
 %    \\
	\begin{subfigure}[h]{0.45\textwidth}
		\centering
		\includegraphics[width=\textwidth]{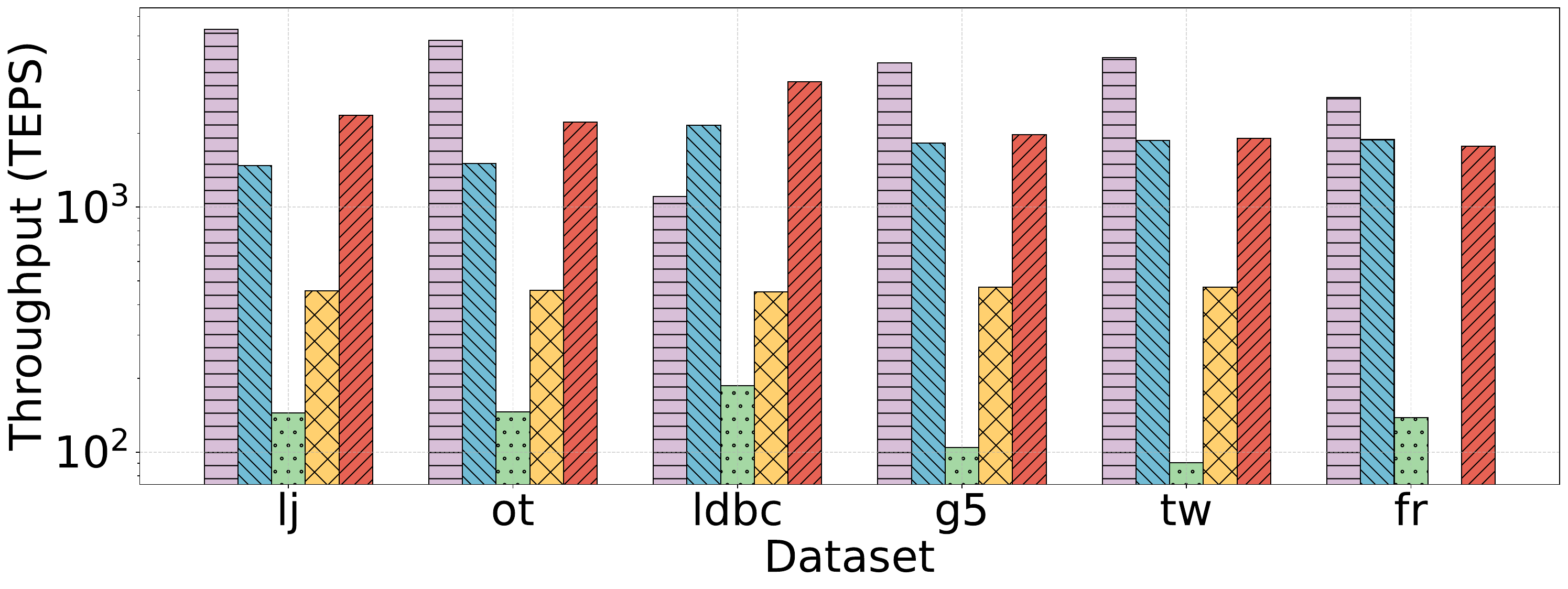}
	   \caption{Performance of edge insertion.}
	\label{fig:insert}
	\end{subfigure}
    \begin{subfigure}[h]{0.45\textwidth}
		\centering
		\includegraphics[width=\textwidth]{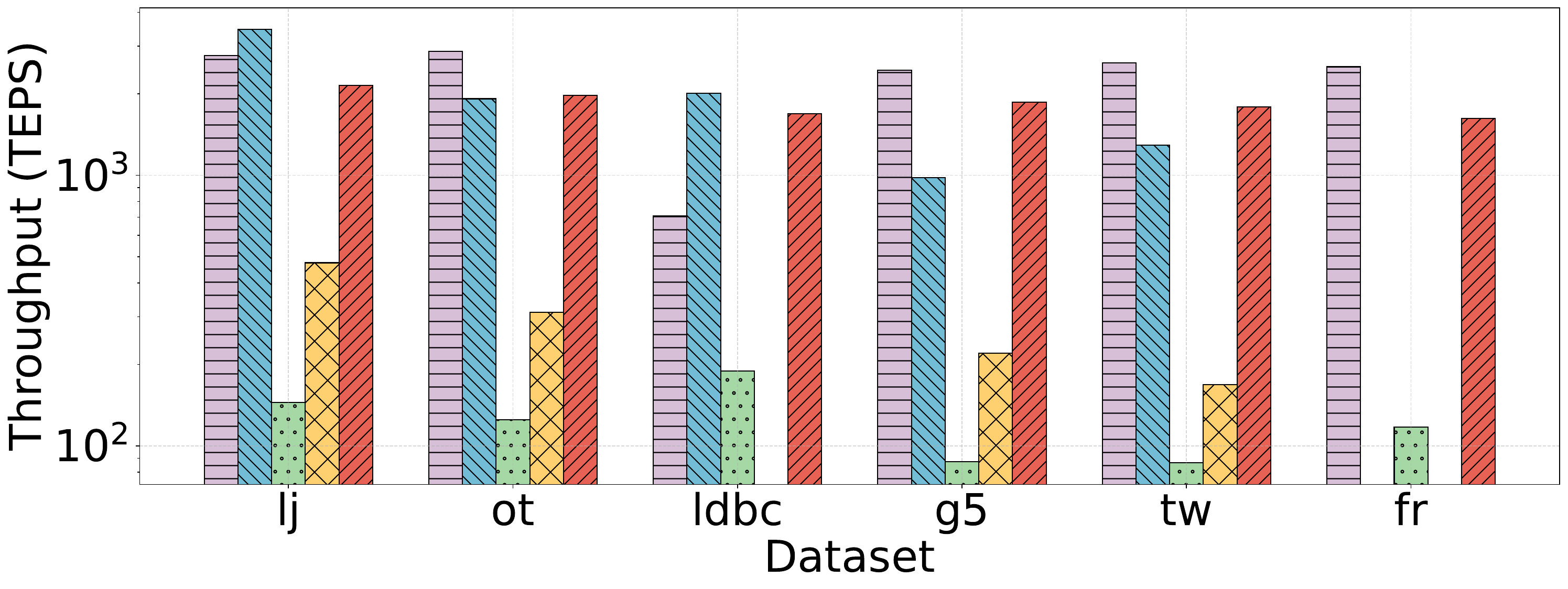}
        \caption{Performance of edge update.}
        \label{fig:update}
	\end{subfigure}
	\caption{Performance of write operations.}
\end{figure}
In this section, we evaluate the system's edge insertion and deletion performance. While RapidStore supports property updates, this was excluded from the evaluation because Sortledton crashes during updates, and Teseo and Aspen do not support this functionality.

\vspace{2pt}
\noindent\textbf{Insert.} We evaluate insertion performance by inserting edges in random order using 32 writer threads. For comparison, Aspen, designed for single-writer execution, is also tested with 32 threads. The results are shown in Figure~\ref{fig:insert}. Sortledton achieves the highest throughput, outperforming all other systems. RapidStore ranks second, with slightly lower throughput due to the overhead of copy-on-write compared to direct edge insertion. However, it benefits from optimized \emph{search} and ID compression, which accelerate insertion preparation and reduce copy cost, resulting in only a 1.9–2.2x slowdown relative to Sortledton. On the \emph{ldbc} dataset, RapidStore achieves the best performance, as Sortledton suffers from severe lock contention caused by high skewness. Teseo ranks third, with throughput occasionally interrupted by PMA rebalancing. LiveGraph shows the lowest throughput, limited by inefficient \emph{search} operations. Overall, RapidStore effectively reduces insertion overhead through constant-time \emph{search} and compact ID encoding, particularly for high-degree vertices.

\vspace{2pt}
\noindent\textbf{Update.} This experiment evaluates update performance by repeatedly deleting and re-inserting 20\% of edges over five rounds, generating version chains for the modified elements. Figure~\ref{fig:update} shows the results. Sortledton’s throughput drops by 34.01\% compared to pure insertion, due to the overhead of managing version chains. Teseo performs well on small datasets but degrades on larger ones, where frequent updates trigger costly background garbage collection. In contrast, RapidStore and the other two systems show minimal performance change. RapidStore’s throughput drops by only 14.67\%, benefiting from constant-time \emph{search}. Overall, RapidStore delivers stable and competitive write performance, slightly behind Sortledton except on \emph{ldbc}, where Sortledton suffers from severe lock contention. These results highlight the sensitivity of Sortledton and Teseo to GC and locking, whereas RapidStore maintains consistent performance despite copy-on-write overhead.

\subsection{Evaluation on Concurrent Read and Write} \label{sec:concurrent_exp} 

This section evaluates concurrent read and write performance, analyzing the impact of writers on readers and vice versa. The goal is to assess the concurrency capabilities of the systems under study. To reduce the influence of memory bandwidth limitations, property storage is disabled for all systems. Aspen is excluded from this evaluation as it supports only a single writer.

\begin{figure}[t]
	\setlength{\abovecaptionskip}{0pt}
	\setlength{\belowcaptionskip}{0pt}
		\captionsetup[subfigure]{aboveskip=0pt,belowskip=0pt}
	\centering
    \includegraphics[width=0.37\textwidth]{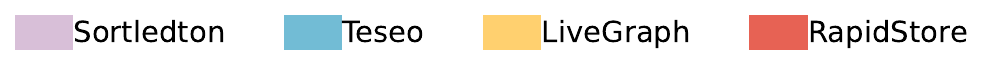}
    \\
	\begin{subfigure}[t]{0.45\textwidth}
		\centering
		\includegraphics[width=\textwidth]{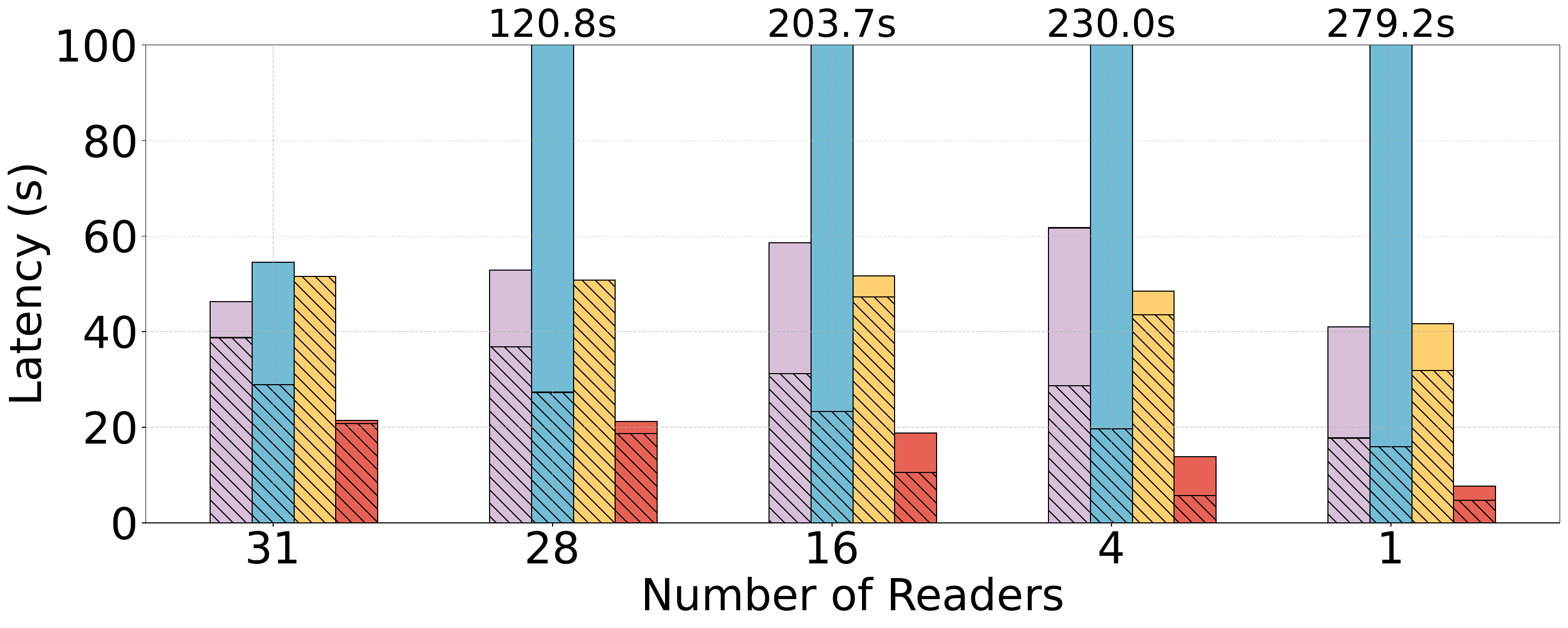}
		\caption{PR on \textit{lj}.}
		\label{fig:concurrent_read_lj}
	\end{subfigure}
        \begin{subfigure}[t]{0.45\textwidth}
		\centering
		\includegraphics[width=\textwidth]{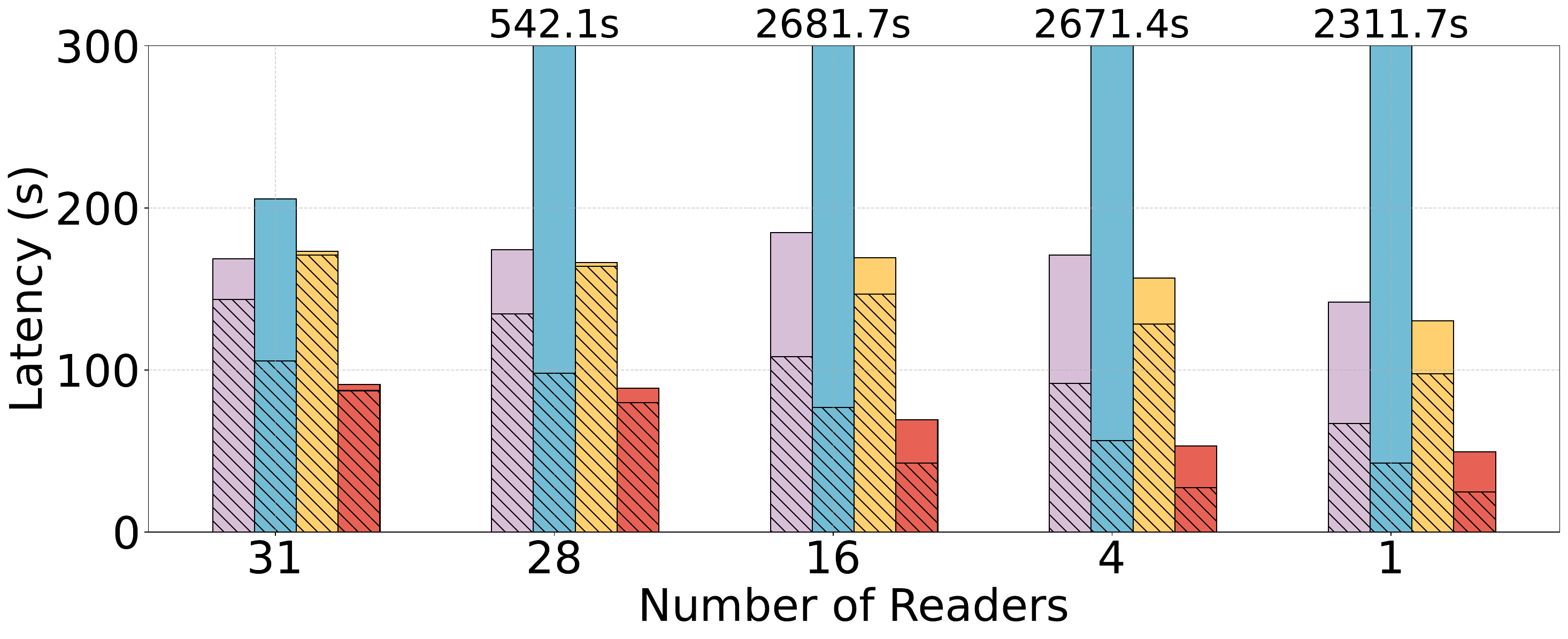}
		\caption{PR on \textit{g5}.}
		\label{fig:concurrent_read_g5}
	\end{subfigure}
	\caption{Read performance under varying numbers of readers and writers (total threads fixed at 32). The shadowed bars represent the latency of readers in the absence of writers.}
	\label{fig:concurrent_read}
\end{figure}

\vspace{2pt}
\noindent\textbf{Read with Concurrent Writers.} We first load the graph and then run a mixed workload with 32 threads, combining PR readers and edge update writers (delete + re-insert) to keep the graph size stable. As shown in Figure~\ref{fig:concurrent_read}, RapidStore consistently achieves the lowest latency and best performance. While Teseo and Sortledton perform well under isolated workloads, their read performance degrades significantly under concurrent access due to severe lock contention between readers and writers. In contrast, RapidStore experiences minimal interference, showing negligible performance drop even with 31 or 28 writers.

Although RapidStore’s decoupled design avoids reader-writer blocking, its read performance declines when the number of writers increases to 28 or 31 (i.e., only 4 or 1 readers remain). This is due to high scan throughput saturating memory bandwidth. As shown in Figure~\ref{fig:concurrent_write_bandwidth}, adding writers introduces bandwidth contention, making RapidStore sensitive to memory bandwidth pressure. In summary, Teseo and Sortledton are limited by lock contention, while RapidStore is constrained by hardware bandwidth, demonstrating its ability to fully utilize system resources. It’s worth noting that such extreme write-heavy workloads are rare in practice, as: 1) graph applications are typically read-intensive, and 2) adding more writers yields diminishing returns due to increased contention (see 
evaluation on Multicore Scalability in the full version).

Interestingly, LiveGraph shows little impact from concurrent writers. This is because its slow scan and insert operations lead to low bandwidth usage. Additionally, it stores neighbor sets in append order, allowing readers to record set lengths without locking. However, as shown in prior experiments, LiveGraph suffers from poor overall performance due to its unsorted neighbor sets.

\begin{figure}[t]
	\setlength{\abovecaptionskip}{0pt}
	\setlength{\belowcaptionskip}{-10pt}
		\captionsetup[subfigure]{aboveskip=0pt,belowskip=0pt}
	\centering
    \includegraphics[width=0.37\textwidth]{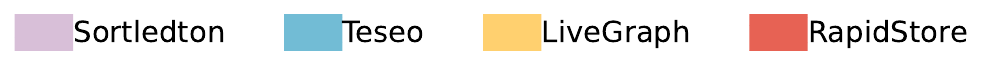}
    \\
    	\begin{subfigure}[h]{0.45\textwidth}
		\centering
		\includegraphics[width=\textwidth]{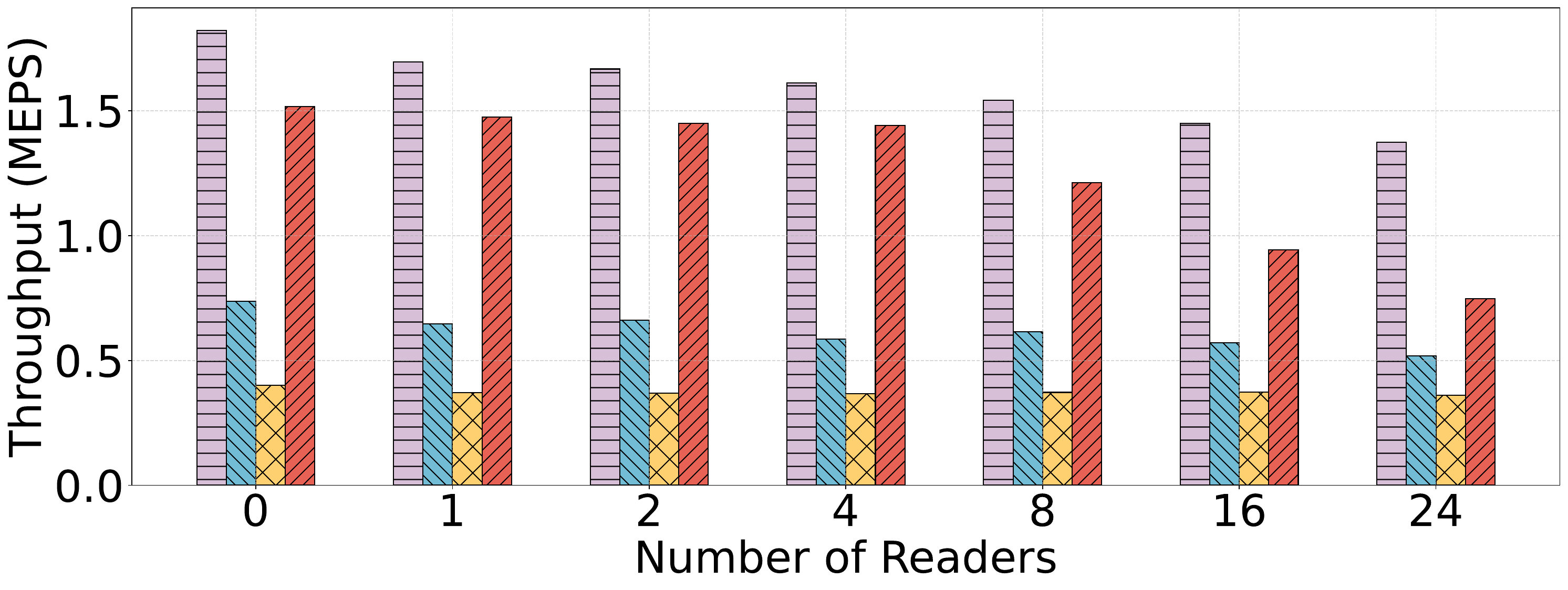}
		\caption{Insertion on \textit{lj}.}
		\label{fig:concurrent_write_lj}
	\end{subfigure}
        \begin{subfigure}[h]{0.45\textwidth}
		\centering
		\includegraphics[width=\textwidth]{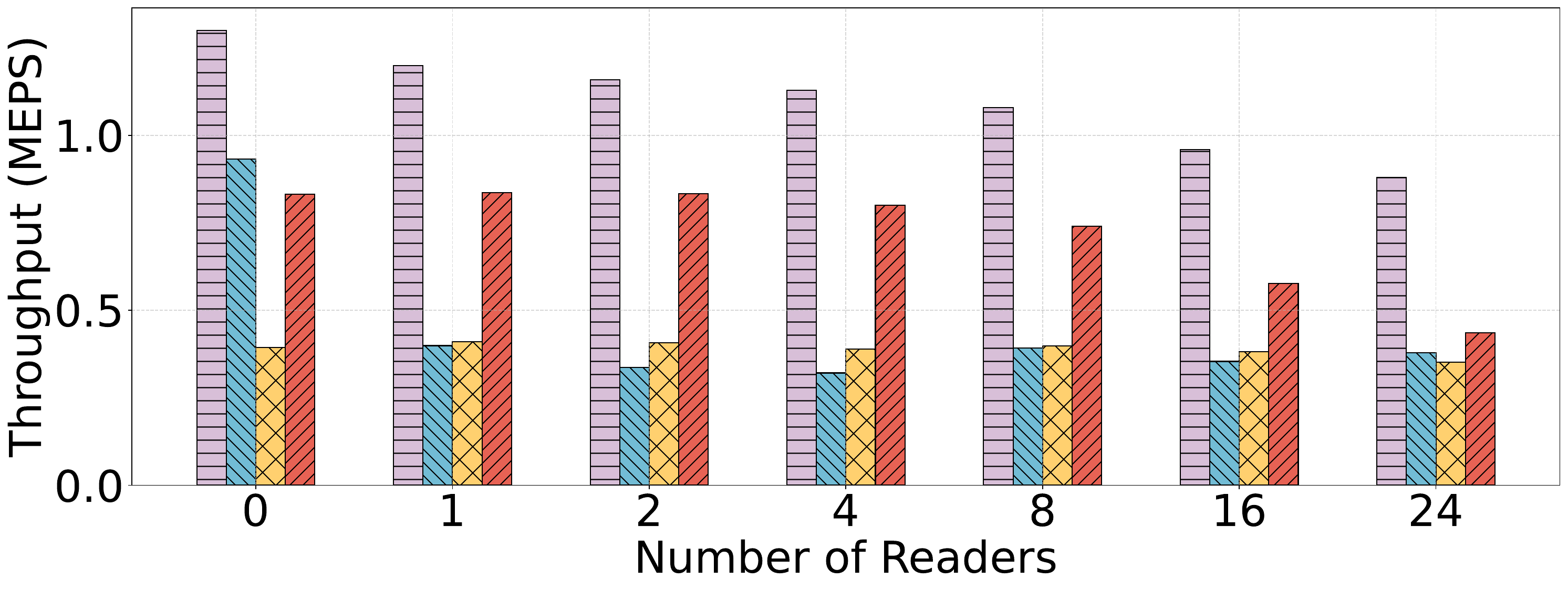}
		\caption{Insertion on \textit{g5}.}
		\label{fig:concurrent_write_g5}
	\end{subfigure}
 %    \includegraphics[width=0.47\textwidth]{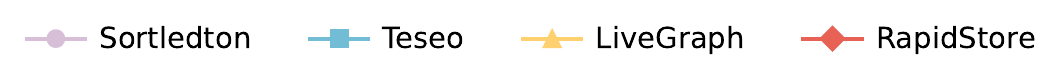}
 %    \\
 %    	\begin{subfigure}[h]{0.224\textwidth}
	% 	\centering
	% 	\includegraphics[width=\textwidth]{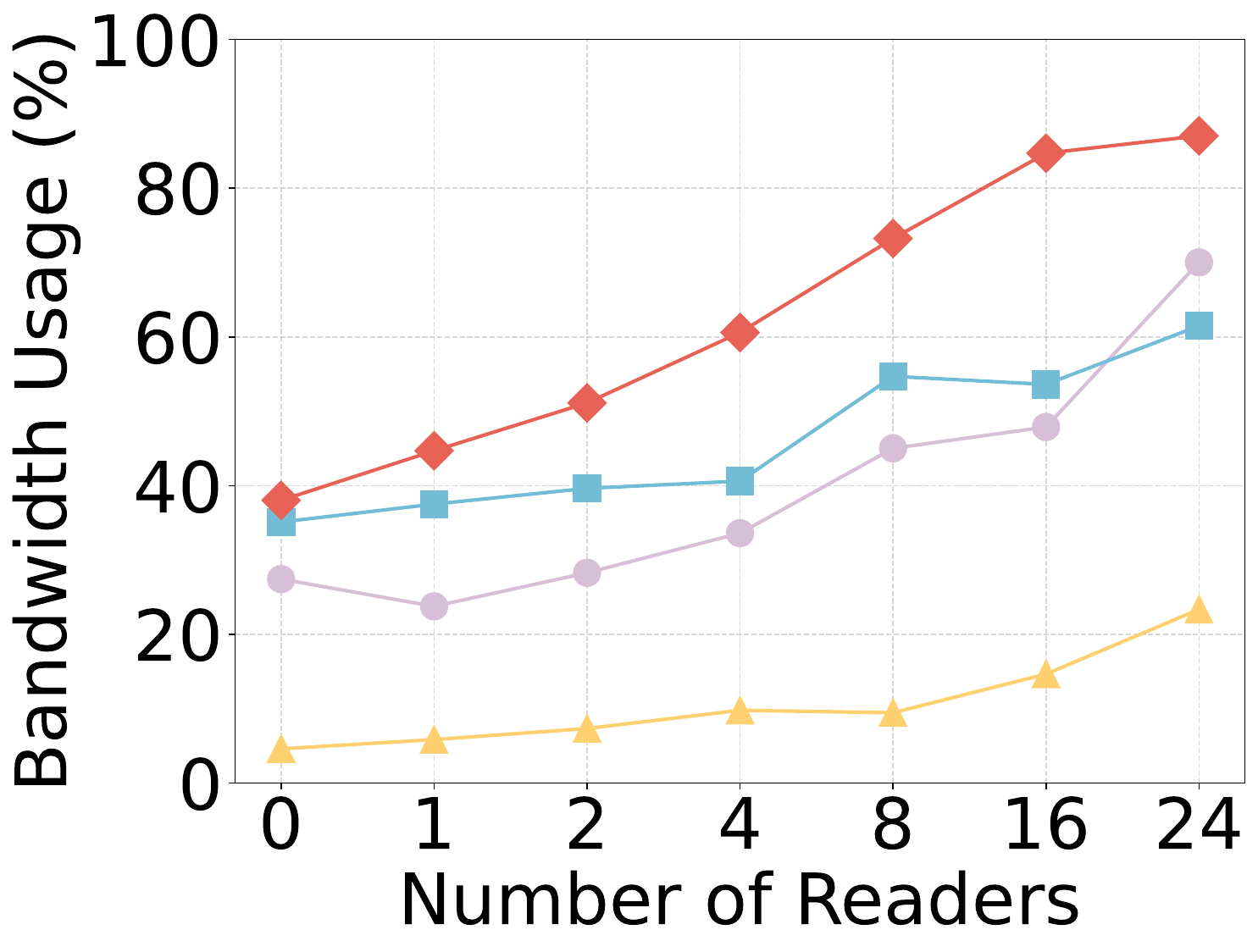}
	% 	\caption{Bandwidth on \textit{lj}.}
	% 	\label{fig:concurrent_write_bandwidth_lj}
	% \end{subfigure}
 %        \begin{subfigure}[h]{0.23\textwidth}
	% 	\centering
	% 	\includegraphics[width=\textwidth]{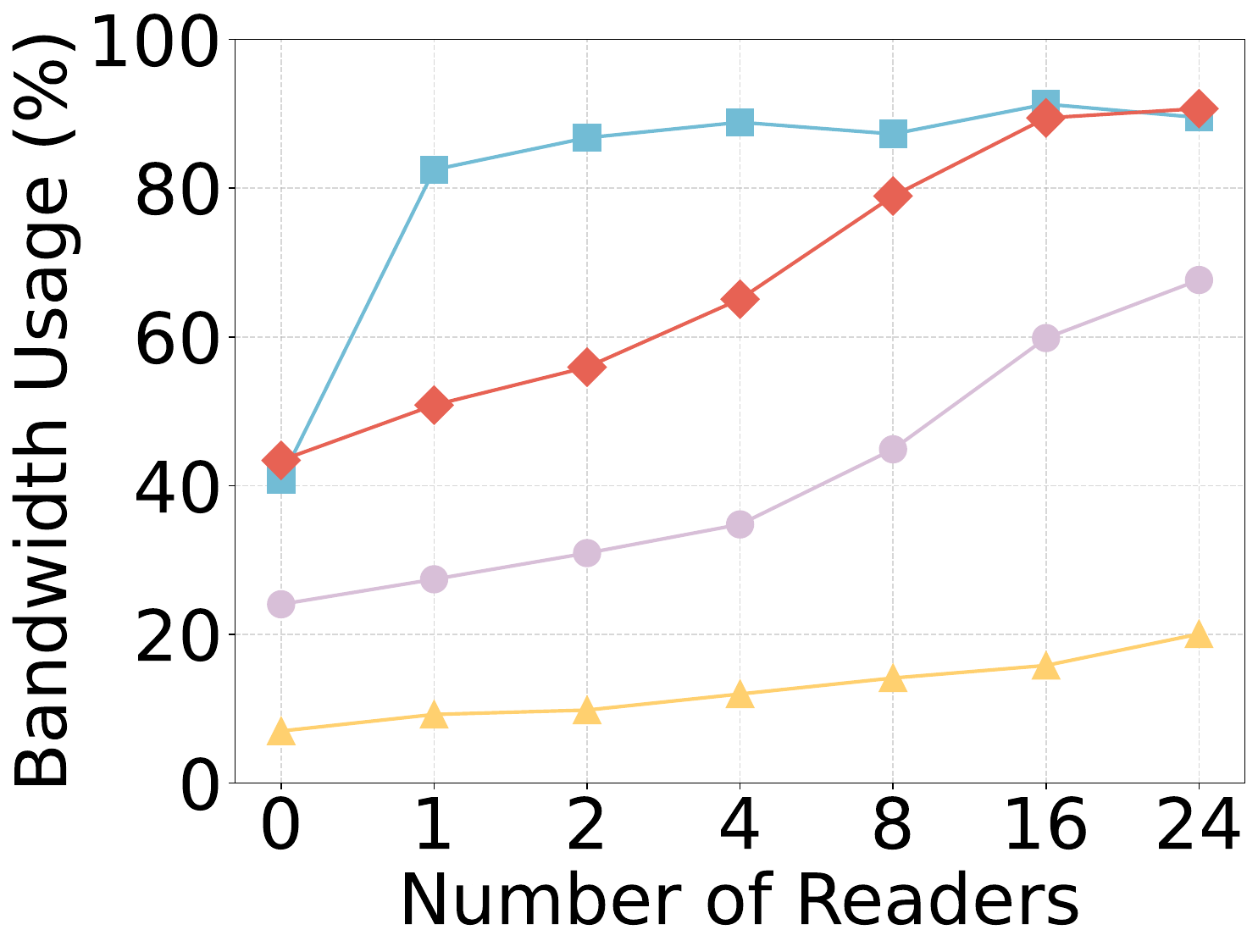}
	% 	\caption{Bandwidth on \textit{g5}.}
	% 	\label{fig:concurrent_write_bandwidth_g5}
	% \end{subfigure}
	\caption{Insertion performance under varying numbers of readers. Each reader executes PR independently. The number of writers is fixed at 8.}
	\label{fig:concurrent_write}
\end{figure}

\begin{figure}[t]


	\setlength{\abovecaptionskip}{0pt}
	\setlength{\belowcaptionskip}{0pt}
	\setlength{\textfloatsep}{0pt}
		\captionsetup[subfigure]{aboveskip=0pt,belowskip=0pt}
	\centering
    \includegraphics[width=0.47\textwidth]{img/figure/figure10/legend.pdf}
    \\
    	\begin{subfigure}[h]{0.224\textwidth}
		\centering
		\includegraphics[width=\textwidth]{img/figure/figure10/figure10_a.pdf}
		\caption{Bandwidth on \textit{lj}.}
		\label{fig:concurrent_write_bandwidth_lj}
	\end{subfigure}
        \begin{subfigure}[h]{0.23\textwidth}
		\centering
		\includegraphics[width=\textwidth]{img/figure/figure10/figure10_b.pdf}
		\caption{Bandwidth on \textit{g5}.}
		\label{fig:concurrent_write_bandwidth_g5}
	\end{subfigure}
	\caption{Corresponding avg. memory bandwidth usage rate in concurrent read and write experiments.}
	\label{fig:concurrent_write_bandwidth}
\end{figure}

\vspace{2pt}
\noindent\textbf{Write with Concurrent Readers.} We initially load 80\% of the edges and launch 8 writer threads. The number of readers varies from 0 to 24. Writers insert the remaining edges, while readers execute PageRank. Figures~\ref{fig:concurrent_write} and ~\ref{fig:concurrent_write_bandwidth} report insertion throughput and memory bandwidth utilization, respectively.

As the number of readers increases from 0 to 4, RapidStore’s insertion throughput drops by no more than 5.06\%, demonstrating strong resilience to reader interference. When \#readers exceeds 8, throughput declines further as RapidStore approaches the memory bandwidth limit due to its high scan performance. In contrast, Sortledton’s throughput drops by 13.29\% when \#readers increases from 0 to 4, despite low memory bandwidth usage. Teseo degrades more severely, with throughput reductions of 29.72\% and 57.22\% on the two datasets, primarily due to MVCC overhead. These results highlight the effectiveness of RapidStore’s decoupled design, which avoids reader-writer lock contention and fully utilizes available memory bandwidth. Its performance degradation is not due to contention, but rather due to saturating hardware limits.

\vspace{2pt}
\noindent\textbf{Summary.} RapidStore achieves high performance and efficient hardware utilization under concurrent read-write workloads. It delivers stable read and write performance in typical read-intensive configurations, such as 28 readers with 4 writers and 24 readers with 8 writers.

% \begin{table}[t]
% \setlength{\abovecaptionskip}{0pt}
% \setlength{\belowcaptionskip}{0pt}
% \caption{Memory consumption of systems. For CSR, we report consumption explicitly. For other systems, we report the scale factor over CSR.}
% \label{tab:memory_consumption}

% \resizebox{\columnwidth}{!}{
% \begin{tabular}{|c|c|c|c|c|c|c|}
% \hline
% \multicolumn{1}{|c|}{\textbf{Dataset}} & \textit{lj} & \textit{ot} & \textit{ldbc} & \textit{g5} & \textit{tw} & \textit{fr} \\ \hline
% \multicolumn{1}{|c|}{CSR} & 0.67GB & 1.88GB & 2.84GB & 4.16GB & 4.11GB & 30.10GB \\ \hline
% Sortledton & 4.51 & 3.56 & 5.77 & 4.53 & 4.83 & 3.99 \\
% Teseo & 10.22 & 6.69 & 9.86 & 4.44 & 5.21 & 3.30 \\
% LiveGraph & 15.07 & 11.65 & 9.44 & 7.83 & 7.14 & OOM \\
% Aspen & 40.16 & 18.72 & 9.93 & 6.04 & 6.38 & 2.52 \\
% \textbf{RapidStore} & \textbf{2.88} & \textbf{2.52} & \textbf{2.52} & \textbf{2.39} & \textbf{2.67} & \textbf{2.45} \\ \hline
% \end{tabular}
% }
% \end{table}

\subsection{Partition Size Evaluation} \label{sec:vertex_group_exp}

\begin{figure}[t]
	\setlength{\abovecaptionskip}{0pt}
	\setlength{\belowcaptionskip}{0pt}
		\captionsetup[subfigure]{aboveskip=0pt,belowskip=0pt}
	\centering
    \includegraphics[width=0.32\textwidth]{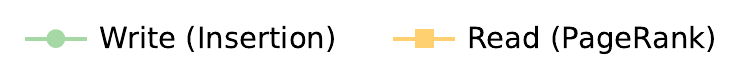}
    \\
	\begin{subfigure}[t]{0.23\textwidth}
		\centering
		\includegraphics[width=\textwidth]{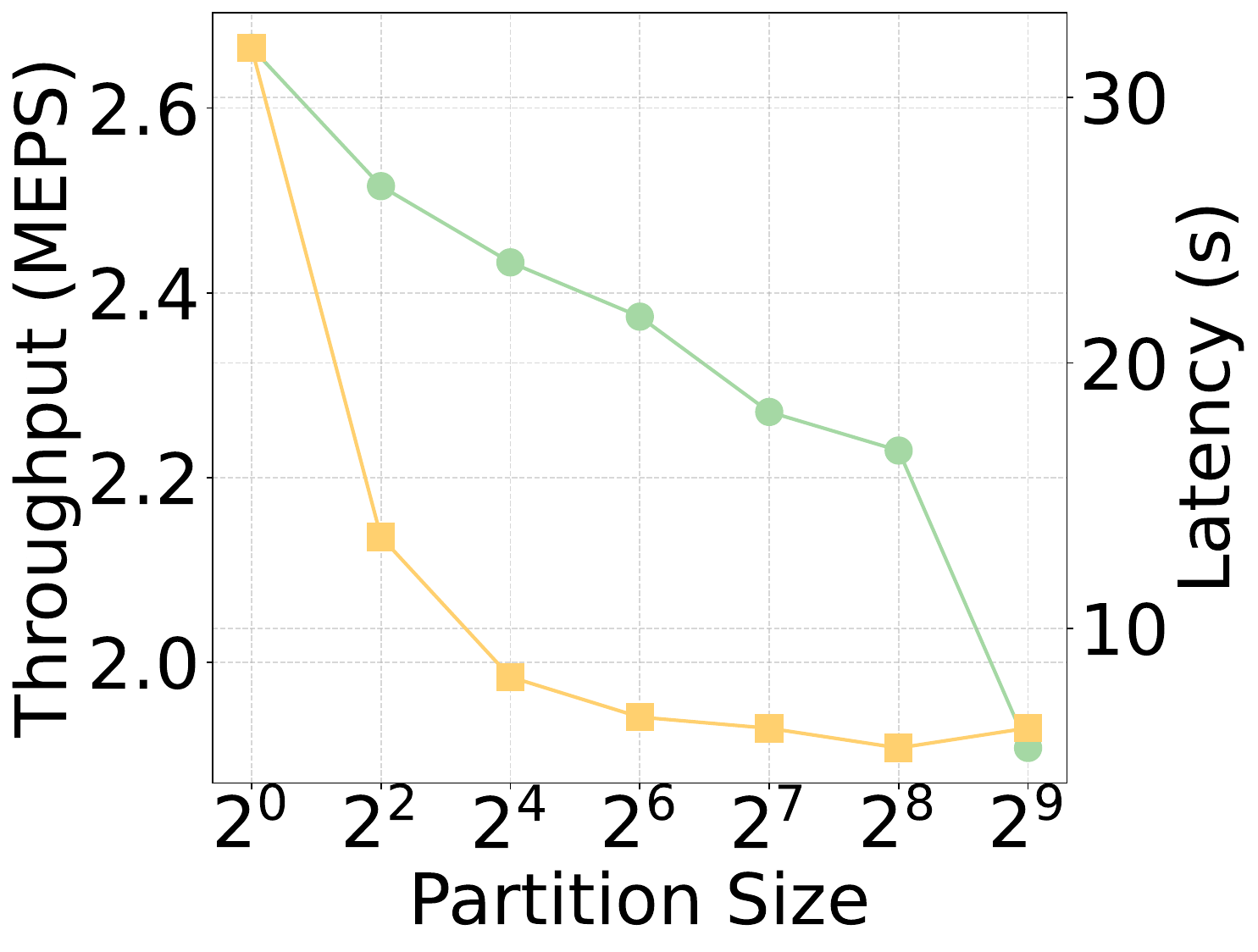}
		\caption{Performances on \textit{lj}.}
		\label{fig:group_size_lj}
	\end{subfigure}
        \begin{subfigure}[t]{0.23\textwidth}
		\centering
		\includegraphics[width=\textwidth]{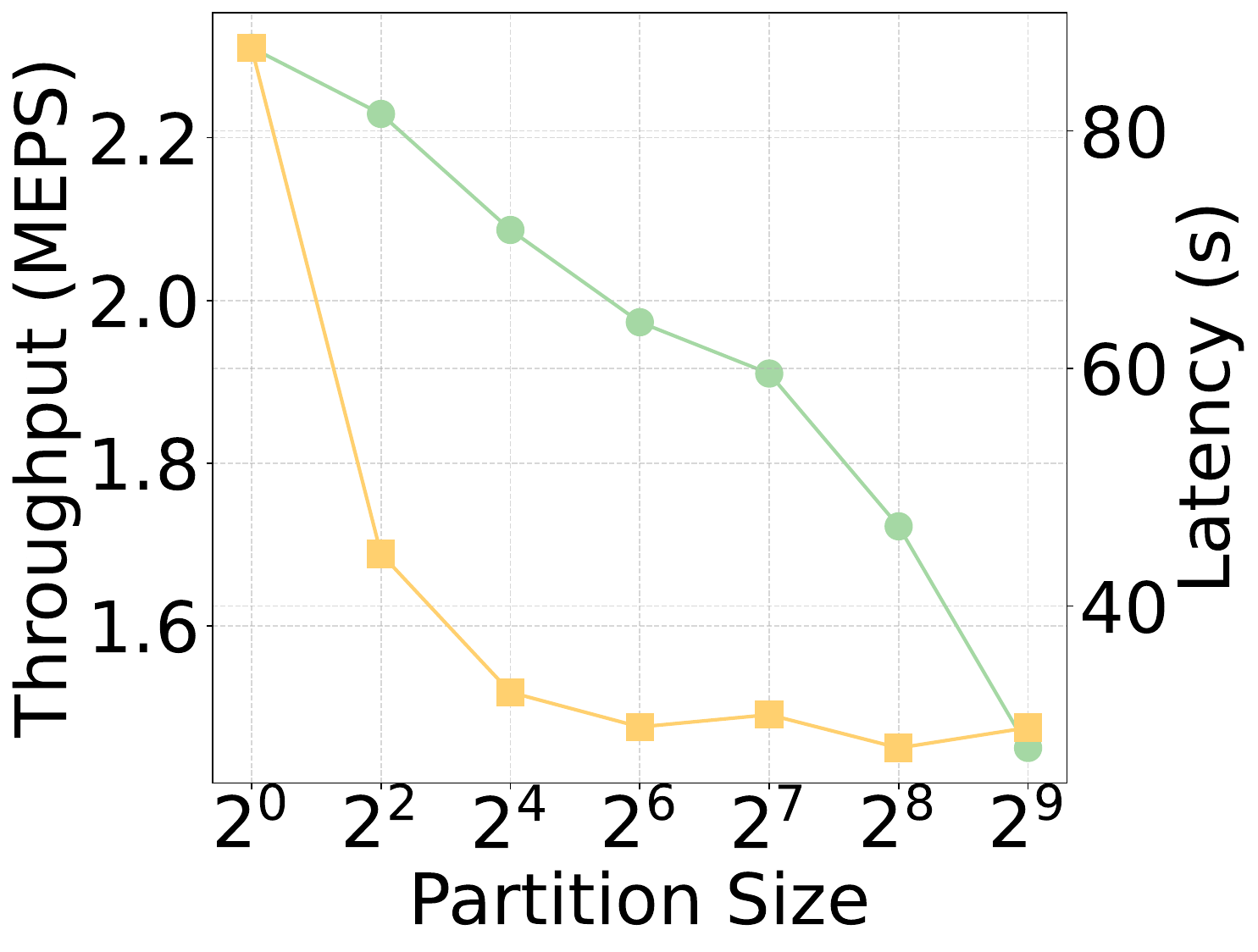}
		\caption{Performances on \textit{g5}.}
		\label{fig:group_size_g5}
	\end{subfigure}
	\caption{\sun{Write and read performance of RapidStore with varying partition sizes ($|P|$).}}
	\label{fig:group_size}
    \setlength{\textfloatsep}{0pt}
\end{figure}

\sun{We evaluate the impact of partition size $|P|$, which determines the snapshot granularity. To measure write and read performance, we use 32-writer insertion throughput (in million edges per second, MEPS) and PageRank (PR) latency, respectively. Figure~\ref{fig:group_size} shows the results. As $|P|$ increases, write performance decreases due to higher lock contention—each partition covers more vertices, increasing the chance of conflicts. In contrast, read performance improves initially, as more low-degree vertices are colocated, enhancing locality. However, beyond a certain point, PR latency plateaus or slightly increases, while insertion throughput continues to drop. These results confirm our analysis in Section~\ref{sec:concurrency_control}. Based on this, we set $|P| = 64$ by default, though users can tune it based on specific graph structures and workloads.}

\subsection{Ablation Study}

\begin{table}[t]
\setlength{\abovecaptionskip}{0pt}
\setlength{\belowcaptionskip}{0pt}
\captionsetup[subfigure]{aboveskip=0pt,belowskip=0pt}
\centering
\caption{\sun{Ablation study results. Insertion throughput is measured in thousands of edges per second (TEPS), and graph analytics latency is measured in seconds.}}
\resizebox{\columnwidth}{!}{
\begin{tabular}{|c|l|c|cccc|l|c|cccc|}
\hline
\multirow{2}{*}{\textbf{Dataset}} & \multicolumn{1}{c|}{\multirow{2}{*}{\textbf{Method}}} & \multirow{2}{*}{\textbf{Insert (TEPS)}} & \multicolumn{4}{c|}{\textbf{Analytics Query (s)}} \\ \cline{4-7} 
 & \multicolumn{1}{c|}{} &  & \multicolumn{1}{c|}{\textbf{BFS}} & \multicolumn{1}{c|}{\textbf{PR}} & \multicolumn{1}{c|}{\textbf{SSSP}} & \textbf{WCC} \\ \hline
\multirow{5}{*}{\textit{lj}} & ART & \textbf{2445.59} & \multicolumn{1}{c|}{10.92} & \multicolumn{1}{c|}{59.84} & \multicolumn{1}{c|}{11.01} & 17.04 \\
 & ART + SC & 2119.22 & \multicolumn{1}{c|}{7.31} & \multicolumn{1}{c|}{38.97} & \multicolumn{1}{c|}{5.42} & 11.55 \\
 & C-ART + SC & 1991.25 & \multicolumn{1}{c|}{7.30} & \multicolumn{1}{c|}{37.77} & \multicolumn{1}{c|}{5.25} & 10.81 \\
 & C-ART + SC + VEC & 1710.56 & \multicolumn{1}{c|}{2.42} & \multicolumn{1}{c|}{11.25} & \multicolumn{1}{c|}{2.73} & 2.59 \\
 & C-ART + SC + CI & 2373.05 & \multicolumn{1}{c|}{\textbf{1.05}} & \multicolumn{1}{c|}{\textbf{5.90}} & \multicolumn{1}{c|}{\textbf{2.31}} & \textbf{1.97} \\ \hline
\multirow{5}{*}{\textit{g5}} & ART & 1592.24 & \multicolumn{1}{c|}{25.22} & \multicolumn{1}{c|}{246.28} & \multicolumn{1}{c|}{32.21} & 45.98 \\
 & ART + SC & 1562.43 & \multicolumn{1}{c|}{22.85} & \multicolumn{1}{c|}{243.93} & \multicolumn{1}{c|}{27.46} & 41.86 \\
 & C-ART + SC & 1797.23 & \multicolumn{1}{c|}{9.73} & \multicolumn{1}{c|}{98.81} & \multicolumn{1}{c|}{13.20} & 18.87 \\
 & C-ART + SC + VEC & 1711.60 & \multicolumn{1}{c|}{4.18} & \multicolumn{1}{c|}{47.12} & \multicolumn{1}{c|}{8.23} & 9.06 \\
 & C-ART + SC + CI & \textbf{1973.13} & \multicolumn{1}{c|}{\textbf{2.71}} & \multicolumn{1}{c|}{\textbf{29.28}} & \multicolumn{1}{c|}{\textbf{6.47}} & \textbf{5.81} \\ \hline
\end{tabular}
}
\label{tab:ablation}
\end{table}

\sun{We conduct an ablation study to evaluate the performance impact of the three key techniques proposed in this paper: subgraph-centric concurrency control (denoted as SC), C-ART for the graph store, and the clustered index (denoted as CI) for managing low-degree vertices. As ART is a well-studied data structure and per-edge versioning is widely adopted in dynamic graph systems, we use ART combined with per-edge versioning as the baseline. In addition, since Sortledton optimizes performance by storing neighbor sets of low-degree vertices in separate vectors (denoted as VEC), we incorporate this technique into RapidStore for comparison with our clustered index design.}

\sun{Table~\ref{tab:ablation} presents the results. Enabling subgraph-centric concurrency control (SC) improves read performance by eliminating blocking on read queries caused by concurrent writes and per-edge versioning, while slightly affecting write performance due to coarser-grained versioning. This confirms both our analysis and the effectiveness of SC. Replacing ART with C-ART significantly improves analytics performance on \emph{g5}, while the improvement on \emph{lj} is limited. This is because \emph{lj} is a sparse graph where most vertices have small degrees, whereas \emph{g5} contains many high-degree vertices that pose greater performance challenges. Incorporating VEC improves read performance, demonstrating the benefit of using specialized structures for low-degree vertices. However, VEC degrades write performance due to the overhead of managing separated vectors. In contrast, our clustered index (CI) not only improves read performance significantly but also enhances write performance. These results validate the effectiveness of each individual technique.}

\subsection{Memory Consumption} \label{sec:memory_exp}

Figure \ref{fig:memory_consumption} summarizes the memory consumption of the systems after inserting all edges, measured using the \emph{resident set size} (RSS) reported by the operating system. RapidStore is the most memory-efficient, saving up to 56.34\% of memory compared to other systems. This efficiency is achieved through ID compression and the avoidance of per-edge versioning. Such memory savings are particularly valuable for applications with limited resources, highlighting RapidStore's suitability for memory-constrained environments.

\section{Related Work}\label{sec:related_work}

\noindent\textbf{Tree-like Structures.} Tree-like structures are crucial for designing dynamic graph systems. Terrace~\cite{pandey2021terrace} employs a \emph{Structure Hierarchy} and uses B+ trees to store large neighborhoods efficiently. Additionally, binary search trees have been utilized for graph storage~\cite{dhulipala2022pac, wheatman2018packed}. Various radix tree variants have also been explored in graph contexts~\cite{ribeiro2014g}. These structures enhance the performance of dynamic graph operations by optimizing data retrieval and storage.

\vspace{2pt}
\noindent\textbf{Graph Systems.} The pioneering works Ligra~\cite{shun2013ligra} and Ligra+\cite{shun2015smaller} set the stage for high static read performance in graph processing. Other systems, such as Stinger~\cite{ediger2012stinger} and NXGraph~\cite{chi2016nxgraph}, implement \emph{Segmentation} techniques to improve graph processing efficiency. However, these early systems face significant challenges in efficiently handling concurrent read and write operations. In contrast, LLAMA~\cite{mccoll2014performance} introduces delta Storage to minimize update costs while ensuring snapshot isolation. Pensieve~\cite{ying2020pensieve} further enhances performance by applying delta storage to high-degree vertices, effectively balancing read and write operations. Similarly, GraphOne~\cite{kumar2020graphone} adopts a logging approach for updates, transforming them into a compact structure. LSMGraph~\cite{yu2024lsmgraph} is a disk-based system that combines LSM Tree and CSR, providing efficient read and write. Its in-memory part uses CSR and skip lists to store low- and high-degree vertices. Dynamic graph systems such as KickStarter~\cite{vora2017kickstarter}, GraphBolt~\cite{mariappan2019graphbolt}, and RisGraph~\cite{feng2021risgraph} are specifically designed for continuous graph processing. These systems enhance continuous graph algorithms~\cite{li2020space, zhang2022online} by storing intermediate results during computation, significantly reducing re-computation costs after graph evolution.

\begin{figure}[t]
	\setlength{\abovecaptionskip}{0pt}
	\setlength{\belowcaptionskip}{0pt}
		\captionsetup[subfigure]{aboveskip=0pt,belowskip=0pt}
	\centering
    \includegraphics[width=0.45\textwidth]{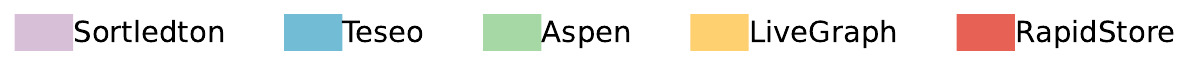}
    \\
	\begin{subfigure}[t]{0.45\textwidth}
		\centering
		\includegraphics[width=\textwidth]{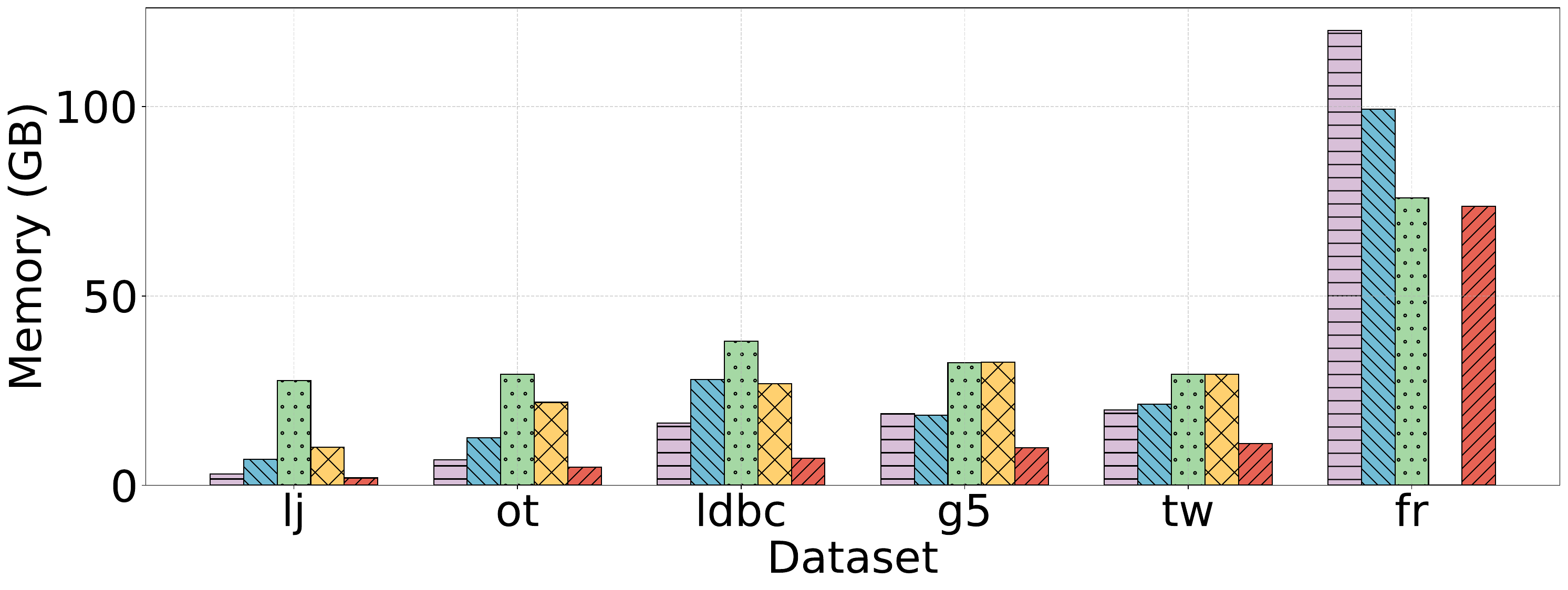}
	\end{subfigure}
	\caption{\sun{Memory consumption of systems.}}
	\label{fig:memory_consumption}
\end{figure}

\vspace{2pt}
\noindent\textbf{Graph Databases.} Several graph databases, including Neo4J, Virtuoso, GraphFlow~\cite{kankanamge2017graphflow}, and Kùzu~\cite{kuzu:cidr}, adopt specialized internal data structures (e.g., linked lists) to store and traverse graph data efficiently. An alternative approach augments relational databases with additional indexes~\cite{bronson2013tao, corbett2013spanner, shen2023bridging}, offering flexibility for diverse workloads but often at the cost of lower performance on graph-specific operations. Our work improves the in-memory dynamic graph storage through system-level optimizations such as subgraph-centric MVCC and C-ART, which reduce contention and improve memory access locality. These techniques could be integrated into existing graph systems to improve both throughput and concurrency. Furthermore, RapidStore’s decoupled architecture and memory-efficient indexing offer promising directions for building high-performance, general-purpose graph storage engines.

\section{Conclusion} \label{sec:conclusion}
In this paper, we present RapidStore, an efficient in-memory dynamic graph storage system optimized for concurrent queries. RapidStore introduces a decoupled design that separates the management of read and write queries while isolating version data from graph data. This is achieved through a subgraph-centric concurrency control and an efficient multi-version graph store.  RapidStore significantly reduces query execution times compared to existing solutions, maintains high concurrency performance with very well hardware utilization, and offers competitive write performance with substantially lower memory consumption. These features establish RapidStore as a robust and effective dynamic graph data management solution. \sun{RapidStore is a high-performance in-memory dynamic graph storage system. Extending it to support persistence on emerging storage media, such as NVMe SSDs and cloud storage, is a promising direction for future research.}

%\clearpage

\bibliographystyle{ACM-Reference-Format}
\bibliography{reference}

%%% -*-BibTeX-*-
%%% Do NOT edit. File created by BibTeX with style
%%% ACM-Reference-Format-Journals [18-Jan-2012].

\begin{thebibliography}{49}

%%% ====================================================================
%%% NOTE TO THE USER: you can override these defaults by providing
%%% customized versions of any of these macros before the \bibliography
%%% command.  Each of them MUST provide its own final punctuation,
%%% except for \shownote{}, \showDOI{}, and \showURL{}.  The latter two
%%% do not use final punctuation, in order to avoid confusing it with
%%% the Web address.
%%%
%%% To suppress output of a particular field, define its macro to expand
%%% to an empty string, or better, \unskip, like this:
%%%
%%% \newcommand{\showDOI}[1]{\unskip}   % LaTeX syntax
%%%
%%% \def \showDOI #1{\unskip}           % plain TeX syntax
%%%
%%% ====================================================================

\ifx \showCODEN    \undefined \def \showCODEN     #1{\unskip}     \fi
\ifx \showDOI      \undefined \def \showDOI       #1{#1}\fi
\ifx \showISBNx    \undefined \def \showISBNx     #1{\unskip}     \fi
\ifx \showISBNxiii \undefined \def \showISBNxiii  #1{\unskip}     \fi
\ifx \showISSN     \undefined \def \showISSN      #1{\unskip}     \fi
\ifx \showLCCN     \undefined \def \showLCCN      #1{\unskip}     \fi
\ifx \shownote     \undefined \def \shownote      #1{#1}          \fi
\ifx \showarticletitle \undefined \def \showarticletitle #1{#1}   \fi
\ifx \showURL      \undefined \def \showURL       {\relax}        \fi
% The following commands are used for tagged output and should be
% invisible to TeX
\providecommand\bibfield[2]{#2}
\providecommand\bibinfo[2]{#2}
\providecommand\natexlab[1]{#1}
\providecommand\showeprint[2][]{arXiv:#2}

\bibitem[\protect\citeauthoryear{Aberger, Lamb, Tu, N{\"o}tzli, Olukotun, and R{\'e}}{Aberger et~al\mbox{.}}{2017}]%
        {aberger2017emptyheaded}
\bibfield{author}{\bibinfo{person}{Christopher~R Aberger}, \bibinfo{person}{Andrew Lamb}, \bibinfo{person}{Susan Tu}, \bibinfo{person}{Andres N{\"o}tzli}, \bibinfo{person}{Kunle Olukotun}, {and} \bibinfo{person}{Christopher R{\'e}}.} \bibinfo{year}{2017}\natexlab{}.
\newblock \showarticletitle{Emptyheaded: A relational engine for graph processing}.
\newblock \bibinfo{journal}{\emph{ACM Transactions on Database Systems (TODS)}} \bibinfo{volume}{42}, \bibinfo{number}{4} (\bibinfo{year}{2017}), \bibinfo{pages}{1--44}.
\newblock


\bibitem[\protect\citeauthoryear{Anonymous}{Anonymous}{2024}]%
        {anonymous2024}
\bibfield{author}{\bibinfo{person}{Anonymous}.} \bibinfo{year}{2024}\natexlab{}.
\newblock \bibinfo{title}{RapidStore: An Efficient Dynamic Graph Storage System for Concurrent Queries}.
\newblock \bibinfo{howpublished}{\url{https://drive.google.com/file/d/1JloreneahMUu1OO7o6qkCLIoPQeHo_GJ/view}}.
\newblock
\newblock
\shownote{Accessed: 2025-7-1.}


\bibitem[\protect\citeauthoryear{Beamer, Asanovi{\'c}, and Patterson}{Beamer et~al\mbox{.}}{2015}]%
        {beamer2015gap}
\bibfield{author}{\bibinfo{person}{Scott Beamer}, \bibinfo{person}{Krste Asanovi{\'c}}, {and} \bibinfo{person}{David Patterson}.} \bibinfo{year}{2015}\natexlab{}.
\newblock \showarticletitle{The GAP benchmark suite}.
\newblock \bibinfo{journal}{\emph{arXiv preprint arXiv:1508.03619}} (\bibinfo{year}{2015}).
\newblock


\bibitem[\protect\citeauthoryear{Bronson, Amsden, Cabrera, Chakka, Dimov, Ding, Ferris, Giardullo, Kulkarni, Li, et~al\mbox{.}}{Bronson et~al\mbox{.}}{2013}]%
        {bronson2013tao}
\bibfield{author}{\bibinfo{person}{Nathan Bronson}, \bibinfo{person}{Zach Amsden}, \bibinfo{person}{George Cabrera}, \bibinfo{person}{Prasad Chakka}, \bibinfo{person}{Peter Dimov}, \bibinfo{person}{Hui Ding}, \bibinfo{person}{Jack Ferris}, \bibinfo{person}{Anthony Giardullo}, \bibinfo{person}{Sachin Kulkarni}, \bibinfo{person}{Harry Li}, {et~al\mbox{.}}} \bibinfo{year}{2013}\natexlab{}.
\newblock \showarticletitle{$\{$TAO$\}$:$\{$Facebook’s$\}$ distributed data store for the social graph}. In \bibinfo{booktitle}{\emph{2013 USENIX Annual Technical Conference (USENIX ATC 13)}}. \bibinfo{pages}{49--60}.
\newblock


\bibitem[\protect\citeauthoryear{Chi, Dai, Wang, Sun, Li, and Yang}{Chi et~al\mbox{.}}{2016}]%
        {chi2016nxgraph}
\bibfield{author}{\bibinfo{person}{Yuze Chi}, \bibinfo{person}{Guohao Dai}, \bibinfo{person}{Yu Wang}, \bibinfo{person}{Guangyu Sun}, \bibinfo{person}{Guoliang Li}, {and} \bibinfo{person}{Huazhong Yang}.} \bibinfo{year}{2016}\natexlab{}.
\newblock \showarticletitle{Nxgraph: An efficient graph processing system on a single machine}. In \bibinfo{booktitle}{\emph{2016 IEEE 32nd International Conference on Data Engineering (ICDE)}}. IEEE, \bibinfo{pages}{409--420}.
\newblock


\bibitem[\protect\citeauthoryear{Corbett, Dean, Epstein, Fikes, Frost, Furman, Ghemawat, Gubarev, Heiser, Hochschild, et~al\mbox{.}}{Corbett et~al\mbox{.}}{2013}]%
        {corbett2013spanner}
\bibfield{author}{\bibinfo{person}{James~C Corbett}, \bibinfo{person}{Jeffrey Dean}, \bibinfo{person}{Michael Epstein}, \bibinfo{person}{Andrew Fikes}, \bibinfo{person}{Christopher Frost}, \bibinfo{person}{Jeffrey~John Furman}, \bibinfo{person}{Sanjay Ghemawat}, \bibinfo{person}{Andrey Gubarev}, \bibinfo{person}{Christopher Heiser}, \bibinfo{person}{Peter Hochschild}, {et~al\mbox{.}}} \bibinfo{year}{2013}\natexlab{}.
\newblock \showarticletitle{Spanner: Google’s globally distributed database}.
\newblock \bibinfo{journal}{\emph{ACM Transactions on Computer Systems (TOCS)}} \bibinfo{volume}{31}, \bibinfo{number}{3} (\bibinfo{year}{2013}), \bibinfo{pages}{1--22}.
\newblock


\bibitem[\protect\citeauthoryear{De~Leo and Boncz}{De~Leo and Boncz}{2019}]%
        {de2019packed}
\bibfield{author}{\bibinfo{person}{Dean De~Leo} {and} \bibinfo{person}{Peter Boncz}.} \bibinfo{year}{2019}\natexlab{}.
\newblock \showarticletitle{Packed memory arrays-rewired}. In \bibinfo{booktitle}{\emph{2019 IEEE 35th International Conference on Data Engineering (ICDE)}}. IEEE, \bibinfo{pages}{830--841}.
\newblock


\bibitem[\protect\citeauthoryear{De~Leo and Boncz}{De~Leo and Boncz}{2021}]%
        {de2021teseo}
\bibfield{author}{\bibinfo{person}{Dean De~Leo} {and} \bibinfo{person}{Peter Boncz}.} \bibinfo{year}{2021}\natexlab{}.
\newblock \showarticletitle{Teseo and the analysis of structural dynamic graphs}.
\newblock \bibinfo{journal}{\emph{Proceedings of the VLDB Endowment}} \bibinfo{volume}{14}, \bibinfo{number}{6} (\bibinfo{year}{2021}), \bibinfo{pages}{1053--1066}.
\newblock


\bibitem[\protect\citeauthoryear{Dhulipala, Blelloch, Gu, and Sun}{Dhulipala et~al\mbox{.}}{2022}]%
        {dhulipala2022pac}
\bibfield{author}{\bibinfo{person}{Laxman Dhulipala}, \bibinfo{person}{Guy~E Blelloch}, \bibinfo{person}{Yan Gu}, {and} \bibinfo{person}{Yihan Sun}.} \bibinfo{year}{2022}\natexlab{}.
\newblock \showarticletitle{Pac-trees: Supporting parallel and compressed purely-functional collections}. In \bibinfo{booktitle}{\emph{Proceedings of the 43rd ACM SIGPLAN International Conference on Programming Language Design and Implementation}}. \bibinfo{pages}{108--121}.
\newblock


\bibitem[\protect\citeauthoryear{Dhulipala, Blelloch, and Shun}{Dhulipala et~al\mbox{.}}{2019}]%
        {dhulipala2019low}
\bibfield{author}{\bibinfo{person}{Laxman Dhulipala}, \bibinfo{person}{Guy~E Blelloch}, {and} \bibinfo{person}{Julian Shun}.} \bibinfo{year}{2019}\natexlab{}.
\newblock \showarticletitle{Low-latency graph streaming using compressed purely-functional trees}. In \bibinfo{booktitle}{\emph{Proceedings of the 40th ACM SIGPLAN conference on programming language design and implementation}}. \bibinfo{pages}{918--934}.
\newblock


\bibitem[\protect\citeauthoryear{Diaconu, Freedman, Ismert, Larson, Mittal, Stonecipher, Verma, and Zwilling}{Diaconu et~al\mbox{.}}{2013}]%
        {diaconu2013hekaton}
\bibfield{author}{\bibinfo{person}{Cristian Diaconu}, \bibinfo{person}{Craig Freedman}, \bibinfo{person}{Erik Ismert}, \bibinfo{person}{Per-Ake Larson}, \bibinfo{person}{Pravin Mittal}, \bibinfo{person}{Ryan Stonecipher}, \bibinfo{person}{Nitin Verma}, {and} \bibinfo{person}{Mike Zwilling}.} \bibinfo{year}{2013}\natexlab{}.
\newblock \showarticletitle{Hekaton: SQL server's memory-optimized OLTP engine}. In \bibinfo{booktitle}{\emph{Proceedings of the 2013 ACM SIGMOD International Conference on Management of Data}}. \bibinfo{pages}{1243--1254}.
\newblock


\bibitem[\protect\citeauthoryear{Ediger, McColl, Riedy, and Bader}{Ediger et~al\mbox{.}}{2012}]%
        {ediger2012stinger}
\bibfield{author}{\bibinfo{person}{David Ediger}, \bibinfo{person}{Rob McColl}, \bibinfo{person}{Jason Riedy}, {and} \bibinfo{person}{David~A Bader}.} \bibinfo{year}{2012}\natexlab{}.
\newblock \showarticletitle{Stinger: High performance data structure for streaming graphs}. In \bibinfo{booktitle}{\emph{2012 IEEE Conference on High Performance Extreme Computing}}. IEEE, \bibinfo{pages}{1--5}.
\newblock


\bibitem[\protect\citeauthoryear{Feng, Ma, Li, Chen, Zhu, Han, and Chen}{Feng et~al\mbox{.}}{2021}]%
        {feng2021risgraph}
\bibfield{author}{\bibinfo{person}{Guanyu Feng}, \bibinfo{person}{Zixuan Ma}, \bibinfo{person}{Daixuan Li}, \bibinfo{person}{Shengqi Chen}, \bibinfo{person}{Xiaowei Zhu}, \bibinfo{person}{Wentao Han}, {and} \bibinfo{person}{Wenguang Chen}.} \bibinfo{year}{2021}\natexlab{}.
\newblock \showarticletitle{Risgraph: A real-time streaming system for evolving graphs to support sub-millisecond per-update analysis at millions ops/s}. In \bibinfo{booktitle}{\emph{Proceedings of the 2021 International Conference on Management of Data}}. \bibinfo{pages}{513--527}.
\newblock


\bibitem[\protect\citeauthoryear{Feng, Jin, Chen, Liu, and Saliho\u{g}lu}{Feng et~al\mbox{.}}{2023}]%
        {kuzu:cidr}
\bibfield{author}{\bibinfo{person}{Xiyang Feng}, \bibinfo{person}{Guodong Jin}, \bibinfo{person}{Ziyi Chen}, \bibinfo{person}{Chang Liu}, {and} \bibinfo{person}{Semih Saliho\u{g}lu}.} \bibinfo{year}{2023}\natexlab{}.
\newblock \showarticletitle{K\`uzu Graph Database Management System}. In \bibinfo{booktitle}{\emph{CIDR}}.
\newblock


\bibitem[\protect\citeauthoryear{Fuchs, Margan, and Giceva}{Fuchs et~al\mbox{.}}{2022}]%
        {fuchs2022sortledton}
\bibfield{author}{\bibinfo{person}{Per Fuchs}, \bibinfo{person}{Domagoj Margan}, {and} \bibinfo{person}{Jana Giceva}.} \bibinfo{year}{2022}\natexlab{}.
\newblock \showarticletitle{Sortledton: a universal, transactional graph data structure}.
\newblock \bibinfo{journal}{\emph{Proceedings of the VLDB Endowment}} \bibinfo{volume}{15}, \bibinfo{number}{6} (\bibinfo{year}{2022}), \bibinfo{pages}{1173--1186}.
\newblock


\bibitem[\protect\citeauthoryear{Gonzalez, Low, Gu, Bickson, and Guestrin}{Gonzalez et~al\mbox{.}}{2012}]%
        {gonzalez2012powergraph}
\bibfield{author}{\bibinfo{person}{Joseph~E Gonzalez}, \bibinfo{person}{Yucheng Low}, \bibinfo{person}{Haijie Gu}, \bibinfo{person}{Danny Bickson}, {and} \bibinfo{person}{Carlos Guestrin}.} \bibinfo{year}{2012}\natexlab{}.
\newblock \showarticletitle{PowerGraph: Distributed Graph-Parallel Computation on Natural Graphs}. In \bibinfo{booktitle}{\emph{10th USENIX symposium on operating systems design and implementation (OSDI 12)}}. \bibinfo{pages}{17--30}.
\newblock


\bibitem[\protect\citeauthoryear{Gupta, Satuluri, Grewal, Gurumurthy, Zhabiuk, Li, and Lin}{Gupta et~al\mbox{.}}{2014}]%
        {gupta2014real}
\bibfield{author}{\bibinfo{person}{Pankaj Gupta}, \bibinfo{person}{Venu Satuluri}, \bibinfo{person}{Ajeet Grewal}, \bibinfo{person}{Siva Gurumurthy}, \bibinfo{person}{Volodymyr Zhabiuk}, \bibinfo{person}{Quannan Li}, {and} \bibinfo{person}{Jimmy Lin}.} \bibinfo{year}{2014}\natexlab{}.
\newblock \showarticletitle{Real-time twitter recommendation: Online motif detection in large dynamic graphs}.
\newblock \bibinfo{journal}{\emph{Proceedings of the VLDB Endowment}} \bibinfo{volume}{7}, \bibinfo{number}{13} (\bibinfo{year}{2014}), \bibinfo{pages}{1379--1380}.
\newblock


\bibitem[\protect\citeauthoryear{Han, Zou, and Yu}{Han et~al\mbox{.}}{2018}]%
        {han2018speeding}
\bibfield{author}{\bibinfo{person}{Shuo Han}, \bibinfo{person}{Lei Zou}, {and} \bibinfo{person}{Jeffrey~Xu Yu}.} \bibinfo{year}{2018}\natexlab{}.
\newblock \showarticletitle{Speeding up set intersections in graph algorithms using simd instructions}. In \bibinfo{booktitle}{\emph{Proceedings of the 2018 International Conference on Management of Data}}. \bibinfo{pages}{1587--1602}.
\newblock


\bibitem[\protect\citeauthoryear{Kankanamge, Sahu, Mhedbhi, Chen, and Salihoglu}{Kankanamge et~al\mbox{.}}{2017}]%
        {kankanamge2017graphflow}
\bibfield{author}{\bibinfo{person}{Chathura Kankanamge}, \bibinfo{person}{Siddhartha Sahu}, \bibinfo{person}{Amine Mhedbhi}, \bibinfo{person}{Jeremy Chen}, {and} \bibinfo{person}{Semih Salihoglu}.} \bibinfo{year}{2017}\natexlab{}.
\newblock \showarticletitle{Graphflow: An active graph database}. In \bibinfo{booktitle}{\emph{Proceedings of the 2017 ACM International Conference on Management of Data}}. \bibinfo{pages}{1695--1698}.
\newblock


\bibitem[\protect\citeauthoryear{Kumar and Huang}{Kumar and Huang}{2020}]%
        {kumar2020graphone}
\bibfield{author}{\bibinfo{person}{Pradeep Kumar} {and} \bibinfo{person}{H~Howie Huang}.} \bibinfo{year}{2020}\natexlab{}.
\newblock \showarticletitle{Graphone: A data store for real-time analytics on evolving graphs}.
\newblock \bibinfo{journal}{\emph{ACM Transactions on Storage (TOS)}} \bibinfo{volume}{15}, \bibinfo{number}{4} (\bibinfo{year}{2020}), \bibinfo{pages}{1--40}.
\newblock


\bibitem[\protect\citeauthoryear{Larson, Blanas, Diaconu, Freedman, Patel, and Zwilling}{Larson et~al\mbox{.}}{2011}]%
        {larson2011high}
\bibfield{author}{\bibinfo{person}{Per-{\AA}ke Larson}, \bibinfo{person}{Spyros Blanas}, \bibinfo{person}{Cristian Diaconu}, \bibinfo{person}{Craig Freedman}, \bibinfo{person}{Jignesh~M Patel}, {and} \bibinfo{person}{Mike Zwilling}.} \bibinfo{year}{2011}\natexlab{}.
\newblock \showarticletitle{High-Performance Concurrency Control Mechanisms for Main-Memory Databases}.
\newblock \bibinfo{journal}{\emph{Proceedings of the VLDB Endowment}} \bibinfo{volume}{5}, \bibinfo{number}{4} (\bibinfo{year}{2011}).
\newblock


\bibitem[\protect\citeauthoryear{Leis, Kemper, and Neumann}{Leis et~al\mbox{.}}{2013}]%
        {leis2013adaptive}
\bibfield{author}{\bibinfo{person}{Viktor Leis}, \bibinfo{person}{Alfons Kemper}, {and} \bibinfo{person}{Thomas Neumann}.} \bibinfo{year}{2013}\natexlab{}.
\newblock \showarticletitle{The adaptive radix tree: ARTful indexing for main-memory databases}. In \bibinfo{booktitle}{\emph{2013 IEEE 29th International Conference on Data Engineering (ICDE)}}. IEEE, \bibinfo{pages}{38--49}.
\newblock


\bibitem[\protect\citeauthoryear{Li, Zou, {\"O}zsu, and Zhao}{Li et~al\mbox{.}}{2020}]%
        {li2020space}
\bibfield{author}{\bibinfo{person}{Youhuan Li}, \bibinfo{person}{Lei Zou}, \bibinfo{person}{M~Tamer {\"O}zsu}, {and} \bibinfo{person}{Dongyan Zhao}.} \bibinfo{year}{2020}\natexlab{}.
\newblock \showarticletitle{Space-Efficient Subgraph Search Over Streaming Graph With Timing Order Constraint}.
\newblock \bibinfo{journal}{\emph{IEEE Transactions on Knowledge and Data Engineering}} \bibinfo{volume}{34}, \bibinfo{number}{9} (\bibinfo{year}{2020}), \bibinfo{pages}{4453--4467}.
\newblock


\bibitem[\protect\citeauthoryear{Lim, Kaminsky, and Andersen}{Lim et~al\mbox{.}}{2017}]%
        {lim2017cicada}
\bibfield{author}{\bibinfo{person}{Hyeontaek Lim}, \bibinfo{person}{Michael Kaminsky}, {and} \bibinfo{person}{David~G Andersen}.} \bibinfo{year}{2017}\natexlab{}.
\newblock \showarticletitle{Cicada: Dependably fast multi-core in-memory transactions}. In \bibinfo{booktitle}{\emph{Proceedings of the 2017 ACM International Conference on Management of Data}}. \bibinfo{pages}{21--35}.
\newblock


\bibitem[\protect\citeauthoryear{Macko, Marathe, Margo, and Seltzer}{Macko et~al\mbox{.}}{2015}]%
        {macko2015llama}
\bibfield{author}{\bibinfo{person}{Peter Macko}, \bibinfo{person}{Virendra~J Marathe}, \bibinfo{person}{Daniel~W Margo}, {and} \bibinfo{person}{Margo~I Seltzer}.} \bibinfo{year}{2015}\natexlab{}.
\newblock \showarticletitle{Llama: Efficient graph analytics using large multiversioned arrays}. In \bibinfo{booktitle}{\emph{2015 IEEE 31st International Conference on Data Engineering}}. IEEE, \bibinfo{pages}{363--374}.
\newblock


\bibitem[\protect\citeauthoryear{Mariappan and Vora}{Mariappan and Vora}{2019}]%
        {mariappan2019graphbolt}
\bibfield{author}{\bibinfo{person}{Mugilan Mariappan} {and} \bibinfo{person}{Keval Vora}.} \bibinfo{year}{2019}\natexlab{}.
\newblock \showarticletitle{Graphbolt: Dependency-driven synchronous processing of streaming graphs}. In \bibinfo{booktitle}{\emph{Proceedings of the Fourteenth EuroSys Conference 2019}}. \bibinfo{pages}{1--16}.
\newblock


\bibitem[\protect\citeauthoryear{McColl, Ediger, Poovey, Campbell, and Bader}{McColl et~al\mbox{.}}{2014}]%
        {mccoll2014performance}
\bibfield{author}{\bibinfo{person}{Robert~Campbell McColl}, \bibinfo{person}{David Ediger}, \bibinfo{person}{Jason Poovey}, \bibinfo{person}{Dan Campbell}, {and} \bibinfo{person}{David~A Bader}.} \bibinfo{year}{2014}\natexlab{}.
\newblock \showarticletitle{A performance evaluation of open source graph databases}. In \bibinfo{booktitle}{\emph{Proceedings of the first workshop on Parallel programming for analytics applications}}. \bibinfo{pages}{11--18}.
\newblock


\bibitem[\protect\citeauthoryear{Neumann, M{\"u}hlbauer, and Kemper}{Neumann et~al\mbox{.}}{2015}]%
        {neumann2015fast}
\bibfield{author}{\bibinfo{person}{Thomas Neumann}, \bibinfo{person}{Tobias M{\"u}hlbauer}, {and} \bibinfo{person}{Alfons Kemper}.} \bibinfo{year}{2015}\natexlab{}.
\newblock \showarticletitle{Fast serializable multi-version concurrency control for main-memory database systems}. In \bibinfo{booktitle}{\emph{Proceedings of the 2015 ACM SIGMOD International Conference on Management of Data}}. \bibinfo{pages}{677--689}.
\newblock


\bibitem[\protect\citeauthoryear{Pandey, Wheatman, Xu, and Buluc}{Pandey et~al\mbox{.}}{2021}]%
        {pandey2021terrace}
\bibfield{author}{\bibinfo{person}{Prashant Pandey}, \bibinfo{person}{Brian Wheatman}, \bibinfo{person}{Helen Xu}, {and} \bibinfo{person}{Aydin Buluc}.} \bibinfo{year}{2021}\natexlab{}.
\newblock \showarticletitle{Terrace: A hierarchical graph container for skewed dynamic graphs}. In \bibinfo{booktitle}{\emph{Proceedings of the 2021 International Conference on Management of Data}}. \bibinfo{pages}{1372--1385}.
\newblock


\bibitem[\protect\citeauthoryear{Qiu, Cen, Qian, Peng, Zhang, Lin, and Zhou}{Qiu et~al\mbox{.}}{2018}]%
        {qiu2018real}
\bibfield{author}{\bibinfo{person}{Xiafei Qiu}, \bibinfo{person}{Wubin Cen}, \bibinfo{person}{Zhengping Qian}, \bibinfo{person}{You Peng}, \bibinfo{person}{Ying Zhang}, \bibinfo{person}{Xuemin Lin}, {and} \bibinfo{person}{Jingren Zhou}.} \bibinfo{year}{2018}\natexlab{}.
\newblock \showarticletitle{Real-time constrained cycle detection in large dynamic graphs}.
\newblock \bibinfo{journal}{\emph{Proceedings of the VLDB Endowment}} \bibinfo{volume}{11}, \bibinfo{number}{12} (\bibinfo{year}{2018}), \bibinfo{pages}{1876--1888}.
\newblock


\bibitem[\protect\citeauthoryear{Rathore, Ahmad, Paul, and Jeon}{Rathore et~al\mbox{.}}{2015}]%
        {rathore2015efficient}
\bibfield{author}{\bibinfo{person}{M~Mazhar Rathore}, \bibinfo{person}{Awais Ahmad}, \bibinfo{person}{Anand Paul}, {and} \bibinfo{person}{Gwanggil Jeon}.} \bibinfo{year}{2015}\natexlab{}.
\newblock \showarticletitle{Efficient graph-oriented smart transportation using internet of things generated big data}. In \bibinfo{booktitle}{\emph{2015 11th International Conference on Signal-Image Technology \& Internet-Based Systems (SITIS)}}. IEEE, \bibinfo{pages}{512--519}.
\newblock


\bibitem[\protect\citeauthoryear{Ribeiro and Silva}{Ribeiro and Silva}{2014}]%
        {ribeiro2014g}
\bibfield{author}{\bibinfo{person}{Pedro Ribeiro} {and} \bibinfo{person}{Fernando Silva}.} \bibinfo{year}{2014}\natexlab{}.
\newblock \showarticletitle{G-tries: a data structure for storing and finding subgraphs}.
\newblock \bibinfo{journal}{\emph{Data Mining and Knowledge Discovery}}  \bibinfo{volume}{28} (\bibinfo{year}{2014}), \bibinfo{pages}{337--377}.
\newblock


\bibitem[\protect\citeauthoryear{Sahu, Mhedhbi, Salihoglu, Lin, and {\"O}zsu}{Sahu et~al\mbox{.}}{2017}]%
        {sahu2017ubiquity}
\bibfield{author}{\bibinfo{person}{Siddhartha Sahu}, \bibinfo{person}{Amine Mhedhbi}, \bibinfo{person}{Semih Salihoglu}, \bibinfo{person}{Jimmy Lin}, {and} \bibinfo{person}{M~Tamer {\"O}zsu}.} \bibinfo{year}{2017}\natexlab{}.
\newblock \showarticletitle{The ubiquity of large graphs and surprising challenges of graph processing}.
\newblock \bibinfo{journal}{\emph{Proceedings of the VLDB Endowment}} \bibinfo{volume}{11}, \bibinfo{number}{4} (\bibinfo{year}{2017}), \bibinfo{pages}{420--431}.
\newblock


\bibitem[\protect\citeauthoryear{Sharma, Jiang, Bommannavar, Larson, and Lin}{Sharma et~al\mbox{.}}{2016}]%
        {sharma2016graphjet}
\bibfield{author}{\bibinfo{person}{Aneesh Sharma}, \bibinfo{person}{Jerry Jiang}, \bibinfo{person}{Praveen Bommannavar}, \bibinfo{person}{Brian Larson}, {and} \bibinfo{person}{Jimmy Lin}.} \bibinfo{year}{2016}\natexlab{}.
\newblock \showarticletitle{GraphJet: Real-time content recommendations at Twitter}.
\newblock \bibinfo{journal}{\emph{Proceedings of the VLDB Endowment}} \bibinfo{volume}{9}, \bibinfo{number}{13} (\bibinfo{year}{2016}), \bibinfo{pages}{1281--1292}.
\newblock


\bibitem[\protect\citeauthoryear{Shen, Yao, Shi, Wang, Lai, Tao, Su, Chen, Yu, Chen, et~al\mbox{.}}{Shen et~al\mbox{.}}{2023}]%
        {shen2023bridging}
\bibfield{author}{\bibinfo{person}{Sijie Shen}, \bibinfo{person}{Zihang Yao}, \bibinfo{person}{Lin Shi}, \bibinfo{person}{Lei Wang}, \bibinfo{person}{Longbin Lai}, \bibinfo{person}{Qian Tao}, \bibinfo{person}{Li Su}, \bibinfo{person}{Rong Chen}, \bibinfo{person}{Wenyuan Yu}, \bibinfo{person}{Haibo Chen}, {et~al\mbox{.}}} \bibinfo{year}{2023}\natexlab{}.
\newblock \showarticletitle{Bridging the Gap between Relational $\{$OLTP$\}$ and Graph-based $\{$OLAP$\}$}. In \bibinfo{booktitle}{\emph{2023 USENIX Annual Technical Conference (USENIX ATC 23)}}. \bibinfo{pages}{181--196}.
\newblock


\bibitem[\protect\citeauthoryear{Shi, Wang, and Xu}{Shi et~al\mbox{.}}{2024}]%
        {shi2024spruce}
\bibfield{author}{\bibinfo{person}{Jifan Shi}, \bibinfo{person}{Biao Wang}, {and} \bibinfo{person}{Yun Xu}.} \bibinfo{year}{2024}\natexlab{}.
\newblock \showarticletitle{Spruce: a Fast yet Space-saving Structure for Dynamic Graph Storage}.
\newblock \bibinfo{journal}{\emph{Proceedings of the ACM on Management of Data}} \bibinfo{volume}{2}, \bibinfo{number}{1} (\bibinfo{year}{2024}), \bibinfo{pages}{1--26}.
\newblock


\bibitem[\protect\citeauthoryear{Shun and Blelloch}{Shun and Blelloch}{2013}]%
        {shun2013ligra}
\bibfield{author}{\bibinfo{person}{Julian Shun} {and} \bibinfo{person}{Guy~E Blelloch}.} \bibinfo{year}{2013}\natexlab{}.
\newblock \showarticletitle{Ligra: a lightweight graph processing framework for shared memory}. In \bibinfo{booktitle}{\emph{Proceedings of the 18th ACM SIGPLAN symposium on Principles and practice of parallel programming}}. \bibinfo{pages}{135--146}.
\newblock


\bibitem[\protect\citeauthoryear{Shun, Dhulipala, and Blelloch}{Shun et~al\mbox{.}}{2015}]%
        {shun2015smaller}
\bibfield{author}{\bibinfo{person}{Julian Shun}, \bibinfo{person}{Laxman Dhulipala}, {and} \bibinfo{person}{Guy~E Blelloch}.} \bibinfo{year}{2015}\natexlab{}.
\newblock \showarticletitle{Smaller and faster: Parallel processing of compressed graphs with Ligra+}. In \bibinfo{booktitle}{\emph{2015 Data Compression Conference}}. IEEE, \bibinfo{pages}{403--412}.
\newblock


\bibitem[\protect\citeauthoryear{Song, Xiao, Wang, Charlin, Zhang, and Tang}{Song et~al\mbox{.}}{2019}]%
        {song2019session}
\bibfield{author}{\bibinfo{person}{Weiping Song}, \bibinfo{person}{Zhiping Xiao}, \bibinfo{person}{Yifan Wang}, \bibinfo{person}{Laurent Charlin}, \bibinfo{person}{Ming Zhang}, {and} \bibinfo{person}{Jian Tang}.} \bibinfo{year}{2019}\natexlab{}.
\newblock \showarticletitle{Session-based social recommendation via dynamic graph attention networks}. In \bibinfo{booktitle}{\emph{Proceedings of the Twelfth ACM international conference on web search and data mining}}. \bibinfo{pages}{555--563}.
\newblock


\bibitem[\protect\citeauthoryear{Sun, Ferizovic, and Belloch}{Sun et~al\mbox{.}}{2018}]%
        {sun2018pam}
\bibfield{author}{\bibinfo{person}{Yihan Sun}, \bibinfo{person}{Daniel Ferizovic}, {and} \bibinfo{person}{Guy~E Belloch}.} \bibinfo{year}{2018}\natexlab{}.
\newblock \showarticletitle{PAM: parallel augmented maps}. In \bibinfo{booktitle}{\emph{Proceedings of the 23rd ACM SIGPLAN Symposium on Principles and Practice of Parallel Programming}}. \bibinfo{pages}{290--304}.
\newblock


\bibitem[\protect\citeauthoryear{Tu, Zheng, Kohler, Liskov, and Madden}{Tu et~al\mbox{.}}{2013}]%
        {tu2013speedy}
\bibfield{author}{\bibinfo{person}{Stephen Tu}, \bibinfo{person}{Wenting Zheng}, \bibinfo{person}{Eddie Kohler}, \bibinfo{person}{Barbara Liskov}, {and} \bibinfo{person}{Samuel Madden}.} \bibinfo{year}{2013}\natexlab{}.
\newblock \showarticletitle{Speedy transactions in multicore in-memory databases}. In \bibinfo{booktitle}{\emph{Proceedings of the Twenty-Fourth ACM Symposium on Operating Systems Principles}}. \bibinfo{pages}{18--32}.
\newblock


\bibitem[\protect\citeauthoryear{Vora, Gupta, and Xu}{Vora et~al\mbox{.}}{2017}]%
        {vora2017kickstarter}
\bibfield{author}{\bibinfo{person}{Keval Vora}, \bibinfo{person}{Rajiv Gupta}, {and} \bibinfo{person}{Guoqing Xu}.} \bibinfo{year}{2017}\natexlab{}.
\newblock \showarticletitle{Kickstarter: Fast and accurate computations on streaming graphs via trimmed approximations}. In \bibinfo{booktitle}{\emph{Proceedings of the twenty-second international conference on architectural support for programming languages and operating systems}}. \bibinfo{pages}{237--251}.
\newblock


\bibitem[\protect\citeauthoryear{Wang, Lin, Papakonstantinou, and Swanson}{Wang et~al\mbox{.}}{2017}]%
        {wang2017experimental}
\bibfield{author}{\bibinfo{person}{Jianguo Wang}, \bibinfo{person}{Chunbin Lin}, \bibinfo{person}{Yannis Papakonstantinou}, {and} \bibinfo{person}{Steven Swanson}.} \bibinfo{year}{2017}\natexlab{}.
\newblock \showarticletitle{An experimental study of bitmap compression vs. inverted list compression}. In \bibinfo{booktitle}{\emph{Proceedings of the 2017 ACM International Conference on Management of Data}}. \bibinfo{pages}{993--1008}.
\newblock


\bibitem[\protect\citeauthoryear{Wheatman and Xu}{Wheatman and Xu}{2018}]%
        {wheatman2018packed}
\bibfield{author}{\bibinfo{person}{Brian Wheatman} {and} \bibinfo{person}{Helen Xu}.} \bibinfo{year}{2018}\natexlab{}.
\newblock \showarticletitle{Packed compressed sparse row: A dynamic graph representation}. In \bibinfo{booktitle}{\emph{2018 IEEE High Performance extreme Computing Conference (HPEC)}}. IEEE, \bibinfo{pages}{1--7}.
\newblock


\bibitem[\protect\citeauthoryear{Yang, Zheng, Lian, Cai, and Wang}{Yang et~al\mbox{.}}{2024}]%
        {yang2024hero}
\bibfield{author}{\bibinfo{person}{Boyu Yang}, \bibinfo{person}{Weiguo Zheng}, \bibinfo{person}{Xiang Lian}, \bibinfo{person}{Yuzheng Cai}, {and} \bibinfo{person}{X~Sean Wang}.} \bibinfo{year}{2024}\natexlab{}.
\newblock \showarticletitle{HERO: A Hierarchical Set Partitioning and Join Framework for Speeding up the Set Intersection Over Graphs}.
\newblock \bibinfo{journal}{\emph{Proceedings of the ACM on Management of Data}} \bibinfo{volume}{2}, \bibinfo{number}{1} (\bibinfo{year}{2024}), \bibinfo{pages}{1--25}.
\newblock


\bibitem[\protect\citeauthoryear{Ying, Chen, and Jin}{Ying et~al\mbox{.}}{2020}]%
        {ying2020pensieve}
\bibfield{author}{\bibinfo{person}{Tangwei Ying}, \bibinfo{person}{Hanhua Chen}, {and} \bibinfo{person}{Hai Jin}.} \bibinfo{year}{2020}\natexlab{}.
\newblock \showarticletitle{Pensieve: Skewness-aware version switching for efficient graph processing}. In \bibinfo{booktitle}{\emph{Proceedings of the 2020 ACM SIGMOD International Conference on Management of Data}}. \bibinfo{pages}{699--713}.
\newblock


\bibitem[\protect\citeauthoryear{Yu, Gong, Tao, Shen, Zhang, Yu, Liu, Zhang, Li, Luo, et~al\mbox{.}}{Yu et~al\mbox{.}}{2024}]%
        {yu2024lsmgraph}
\bibfield{author}{\bibinfo{person}{Song Yu}, \bibinfo{person}{Shufeng Gong}, \bibinfo{person}{Qian Tao}, \bibinfo{person}{Sijie Shen}, \bibinfo{person}{Yanfeng Zhang}, \bibinfo{person}{Wenyuan Yu}, \bibinfo{person}{Pengxi Liu}, \bibinfo{person}{Zhixin Zhang}, \bibinfo{person}{Hongfu Li}, \bibinfo{person}{Xiaojian Luo}, {et~al\mbox{.}}} \bibinfo{year}{2024}\natexlab{}.
\newblock \showarticletitle{LSMGraph: A High-Performance Dynamic Graph Storage System with Multi-Level CSR}.
\newblock \bibinfo{journal}{\emph{Proceedings of the ACM on Management of Data}} \bibinfo{volume}{2}, \bibinfo{number}{6} (\bibinfo{year}{2024}), \bibinfo{pages}{1--28}.
\newblock


\bibitem[\protect\citeauthoryear{Zhang, Cui, Cheng, Yuan, and Wang}{Zhang et~al\mbox{.}}{2022}]%
        {zhang2022online}
\bibfield{author}{\bibinfo{person}{Siyi Zhang}, \bibinfo{person}{Xiaoxi Cui}, \bibinfo{person}{Yurong Cheng}, \bibinfo{person}{Ye Yuan}, {and} \bibinfo{person}{Guoren Wang}.} \bibinfo{year}{2022}\natexlab{}.
\newblock \showarticletitle{Online Global Query Planning for Dynamic Road Networks}. In \bibinfo{booktitle}{\emph{2022 IEEE 8th International Conference on Cloud Computing and Intelligent Systems (CCIS)}}. IEEE, \bibinfo{pages}{666--670}.
\newblock


\bibitem[\protect\citeauthoryear{Zhu, Feng, Serafini, Ma, Yu, Xie, Aboulnaga, and Chen}{Zhu et~al\mbox{.}}{2019}]%
        {zhu2019livegraph}
\bibfield{author}{\bibinfo{person}{Xiaowei Zhu}, \bibinfo{person}{Guanyu Feng}, \bibinfo{person}{Marco Serafini}, \bibinfo{person}{Xiaosong Ma}, \bibinfo{person}{Jiping Yu}, \bibinfo{person}{Lei Xie}, \bibinfo{person}{Ashraf Aboulnaga}, {and} \bibinfo{person}{Wenguang Chen}.} \bibinfo{year}{2019}\natexlab{}.
\newblock \showarticletitle{Livegraph: A transactional graph storage system with purely sequential adjacency list scans}.
\newblock \bibinfo{journal}{\emph{arXiv preprint arXiv:1910.05773}} (\bibinfo{year}{2019}).
\newblock


\end{thebibliography}

\appendix
% \FloatBarrier
\clearpage

\section{Proof of Propositions} \label{appendix:proof-of-propositions}

\subsection{Correctness of Concurrency Protocol}
\begin{proposition}
The subgraph-centric concurrency control mechanism guarantees the serializability of both write and read queries.
\end{proposition}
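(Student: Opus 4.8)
The plan is to establish serializability by exhibiting, for any concurrent execution, an equivalent serial schedule and then verifying that the concurrency control mechanism forces every committed query to behave as if run in that serial order. The natural candidate for the serial order is the order of commit timestamps: write queries are ordered by the value of $t$ they obtain when atomically incrementing $t_w$, and each read query is inserted into this order immediately after the write query whose commit timestamp equals the reader's start timestamp $t_r$ (ties among reads placed arbitrarily, since reads commute). I would argue that proving the proposition reduces to two claims: (i) write queries alone are serialized in commit-timestamp order, and (ii) each read query observes exactly the state produced by all writes with commit timestamp $\leqslant t_r$ and no others. Together these give a conflict-equivalent serial schedule.

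For claim (i) I would invoke the MV2PL protocol directly. Since every write query acquires exclusive locks on \emph{all} subgraphs in $\Delta \mathcal{S}$ before modifying any of them and holds these locks through the commit phase, two write queries that touch a common subgraph cannot interleave their critical sections; thus conflicting writes are totally ordered by lock acquisition. I would then show this lock order is consistent with the commit-timestamp order: the commit timestamp is assigned by an atomic increment of $t_w$ while the locks are still held, so a later-committing writer on a shared subgraph necessarily installed its new version at the head of the version chain after the earlier one. The ascending subgraph-ID locking discipline guarantees deadlock-freedom, so every write query eventually commits and the commit order is a genuine total order. Hence the committed writes form a serial schedule equivalent to the timestamp order.

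For claim (ii) the key is the interaction between the read timestamp $t_r$ and the copy-on-write discipline. I would argue two facts. First, $t_r$ advances to value $t$ only after the writer with commit timestamp $t$ has linked all its new subgraph versions into their chains and only after $t_r$ has already reached $t-1$; the conditional poll ``if $t_r = t-1$ then increment'' enforces that $t_r$ passes through every intermediate value in order, so the snapshot identified by any start timestamp $t_r$ corresponds to a \emph{complete} set of committed writes, never a partially applied transaction. This is the crux of correctness and the step I expect to be the main obstacle, because I must rule out the subtle race in which a reader registers with start timestamp $t_r$ while a writer with commit timestamp $t_r$ is mid-commit, having installed a new version on one subgraph but not yet on another; I would resolve this by showing that $t_r$ is not advanced to $t$ until \emph{after} all of that writer's version links are published, so a reader seeing $t_r = t$ is guaranteed to see all of transaction $t$'s effects atomically. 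Second, when the reader selects, for each subgraph, the latest version with timestamp $\leqslant t_r$, it reads precisely the state resulting from writes $1,\dots,t_r$; because copy-on-write never mutates an existing snapshot, these selected versions remain immutable for the reader's entire lifetime regardless of concurrent writes. Combining (i) and (ii), every read behaves as though executed atomically at the point $t_r$ in the serial commit order, and every write behaves as though executed at its commit timestamp, yielding a conflict-equivalent serial schedule and establishing serializability.
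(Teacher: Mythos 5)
Your proposal is correct and follows essentially the same route as the paper's proof: writes are serialized in commit-timestamp order via MV2PL and the atomic increment of $t_w$, reads obtain consistent immutable snapshots because $t_r$ is advanced only after a writer has published all of its new version links and copy-on-write never mutates existing snapshots, and the two facts combine into an equivalent serial schedule ordered by commit timestamps with readers placed at their start timestamps. Your explicit treatment of the mid-commit race (a reader registering while a writer has installed only some of its subgraph versions) is a sharper articulation of the step the paper covers with the observation that write queries advance $t_r$ only after committing all their updates, but it is the same argument, not a different one.
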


\begin{proof}
To prove that our concurrency control mechanism ensures serializability, we need to demonstrate that the execution of concurrent read and write queries under this mechanism is equivalent to some serial execution of these queries. We will show this by examining the behavior of write queries, read queries, and their interactions.

\noindent\textbf{Serializability of Write Queries}

\begin{enumerate}
    \item Exclusive Locking with MV2PL: Write queries employ Multi-Version Two-Phase Locking (MV2PL) to synchronize updates. Specifically, a write query $ W_0 $ intending to update a set of vertices $ \Delta V $ identifies the set of subgraphs $ \Delta \mathcal{S} $ that contain these vertices. It then acquires exclusive locks on these subgraphs in ascending order of their subgraph IDs. This consistent locking order prevents deadlocks and ensures that concurrent write queries serialize their access to shared subgraphs.

    \item Commit Order Enforcement: After applying updates and creating new subgraph snapshots, $ W_0 $ atomically increments the global write timestamp $ t_w $ to obtain its commit timestamp $ t $. It then assigns $ t $ to the new versions of the modified subgraphs and links them to the heads of their respective version chains.

    \item Advancing the Read Timestamp: $ W_0 $ polls the global read timestamp $ t_r $. If $ t_r = t - 1 $, it atomically increments $ t_r $ by 1. This step ensures that write queries commit in the serial order determined by their commit timestamps. The condition $ t_r = t - 1 $ enforces that writes with earlier timestamps have already advanced $ t_r $, thereby preventing out-of-order commits.

    \item Serial Equivalence: Since write queries acquire exclusive locks and commit in a total order defined by their commit timestamps, the effect of executing concurrent write queries under this protocol is equivalent to executing them serially in the order of their commit timestamps. There are no write-write conflicts because locks prevent simultaneous updates to the same subgraphs.
\end{enumerate}

\noindent\textbf{Serializability of Read Queries}

\begin{enumerate}
    \item Snapshot Isolation: Read queries do not acquire locks and are not blocked by write queries. When a read query $ R $ begins, it registers itself in the reader tracer and records its start timestamp $ t $ as the current global read timestamp $ t_r $.

    \item Consistent Snapshot Construction: $ R $ constructs its graph snapshot view by selecting the latest subgraph snapshots whose versions are less than or equal to its start timestamp $ t $. Since write queries only advance $ t_r $ after committing all their updates, $ R $ is guaranteed to see a consistent state of the graph as of time $ t $.

    \item Non-Interference with Writes: Write queries use a copy-on-write strategy to create new subgraph snapshots. This means that existing snapshots remain unchanged and accessible to read queries. Therefore, reads and writes do not interfere with each other, and reads do not observe partial effects of concurrent writes.
\end{enumerate}

\noindent\textbf{Combined Serializability}

\begin{enumerate}
    \item Equivalent Serial Execution Order: The serialization of write queries via their commit timestamps and the snapshot isolation provided to read queries together ensure that the execution is equivalent to some serial order where:
    \begin{itemize}
        \item Write Queries: Executed sequentially in the order of their commit timestamps.
        \item Read Queries: Each read query observes the state of the graph after all write queries with commit timestamps less than or equal to its start timestamp $ t $ have been executed.
    \end{itemize}

    \item Absence of Anomalies: The mechanism prevents common anomalies such as dirty reads, non-repeatable reads, and lost updates:
    \begin{itemize}
        \item Dirty Reads: Read queries do not see uncommitted data because they only access subgraph snapshots with versions less than or equal to $ t $, and $ t $ is based on the committed $ t_r $.
        \item Non-Repeatable Reads: Since read queries operate on immutable snapshots, repeated reads within the same query return consistent data.
        \item Lost Updates: Write queries acquire exclusive locks and commit in a serial order, preventing overwriting of concurrent updates.
    \end{itemize}

    \item Atomic Operations and Lock-Free Reads: The use of atomic operations for timestamp increments and reader tracer updates ensures thread safety without introducing significant overhead. The ability of read queries to proceed without locks enhances concurrency while maintaining consistency.
\end{enumerate}

\noindent\textbf{Conclusion}

By serializing write queries through MV2PL and ensuring that read queries operate on consistent snapshots corresponding to specific points in the serial execution order, the proposed concurrency control mechanism guarantees that the concurrent execution of read and write queries is equivalent to some serial execution. Therefore, the mechanism ensures the serializability of both write and read queries.
\end{proof}

\setlength{\textfloatsep}{0pt}
\begin{figure*}[t]
	\setlength{\abovecaptionskip}{0pt}
	\setlength{\belowcaptionskip}{0pt}
		\captionsetup[subfigure]{aboveskip=0pt,belowskip=0pt}
	\centering
    \includegraphics[width=0.45\textwidth]{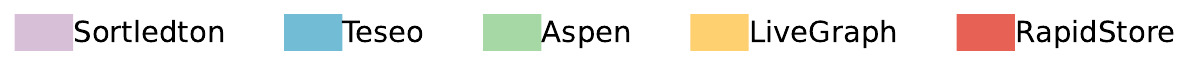}
    \\
	\begin{subfigure}[h]{0.23\textwidth}
		\centering
		\includegraphics[width=\textwidth]{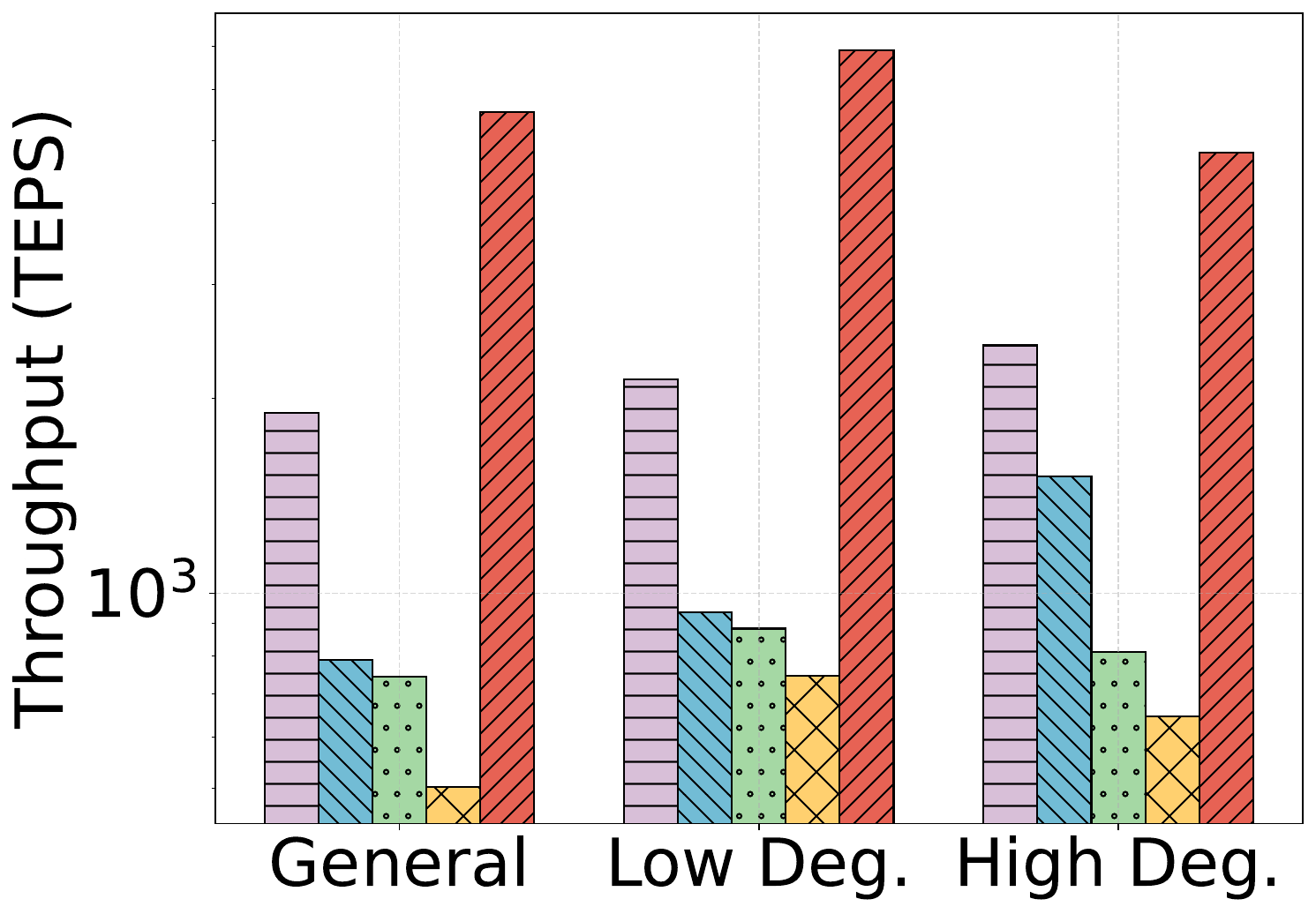}
		\caption{\emph{Search} on \textit{lj}.}
		\label{fig:search_lj}
	\end{subfigure}
        \begin{subfigure}[h]{0.23\textwidth}
		\centering
		\includegraphics[width=\textwidth]{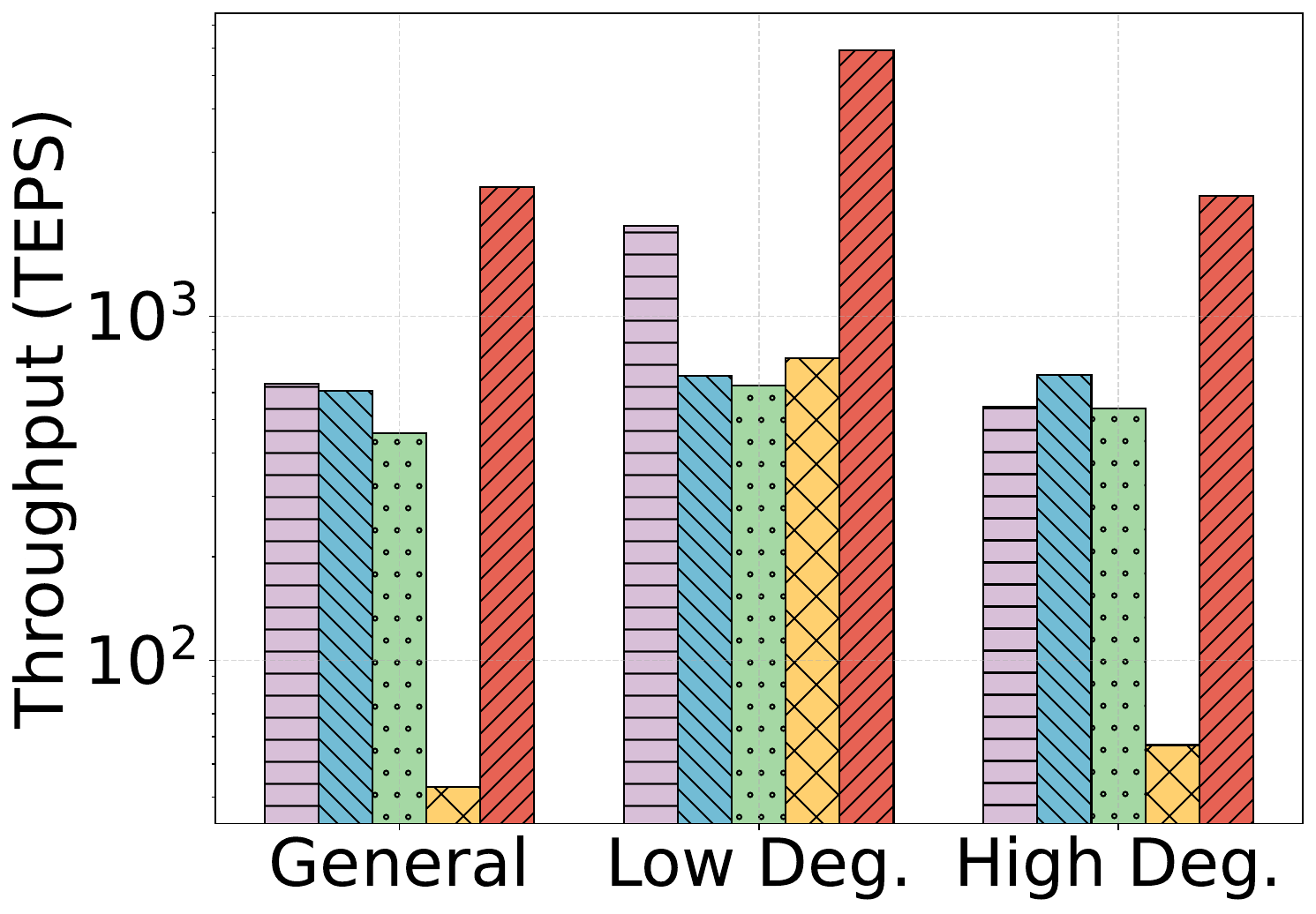}
		\caption{\emph{Search} on \textit{g5}.}
		\label{fig:search_g5}
	\end{subfigure}
        \begin{subfigure}[h]{0.23\textwidth}
		\centering
		\includegraphics[width=\textwidth]{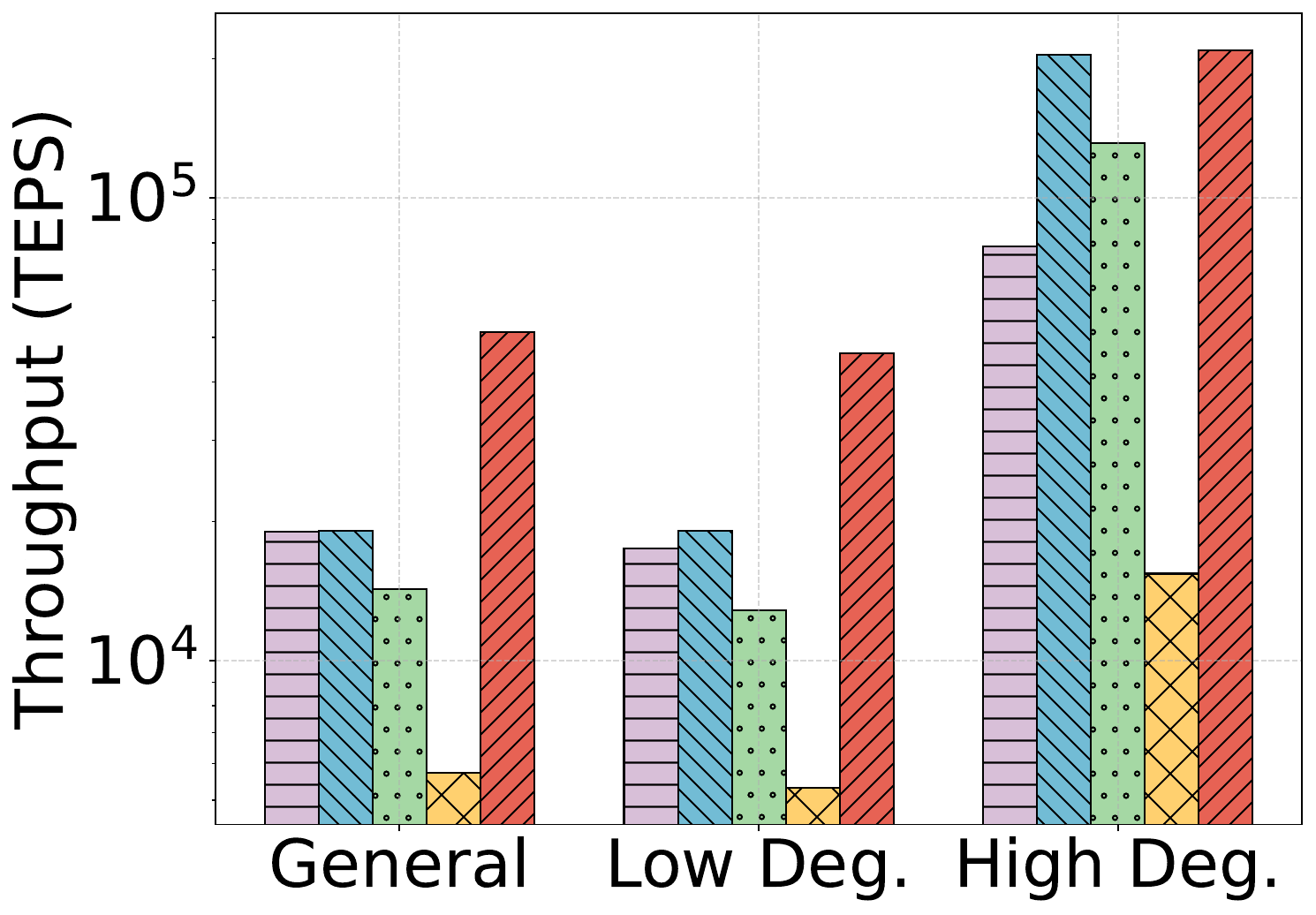}
		\caption{\emph{Scan} on \textit{lj}.}
		\label{fig:scan_lj}
	\end{subfigure}
        \begin{subfigure}[h]{0.23\textwidth}
		\centering
		\includegraphics[width=\textwidth]{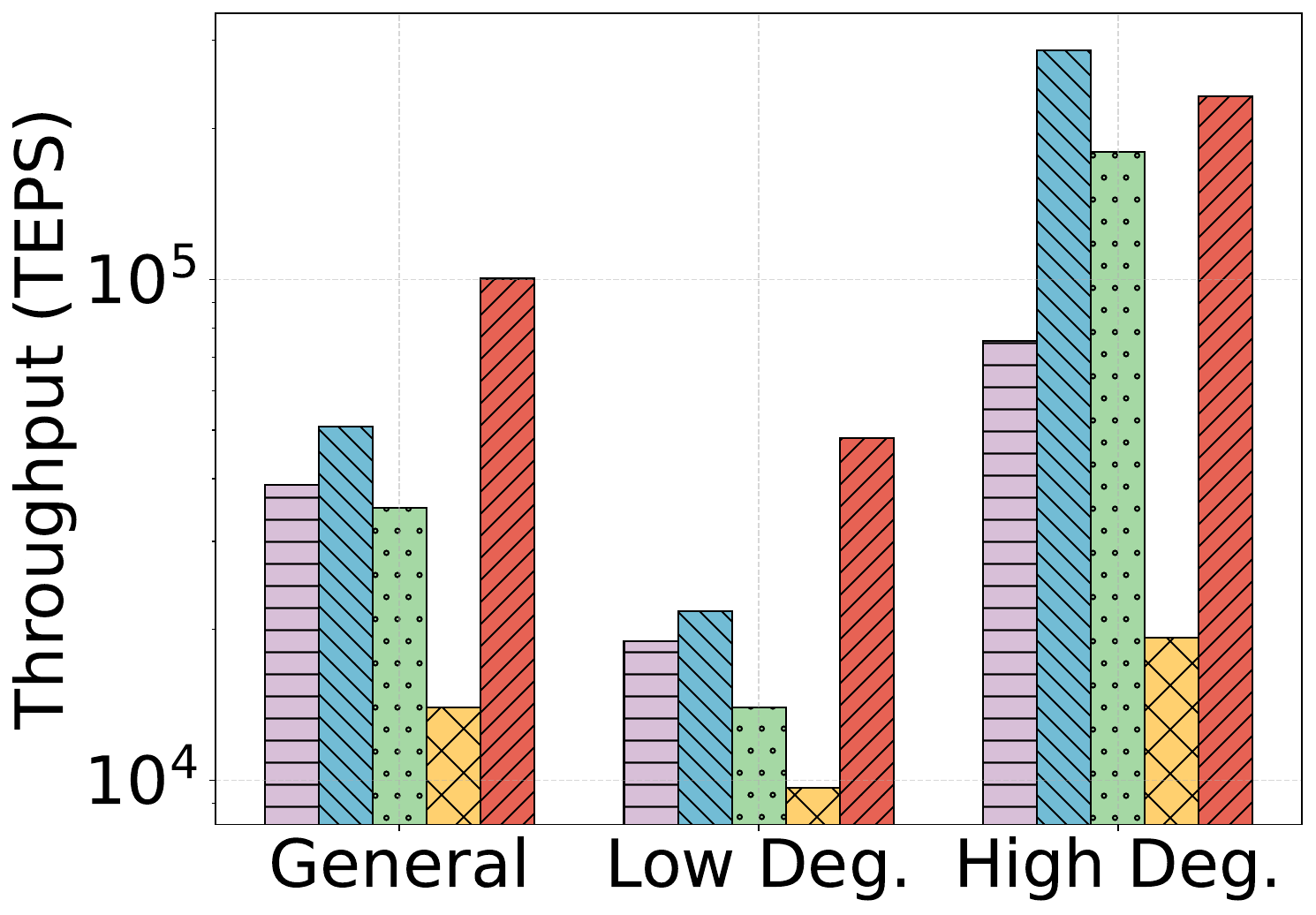}
		\caption{\emph{Scan} on \textit{g5}.}
		\label{fig:scan_g5}
	\end{subfigure}
	\caption{Performance of basic read operations.}
	\label{fig:scan}
\end{figure*}

\subsection{Maximum Length of Version Chain}
\begin{proposition} \label{prop:length_version_appendix}
	Given a subgraph $S$, the length of its version chain is at most $k + 1$, where $k$ is the size of the reader tracer array, representing the maximum number of concurrent read queries.
	\end{proposition}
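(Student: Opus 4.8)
The plan is to bound the chain length by counting, after garbage collection, how many versions of a subgraph $S$ can simultaneously survive. The crucial structural facts I would rely on are: (i) a writer holds the lock on $S$ while it commits a new version and then immediately runs GC on $S$, so modifications to the version chain of $S$ are serialized and every new version is followed by a GC pass; (ii) GC reclaims every version of $S$ that is neither the head of the chain nor needed by an active reader; and (iii) when a reader builds its snapshot it selects, for each subgraph, exactly one version, namely the newest version whose timestamp does not exceed its start timestamp $t$.

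First I would formalize the notion of a version being \emph{pinned}: a version $\nu$ of $S$ is pinned if some active reader's snapshot for $S$ points to $\nu$. By fact (iii), each active reader pins exactly one version of $S$, so the set of pinned versions has cardinality at most the number of active readers. Since the reader tracer has $k$ slots, at most $k$ read queries can be active at once, and therefore at most $k$ distinct versions of $S$ are pinned at any moment.

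Next I would examine the chain immediately after a GC pass. By fact (ii), every version other than the head that still remains must be pinned by some active reader; otherwise GC would have reclaimed it. Hence the non-head survivors form a subset of the pinned versions, whose size is at most $k$. Adding the single head version yields a chain of length at most $k+1$, the desired bound. Combined with fact (i), which guarantees that the chain is only ever altered by the lock-holding writer and that this post-GC state is re-established before the lock is released, the bound holds persistently across the execution.

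The main obstacle is the concurrency reasoning around the GC pass itself: readers may register or unregister while the writer scans the reader tracer, and there is a transient window in which the freshly committed head has been linked but the now-obsolete versions have not yet been unlinked (so a naive count can momentarily reach $k+2$). I would resolve this by arguing that GC reads the tracer atomically and that a newly registering reader always adopts the current $t_r$, thus pinning the head rather than any version GC is about to reclaim; consequently no version can become pinned after GC has decided to drop it. I would accordingly state the bound as the length maintained after each commit-plus-GC step, treating the writer's commit and GC as a single critical section under the subgraph lock, which is precisely the quantity relevant to the space analysis.
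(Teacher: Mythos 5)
Your proof takes essentially the same route as the paper's: both bound the chain length by noting that at most $k$ active readers (one per reader-tracer slot) can each pin one version of $S$, that GC run by the lock-holding writer reclaims every non-head version not so pinned, and that adding the single newly committed head version yields the $k+1$ bound. Your explicit handling of the transient window during commit-plus-GC, and your convention of stating the bound as the post-GC chain length, make precise a point the paper's argument implicitly assumes but does not spell out.
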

	
\begin{proof}
	At any given time, the version chain of subgraph $S$ consists of versions that are either currently in use by active read queries or have not yet been reclaimed by the garbage collection (GC) process. We will show that the total number of such versions is at most $k + 1$.
	
	\noindent\textbf{Exclusive Extension by Writers} 
	
	The concurrency control mechanism ensures that only one writer can modify $S$ at a time because write queries acquire exclusive locks on subgraphs. Thus, the version chain of $S$ is extended by only one writer at any moment.

	\noindent\textbf{Active Readers Limitation}
	
	The reader tracer has a fixed size $k$, which is the maximum number of concurrent read queries the system supports. Therefore, there can be at most $k$ active read queries holding references to versions of $S$.

	\noindent\textbf{Garbage Collection Process}
	
	After a writer $W$ extends the version chain by adding a new version of $S$, it performs garbage collection. $W$ scans the reader tracer to identify the start timestamps of all active read queries. Using this information, $W$ determines which versions of $S$ are no longer needed by any reader.

	\noindent\textbf{Versions in Use}
	
	Since there are at most $k$ active readers, there are at most $k$ versions of $S$ that are currently in use and cannot be reclaimed during GC.

	\noindent\textbf{Total Versions in the Chain}
	
	In addition to the $k$ versions potentially held by active readers, there is the new version just added by the writer $W$. This version may not yet be in use by any reader but exists at the head of the version chain.

	\noindent\textbf{Conclusion}
	
	Combining the maximum of $k$ versions held by active readers and the one new version added by the writer, the total length of $S$’s version chain is at most $k + 1$.

\end{proof}

\section{Supplement Experiment Results}

\subsection{Evaluation on Basic Read Operation.} \label{sec:basic_read_op_eval}

This experiment evaluates two basic graph operations: \textit{Search} and \textit{Scan}, under the following workload scenarios:

\begin{itemize}[leftmargin=*]
    \item \textbf{General:} Vertices are selected with equal probability $\frac{1}{|V|}$.
    \item \textbf{Low-Degree:} The top 10\% of low-degree vertices are selected.
    \item \textbf{High-Degree:} The top 10\% of high-degree vertices are selected.
\end{itemize}

\vspace{2pt}
\noindent\textbf{Evaluation of \emph{Search}.} Figures \ref{fig:search_lj} and \ref{fig:search_g5} show the \emph{search} performance. Throughput is generally higher on \emph{lj}, a sparser graph. RapidStore consistently delivers the best performance across all settings, leveraging C-ART's constant \emph{search} complexity. LiveGraph performs poorly due to its unsorted neighborhoods, with slightly better results in the \textit{Low-Degree} setting. These results confirm that RapidStore's design effectively enhances \emph{search} throughput.

\vspace{2pt}
\noindent\textbf{Evaluation of \emph{Scan}.} Figures \ref{fig:scan_lj} and \ref{fig:scan_g5} show the \emph{scan} throughput. In the \textit{High-Degree} setting, RapidStore is slightly slower than Teseo, as Teseo stores neighbor segments contiguously, while RapidStore stores them separately. However, RapidStore outperforms competitors in the \textit{Low-Degree} setting with 1.22x--7.66x higher throughput and maintains a lead of 1.97x--7.97x in the \textit{General} setting. These results highlight the clustered index's effectiveness in enabling superior \emph{scan} performance for graph analytics. Sortledton is slower than Aspen due to version checks, while LiveGraph is the slowest because of its MVCC mechanism.

In summary, RapidStore provides efficient \emph{Search} and \emph{Scan} operations, making it highly suitable for graph analytics.

\subsection{Evaluation on Multicore Scalability}

Figure~\ref{fig:scalability} shows the scalability of each system from 1 to 32 writer threads. Overall, scalability is higher on \emph{lj} due to its more uniform vertex degree distribution, which reduces contention. Sortledton achieves the best scalability, reaching a 15.12x speedup with 32 writers, thanks to its edge-level MVCC and fully decoupled vertex neighborhoods. RapidStore ranks second with a 9.92x speedup, limited by vertex-group MVCC, which increases the scope of lock contention. Teseo achieves 9.25x but suffers from periodic PMA rebalancing that blocks all writers. LiveGraph performs the worst—its throughput drops beyond 16 writers due to costly global synchronization required by its MVCC.

\begin{figure}[t]
	\setlength{\abovecaptionskip}{0pt}
	\setlength{\belowcaptionskip}{0pt}
		\captionsetup[subfigure]{aboveskip=0pt,belowskip=0pt}
	\centering
    \includegraphics[width=0.47\textwidth]{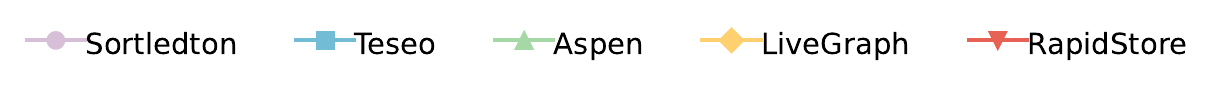}
	\begin{subfigure}[h]{0.23\textwidth}
		\centering
		\includegraphics[width=\textwidth]{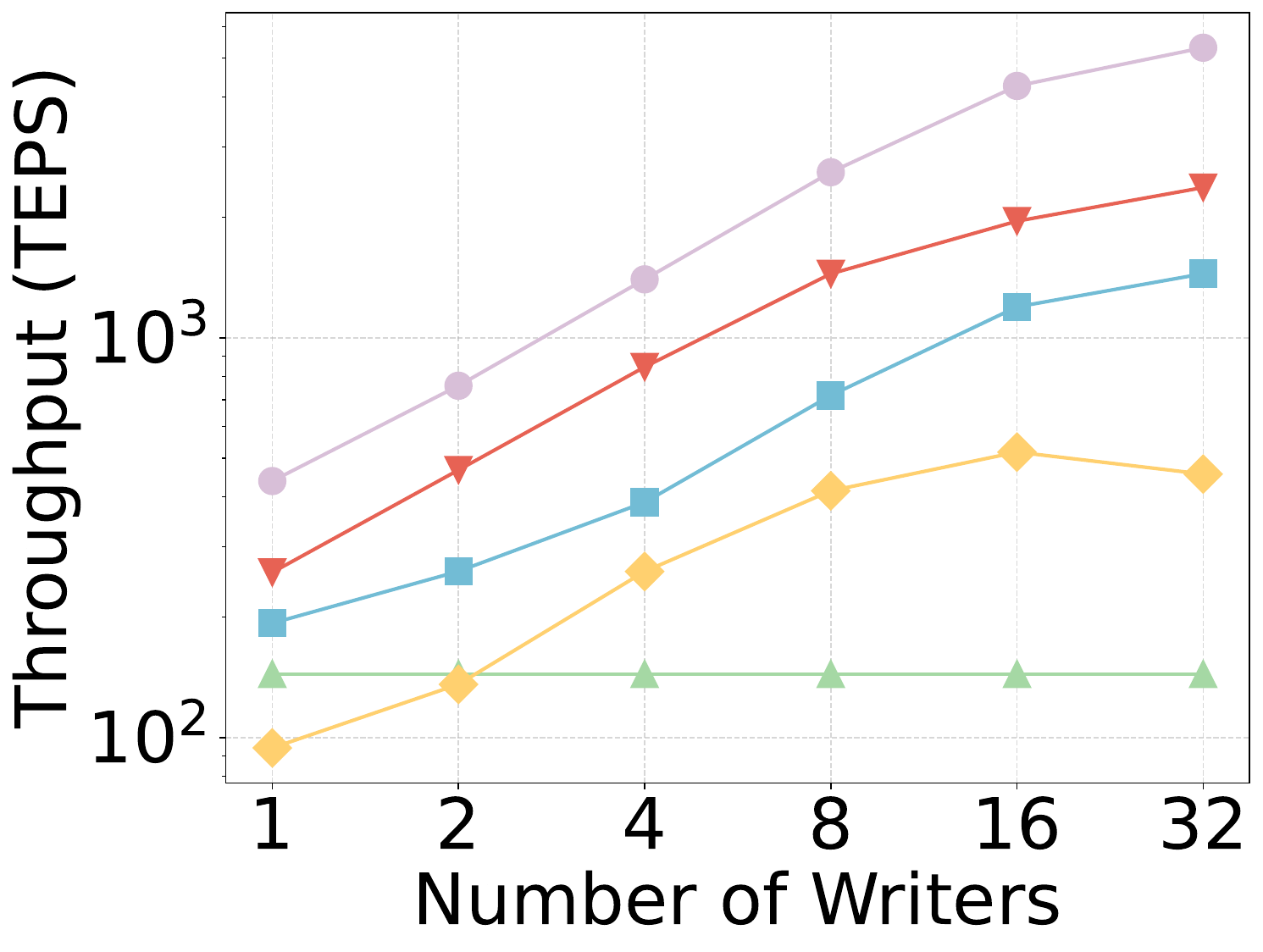}
		\caption{Insertion on \textit{lj}.}
		\label{fig:scalability_lj}
	\end{subfigure}
        \begin{subfigure}[h]{0.23\textwidth}
		\centering
		\includegraphics[width=\textwidth]{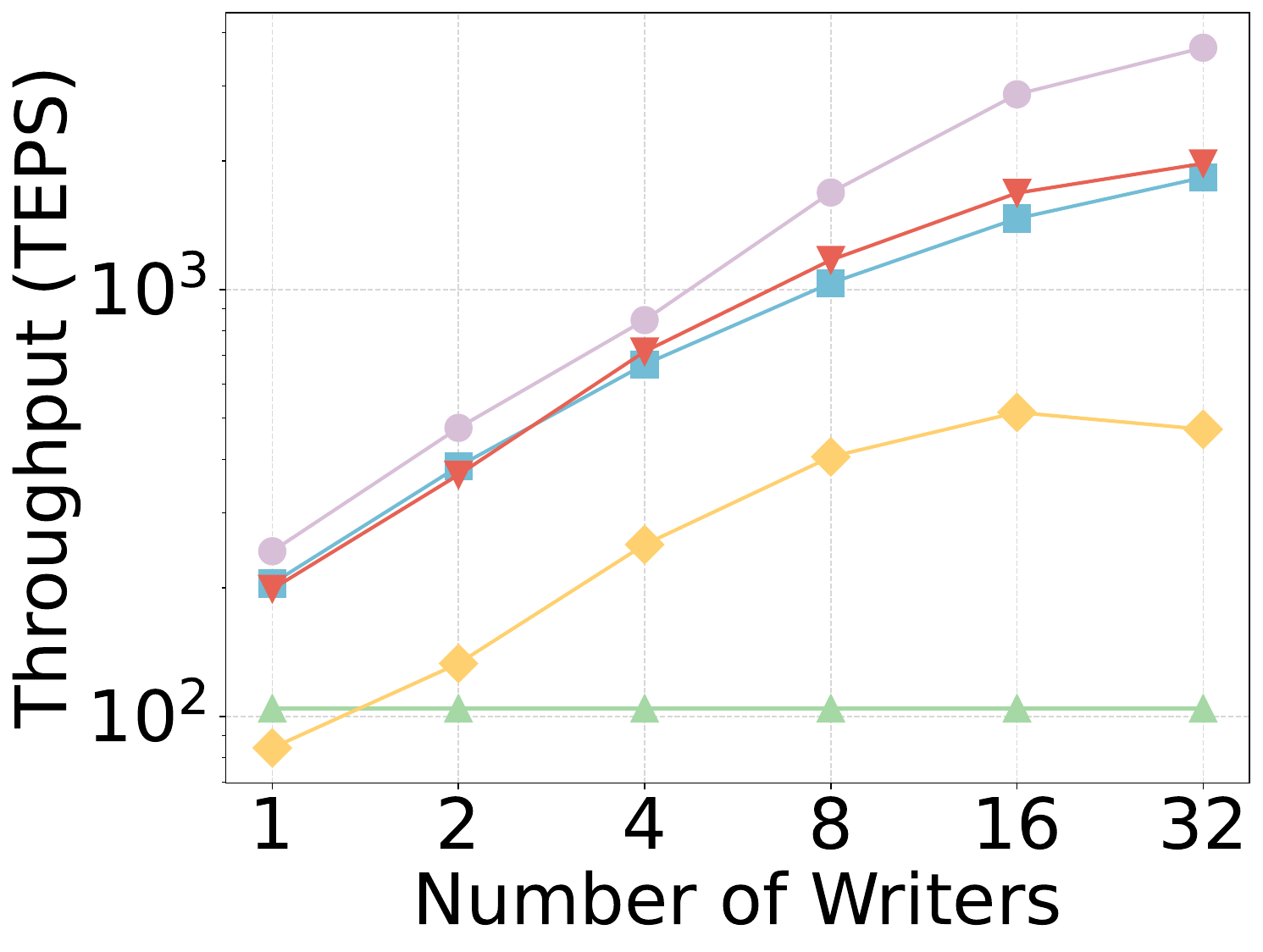}
		\caption{Insertion on \textit{g5}.}
		\label{fig:scalability_g5}
	\end{subfigure}
	\caption{Scalability with the number of writers varying.}
	\label{fig:scalability}
\end{figure}    

\subsection{Evaluation on Batch Update}

\begin{figure}[h]
	\setlength{\abovecaptionskip}{0pt}
	\setlength{\belowcaptionskip}{0pt}
		\captionsetup[subfigure]{aboveskip=0pt,belowskip=0pt}
	\centering
	\begin{subfigure}[t]{0.22\textwidth}
		\centering
		\includegraphics[width=\textwidth]{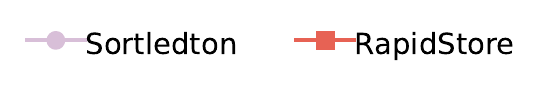}
	\end{subfigure}
	\begin{subfigure}[t]{0.45\textwidth}
		\centering
		\includegraphics[width=\textwidth]{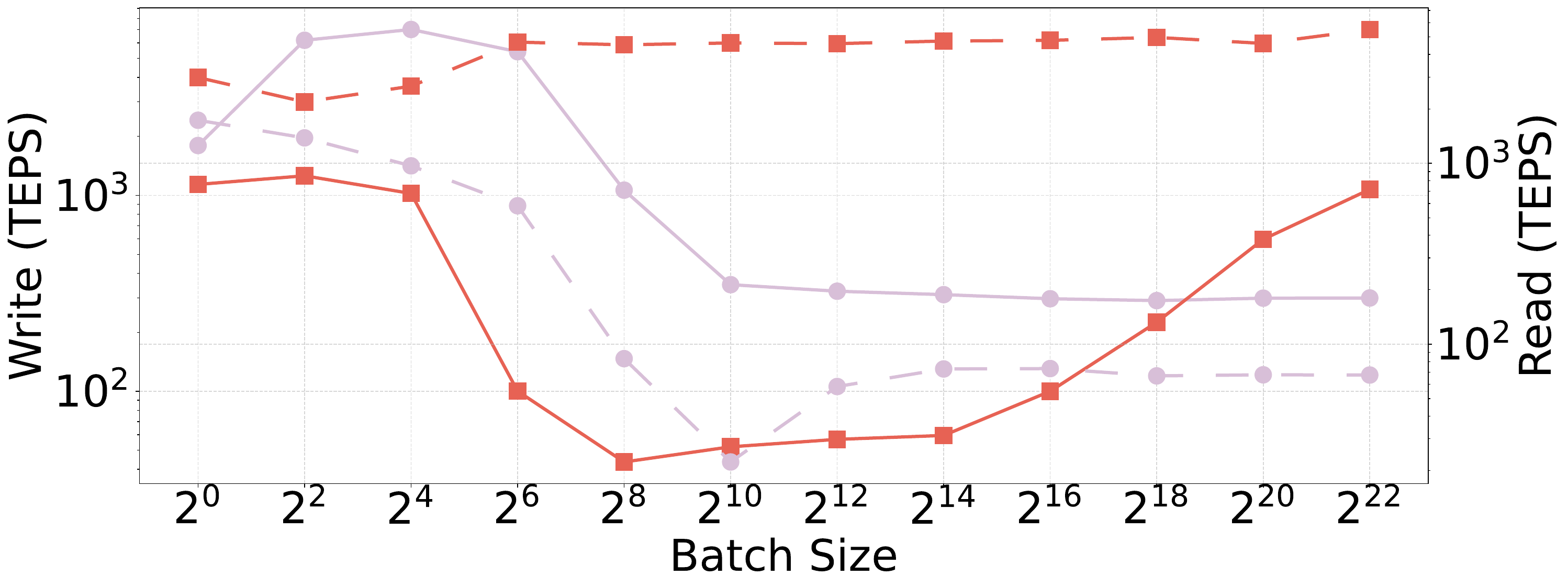}
        \caption{Performance on \emph{lj}}
	\end{subfigure}
	\begin{subfigure}[t]{0.45\textwidth}
		\centering
		\includegraphics[width=\textwidth]{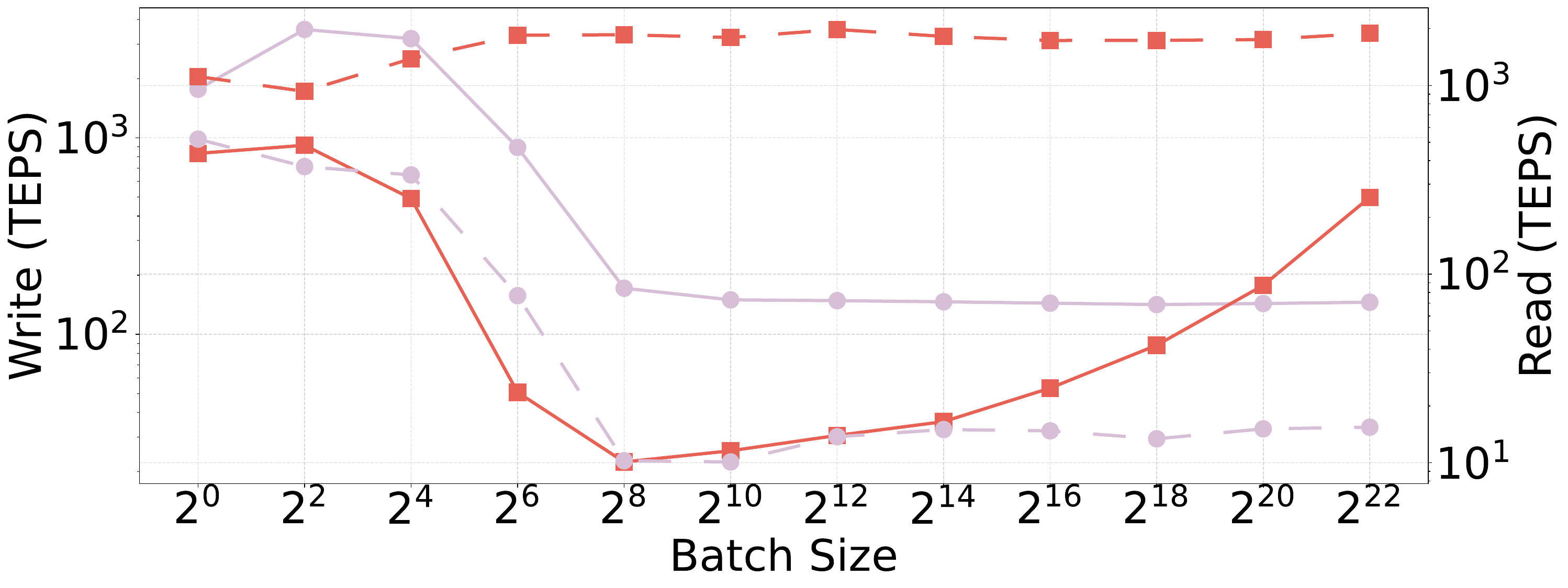}
        \caption{Performance on \emph{g5}}
	\end{subfigure}
	\caption{\new{Evaluation of large writes with small lookups under varying batch sizes. The solid line represents write throughput (Batch Update), and the dashed line represents read throughput (Search), both measured in thousand edges per second (TEPS).}}
	\label{fig:batch_update}
\end{figure}
\new{RapidStore focuses on common read-intensive workloads in the real world--- large analytics queries + small writes. To evaluate it in the opposite scenario, we conduct an experiment where 32 threads are launched: one thread continuously issues search operations, and the remaining 31 threads execute batch updates (i.e., each update consists of multiple update operations). LiveGraph and Teseo are excluded as their implementations do not support batch update isolation and encounter execution errors. Aspen is excluded due to its limited write performance.}

\new{Figure~\ref{fig:batch_update} presents the throughput results as the batch size varies. When the batch size is small (batch size = 1), RapidStore exhibits slightly lower write throughput compared to Sortledton, primarily due to its coarser-grained versioning at the subgraph level, as detailed in the initial submission. As the batch size increases ($2^2$ to $2^{10}$), Sortledton’s write throughput degrades due to rising lock contention. RapidStore’s write performance initially drops for the same reason, but then improves and eventually surpasses Sortledton. This is because larger batches allow RapidStore’s copy-on-write mechanism to share more work within a subgraph, outweighing the lock contention overhead. RapidStore consistently outperforms Sortledton for read operations across all batch sizes with at most 141.2x higher throughput.}

\new{In summary, RapidStore delivers superior read performance and good write performance, particularly for large batch updates. These results further validate RapidStore’s effectiveness across both read- and write-intensive scenarios.}

\subsection{Real LDBC Workload}

\new{In this section, we extracted the actual update/insert query sequence from the LDBC\_SNB\_Interactive workload and executed it on all evaluated systems to better evaluate their performance under a real workload. Figure~\ref{fig:ldbc_real} displays performance results comparing random insert workload against real update workload. }

\new{For per-edge versioning systems, Sortledton's throughput increases by 20.89\% with the real workload, indicating its vulnerability to lock contention during random insertions. Teseo shows no significant performance variation between workloads. LiveGraph suffers a dramatic 78.92\% throughput reduction in the real workload scenario, attributable to many update/insert queries targeting high-degree vertices where its insert operations slow considerably due to expensive neighbor set searches (as previously analyzed in Section 7.2.1).}

\new{Copy-on-write approaches exhibit different patterns: both Aspen and RapidStore experience throughput decreases (15.87\% and 14.14\%, respectively) with the real workload. Since Aspen cannot be affected by lock contention, its performance decline comes from changes in the copy process, which is the same factor affecting RapidStore. Most systems maintain relatively consistent performance across both workloads, with LiveGraph being the notable exception. }

\new{While insertion order changes moderately impact throughput across systems, their relative performance ranking remains stable, with RapidStore consistently demonstrating strong performance in both workloads.
}

\begin{figure}[t]
	\setlength{\abovecaptionskip}{0pt}
	\setlength{\belowcaptionskip}{0pt}
		\captionsetup[subfigure]{aboveskip=0pt,belowskip=0pt}
	\centering
	\begin{subfigure}[t]{0.45\textwidth}
		\centering
		\includegraphics[width=\textwidth]{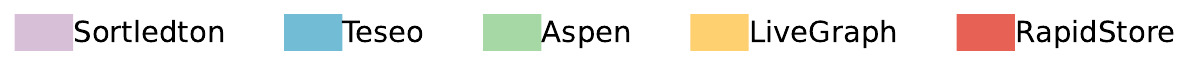}
	\end{subfigure}
	\begin{subfigure}[t]{0.45\textwidth}
		\centering
		\includegraphics[width=\textwidth]{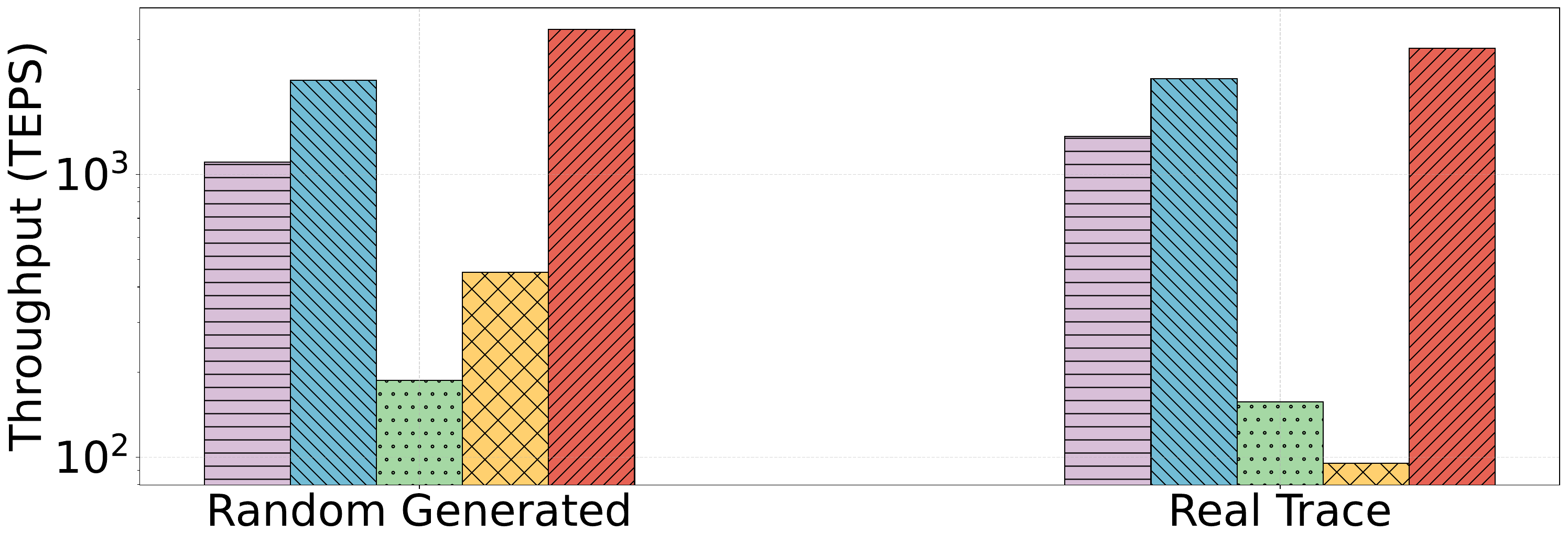}
	\end{subfigure}
	\caption{\new{Write throughput of the evaluated systems on the \emph{ldbc} dataset. The left plot reports results using a randomly generated trace, while the right plot uses the real \emph{ldbc} trace.}}
	\label{fig:ldbc_real}
\end{figure}
\subsection{Insertion over Growing Neighbor}

\begin{figure}[t]
	\setlength{\abovecaptionskip}{0pt}
	\setlength{\belowcaptionskip}{0pt}
		\captionsetup[subfigure]{aboveskip=0pt,belowskip=0pt}
	\centering
    \includegraphics[width=0.50\textwidth]{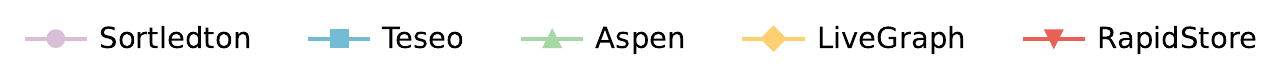}
    \begin{subfigure}[h]{0.45\textwidth}
		\centering
		\includegraphics[width=\textwidth]{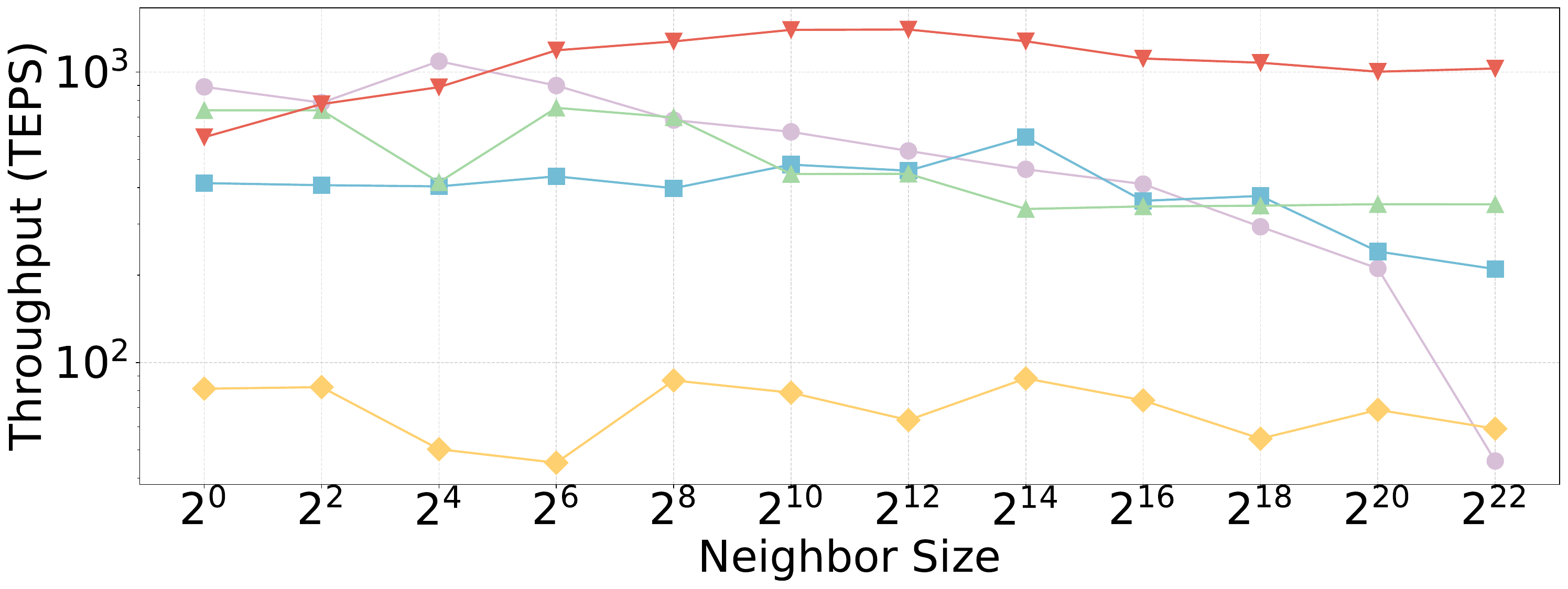}
	\end{subfigure}
   \caption{Performance of edge insertion with growing neighbor size.}
	\label{fig:growing_neighbor_write}
\end{figure}

This experiment evaluates the performance of different systems inserting neighbors of different sizes $|N|$, aiming to evaluate their versioning methods and data structures. For different $|N|$, we divide the $2^{24}$ edges into vertices of the same size $|N|$ and insert them after shuffling with 1 writer. 

Figure ~\ref{fig:growing_neighbor_write} shows the experiment results. Sortledton performs well for $|N|$ less than 16. However, Sortledton's search time increases as $|N|$ becomes larger, making the insertion speed decrease continuously, eventually by 94.85\% compared to when $|N|=2^0$. The same applies to Aspen, which has a 52.56\% slowdown. For Teseo, although its use of ART additional indexes avoids search slowdowns, the cost of its rebalance becomes higher as $|N|$ increases, so there is also a 49.42\% slowdown. Unlike other systems, RapidStore's ability to keep insertion speeds stable due to the constant-level search complexity provided by C-ART means that it can be applied to graphs of all shapes.

\clearpage

\end{document}